\newtheorem{theorem}{Theorem}
\newtheorem{lemma}[theorem]{Lemma}
\newtheorem{claim}[theorem]{Claim}
\newtheorem{corollary}[theorem]{Corollary}
\newtheorem{proposition}[theorem]{Proposition}
\theoremstyle{definition}
\newtheorem{definition}[theorem]{Definition}
\theoremstyle{remark}
\newtheorem{remark}[theorem]{Remark}
\providecommand*{\cupdot}{
  \mathbin{
    \mathpalette\@cupdot{}
  }
}
\newcommand*{\@cupdot}[2]{
  \ooalign{
    $\m@th#1\cup$\cr
    \hidewidth$\m@th#1\cdot$\hidewidth
  }
}
\newcolumntype{\expand}{}
\long\@namedef{NC@rewrite@\string\expand}{\expandafter\NC@find}
  \def\problem@arg{#1}
  \def\problem@framed{framed}
  \def\problem@lined{lined}
  \def\problem@doublelined{doublelined}
    \def\problem@hline{}
      \def\problem@hline{\hline\hline}
      \def\problem@hline{\hline}
    \def\problem@tablelayout{|>{\itshape}lX|c}
    \def\problem@title{\multicolumn{2}{|l|}{
        \raisebox{-\fboxsep}{\textsc{#2}}
      }}
    \def\problem@tablelayout{>{\itshape}lXc}
    \def\problem@title{\multicolumn{2}{l}{
        \raisebox{-\fboxsep}{\textsc{\large #2}}
      }}
\newenvironment{claimproof}[1][Proof of Claim]{
  \par\addvspace{6pt}
  \noindent{\itshape #1. }\pushQED{\qed}
}{
  
  \popQED\par\addvspace{6pt}
  
}
\newcommand{\tw}{\ensuremath{\mathsf{tw}}\xspace}
\newcommand{\ghw}{\ensuremath{\mathsf{ghw}}\xspace}
\newcommand{\fhw}{\ensuremath{\mathsf{fhw}}\xspace}
\newcommand{\fwidth}{\ensuremath{f\textsf{-width}}\xspace}
\newcommand{\rank}{\ensuremath{\mathit{rank}}\xspace}
\newcommand{\degree}{\ensuremath{\Delta}\xspace}
\newcommand{\bag}{\ensuremath{\mathsf{bag}}\xspace}
\newcommand{\mbag}{\ensuremath{\mathit{mn}}\xspace}
\newcommand{\target}{\ensuremath{\mathsf{TS}}\xspace}
\newcommand{\cg}{\ensuremath{\mathsf{CG}}\xspace}
\newcommand{\tf}{\ensuremath{\mathit{TF}}\xspace}
\newcommand{\mf}{\ensuremath{\mathit{MF}}\xspace}
\newcommand{\splitf}{\mathit{split}\xspace}
\newcommand{\mir}{\mathit{mir}\xspace}
\newtheorem{question}{Open Question}
\begin{document}

\title{FPT Parameterisations of Fractional and Generalised Hypertree Width}

\author{Matthias Lanzinger}
\orcid{0000-0002-7601-3727}
\authornote{Supported by the Vienna Science and Technology Fund (WWTF) [10.47379/ICT2201].}
\affiliation{
  \institution{TU Wien}
  \city{Vienna}
  \country{Austria}
}
\email{matthias.lanzinger@tuwien.ac.at}

\author{Igor Razgon}
\orcid{0000-0002-7060-5780}
\affiliation{
  \institution{University of Durham}
  \city{Durham}
  \country{United Kingdom}
}
\email{igor.razgon@durham.ac.uk}

\author{Daniel Unterberger}
\authornotemark[1]
\affiliation{
  \institution{TU Wien}
  \city{Vienna}
  \country{Austria}
}
\email{daniel.unterberger@tuwien.ac.at}

\renewcommand{\shortauthors}{Lanzinger, Razgon, and Unterberger}

\begin{abstract}
We present the first fixed-parameter tractable (FPT) algorithms for exact computation of  generalized hypertree width (\ghw) and fractional hypertree width (\fhw). 
Our algorithms are parameterized by the target width, the rank, and the maximum degree of the
input hypergraph. More generally, we show that testing $f$-width is in FPT for a broad class of width functions that we call \emph{manageable}. This class contains the edge cover number $\rho$ and its fractional relaxation $\rho^*$, and thus covers both generalized and fractional hypertree width.
We additionally extend our framework to also obtain an fpt algorithm for computing a discretized version of adaptive width. 
Our approach extends a recent algorithm for treewidth (Boja\'{n}cyk \& Pilipczuk, LMCS 2022) that utilises monadic second-order transductions. 

To extend this idea beyond treewidth we develop new combinatorial machinery around elimination forests in hypergraphs, culminating in a structural normal form for optimal witnesses that makes transduction-based optimisation applicable in the much more general context of manageable width functions.

This yields the first exact FPT algorithms for these measures under any nontrivial parameterisation and provides structural tools that may enable more direct optimisation algorithms.

\end{abstract}

\begin{CCSXML}
<ccs2012>
<concept>
<concept_id>10003752.10003809.10010052.10010053</concept_id>
<concept_desc>Theory of computation~Fixed parameter tractability</concept_desc>
<concept_significance>500</concept_significance>
</concept>
<concept>
<concept_id>10003752.10010070.10010111.10011711</concept_id>
<concept_desc>Theory of computation~Database query processing and optimization (theory)</concept_desc>
<concept_significance>500</concept_significance>
</concept>
</ccs2012>
\end{CCSXML}

\ccsdesc[500]{Theory of computation~Fixed parameter tractability}
\ccsdesc[500]{Theory of computation~Database query processing and optimization (theory)}

\keywords{hypertree decompositions, generalised hypertree width, fractional hypertree width, fixed parameter tractable}

\maketitle

\section{Introduction}
\label{sec:into}

A tree decomposition of a hypergraph $H$ is a tree $T$  with a function $\bag$ which assigns each vertex in $T$ a set of vertices of $H$ such that certain conditions hold. The \emph{\fwidth} of a tree decomposition $(T,\bag)$ is  defined as $\max_{u \in V(T)} f(\bag(u))$. This generalises various widely studied (hyper)graph parameters in the literature. For instance, when $f$ is the function $B \mapsto |B|-1$, the \fwidth is treewidth~\cite{ROBERTSON1986309}, 
one of the most prominent concepts in the study of graph algorithms. When $f$ is the edge cover number $\rho$ of the bag, or its fractional relaxation $\rho^*$, the \fwidth equals generalized hypertree width (\ghw)~\cite{10.1145/1265530.1265533} and fractional hypertree width (\fhw)~\cite{10.1145/2636918}, respectively. These parameters have been shown central to the complexity of a wide range of problems whose instances are naturally represented as hypergraphs, e.g., conjunctive query evaluation~\cite{DBLP:journals/jcss/GottlobLS02,DBLP:journals/siamcomp/AtseriasGM13,ngo2018worst}, graph and hypergraph counting problems~\cite{10.1145/3457374,DBLP:conf/stoc/0002LR23,bressan2025complexity}, problems in game theory~\cite{DBLP:journals/jair/GottlobGS05}, or combinatorial auctions~\cite{DBLP:journals/jacm/GottlobG13}.

Despite structural similarities to treewidth, deciding the \ghw and \fhw of a hypergraph is significantly harder. While deciding whether $\tw(H)\leq k$ for hypergraph $H$\footnote{The treewidth of a hypergraph is equivalent to the treewidth of its primal graph, hence classical results on graphs transfer directly. However that even hypergraph with $\ghw$ 1 can have arbitrarily high treewidth and this is therefore not helpful in deciding many hypergraph width measures.} is famously fixed-parameter tractable (FPT) when parameterised by the width $k$~\cite{DBLP:journals/siamcomp/Bodlaender96, DBLP:journals/jal/BodlaenderK96},  whereas the analogous problems for \ghw and \fhw are both paraNP-hard~\cite{10.1145/1265530.1265533,10.1145/3457374}. Furthermore Gottlob et al.~\cite{DBLP:conf/wg/GottlobGMSS05} gave an FPT reduction from the set cover problem to checking \ghw which also implies that, in general, the problem is hard to approximate assuming W$[1] \neq $ FPT~\cite{DBLP:journals/jacm/SLM19}.

Recently significant progress has been made on identifying computationally easier fragments of these problems. Gottlob et al.~\cite{10.1145/3457374,DBLP:conf/mfcs/GottlobLPR20} showed that deciding whether $\ghw(H)\leq k$ and $\fhw(H)\leq k$ is in the complexity class XP  when parameterised by the width $k$, i.e., permitting an $O(n^{f(k)})$ algorithm, under loose structural restrictions on $H$. Even more recently, Lanzinger and Razgon~\cite{DBLP:conf/stacs/LanzingerR24} gave the first FPT approximation algorithm for \ghw for hypergraphs with bounded edge-intersection cardinality and in parallel Korchemna et al.~\cite{DBLP:conf/focs/KorchemnaL0S024} showed that even polynomial time approximation of $\ghw$ and $\fhw$ is possible under similar structural restrictions.

Nonetheless, as of now, no fixed-parameter tractable algorithms under any non-trivial parametrisations are known for either problem. 

In this paper we present the first exact FPT algorithms for deciding $\ghw(H)\leq k$ and $\fhw(H)\leq k$. Since the natural parametrization by width $k$ alone is known to be paraNP-hard, obtaining FPT algorithms requires parameterising beyond width. Concretely, we additionally consider the rank (the maximum edge cardinality) and the maximum vertex degree (denoted by $\Delta(\cdot)$) of the input hypergraph as parameters. Formally we study the following class of problems.

\begin{problem}{$f$-width-check}
    Input & Hypergraph $H$ and $k>0$. \\
    Parameters & $\rank(H)$, $\degree(H)$ and $k$ \\
    Output & $\fwidth(H) \leq k$?
\end{problem}

In order to solve this question for \ghw and \fhw, we show that deciding \fwidth in general is FPT in this setting for a large class of functions $f$ that we will call \emph{manageable}.
Informally, a manageable $f$ is additive on non-adjacent disjoint sets and $f(U) \leq k$ implies bounds on the size of $U$, in addition to some minor technical conditions (the notion is introduced formally in~\Cref{sec:techoverview}).

In particular, both \emph{$\rho$ and $\rho^*$ are manageable width functions} and thus \ghw and \fhw are instances of \fwidth with manageable $f$. Our main result then is the following.

\begin{theorem}
\label{thm1}
    \textsc{$f$-width-check} is fixed parameter tractable for any manageable width function $f$.
\end{theorem}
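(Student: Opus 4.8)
The plan is to lift the automata‑ and transduction‑theoretic approach of Bojańczyk and Pilipczuk for treewidth to hypergraphs, exploiting manageability of $f$ to keep every relevant object bounded in the parameters. The first step is a reduction to a bounded‑treewidth instance. If some tree decomposition of $H$ has $f(B)\le k$ for every bag $B$, then manageability bounds $|B|$ by a computable value $\beta(k,\rank)$ (for $\rho$ and $\rho^\ast$ one gets $|B|\le k\cdot\rank$), so the primal graph of $H$ has treewidth at most $\beta(k,\rank)-1$. Hence I would first run an FPT treewidth algorithm (Bodlaender's) with threshold $\beta(k,\rank)-1$: if it fails we output \emph{no}, otherwise we obtain a width‑bounded tree decomposition of the primal graph. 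Since $\degree(H)$ is bounded, $H$ has $O(|V(H)|)$ hyperedges and every hyperedge, being a primal clique, sits inside one bag; absorbing hyperedge nodes into those bags turns this into a tree decomposition of the \emph{incidence} structure $\mathcal H$ of $H$ (vertices, hyperedges, incidence) whose width is bounded by a function of $k,\rank,\degree$. From here on $\mathcal H$ (together with this decomposition $\mathcal D$) is a fixed tree‑like relational structure, and the primal graph of $H$ as well as the trace $e\cap B$ of any hyperedge on any vertex set are first‑order interpretable in $\mathcal H$.

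The core claim is that ``$\fwidth(H)\le k$'' is an MSO‑transduction‑definable property of $(\mathcal H,\mathcal D)$. Concretely, I would construct (i) an MSO transduction $\Theta_k$ that on $(\mathcal H,\mathcal D)$ nondeterministically outputs a structure encoding a pair $(T',\bag')$ with $T'$ a tree and $\bag'$ assigning vertex sets of $H$ to its nodes, and (ii) an MSO sentence $\varphi_k$ over such structures asserting that $(T',\bag')$ is a tree decomposition of $H$ (coverage of primal edges is first‑order; connectivity of each $\{u:v\in\bag'(u)\}$ is MSO) and that $f(\bag'(u))\le k$ for every node $u$. Then $\fwidth(H)\le k$ iff $\Theta_k(\mathcal H,\mathcal D)$ contains a model of $\varphi_k$; by the standard backwards‑translation property of MSO transductions this is equivalent to a single MSO sentence $\widehat\varphi_k$ holding in $(\mathcal H,\mathcal D)$, which is decidable in FPT time by Courcelle's theorem since $(\mathcal H,\mathcal D)$ has bounded treewidth. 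Manageability is exactly what makes $\varphi_k$ expressible: a valid bag $B$ has $|B|\le\beta(k,\rank)$ and each of its vertices lies in $\le\degree(H)$ hyperedges, so $B$ meets only boundedly many hyperedges, each of size $\le\rank$; thus the ``local type'' of $B$ (the sub‑hypergraph of traces around $B$) ranges over a finite, computable set, $f(B)$ is determined by that type, and ``$f(B)\le k$'' becomes a finite, precomputed first‑order case distinction. Instantiating $f$ by $\rho$ and $\rho^\ast$, both manageable, yields the FPT algorithms for \ghw and \fhw.

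The step I expect to be the main obstacle is the construction and correctness of $\Theta_k$, and in particular the combinatorial lemma that whenever $\fwidth(H)\le k$ there is a witnessing tree decomposition that is simultaneously \emph{economical} (linearly many nodes, each bag of size $\le\beta(k,\rank)$ — obtained by deleting redundant bags and suitably splitting) and \emph{reachable from $(\mathcal H,\mathcal D)$ by a fixed transduction}, so that $T'$ and $\bag'$ fit inside boundedly many copies of $\mathrm{dom}(\mathcal H)$ plus a guessed monadic colouring. This is precisely the technical heart of the Bojańczyk–Pilipczuk treewidth argument, and hypergraphs make it harder: $f$‑optimal decompositions need not be ``nice'', the cleanup operations used to make a decomposition economical must be shown to respect the width \emph{function} $f$ (here additivity of $f$ on disjoint non‑adjacent sets does the work) rather than merely bag cardinality, and the output tree may be structurally unrelated to the host decomposition $\mathcal D$ we started from. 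Once this lemma and the encoding are established, the remaining ingredients — the incidence‑structure reduction, Courcelle's theorem, backwards translation, and the finite type enumeration underlying $\varphi_k$ — are comparatively routine.
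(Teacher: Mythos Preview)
Your high-level architecture is exactly the paper's: bound treewidth via the bounded-size property of manageability, then encode ``there is an optimal-width decomposition'' as an MSO transduction from a bounded-treewidth host structure, and evaluate it in FPT time. You also correctly isolate the crux as the combinatorial lemma guaranteeing that some optimal decomposition is \emph{reachable} from the host by a transduction of size bounded in the parameters.

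The gap is that you do not supply that lemma, and your sketch of how it would go (``cleanup operations\ldots\ here additivity of $f$ on disjoint non-adjacent sets does the work'') does not survive contact with the actual obstruction. The Bojańczyk--Pilipczuk dealternation argument for treewidth repeatedly compares and rearranges pieces of a bag, relying on the fact that $|A\cup B|=|A|+|B|$ whenever $A,B$ are disjoint. For a manageable $f$ you only get $f(A\cup B)=f(A)+f(B)$ when $A,B$ are in addition \emph{non-adjacent}, and the three natural pieces of a bag one wants to compare (the part in $X$, the ``native'' part, the ``foreign'' part under a red/blue split) are typically adjacent to one another. The paper's fix is nontrivial: it repartitions each bag into \emph{connected-component balls} (an $X$-ball, a native ball, a foreign ball) so that the parts are genuinely non-adjacent and weak additivity applies; it then shows that along a spine these $X$-balls can change only boundedly often, and it is \emph{here}, and only here, that the degree parameter $\Delta(H)$ enters --- it bounds the number of bounded-size connected subgraphs touching $X$, hence the number of distinct $X$-balls. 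Your proposal invokes $\Delta(H)$ only for the innocuous facts that $|E(H)|=O(|V(H)|)$ and that a bag meets boundedly many hyperedges; neither of these is where the parameter is actually needed, and without the $X$-ball argument you have no control over how many ``interval factors'' a context factor splits into, so no bound on the transduction size.

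In short: the framework is right and matches the paper, the identification of the bottleneck is right, but the bottleneck is the theorem. What is missing is precisely the Reduced Hypergraph Dealternation Lemma (and its proof via additivization into balls, the bound on $X$-intervals using $\Delta(H)$, and the neutral-swap argument), together with the translation of its conclusion (bounded $split$ and $mir$) into a bound on the chromatic number of the conflict graph that controls the transduction size. Until those are supplied, the proposal is a correct plan rather than a proof.
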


We approach this problem by building on the transduction-based treewidth algorithm of Bojańczyk and Pilipczuk~\cite{DBLP:journals/lmcs/BojanczykP22}.
At a high level, their result shows how to express the process of turning a potentially suboptimal tree decomposition into an optimal one by an MSO transduction,
based on deep combinatorial insights on the fine structure of so-called elimination forests (see~\Cref{sec:prelims}). In our setting, we follow the same blueprint: we construct an MSO transduction that, given an appropriate bounded-treewidth backbone, produces a candidate decomposition and an MSO test that the produced
decomposition has $f$-width at most $k$.

The main difficulty in lifting this blueprint from treewidth to $\ghw/\fhw$ is that the width function stops being ``compositional''.
For treewidth, $f(U)=|U|-1$ satisfies a strong separability property: for every partition $U=U_1\cupdot U_2$, the value $f(U)$ is determined by $f(U_1)$ and $f(U_2)$
(via a fixed monotone combination). The construction in~\cite{DBLP:journals/lmcs/BojanczykP22} crucially exploits this kind of compositionality.
In contrast, for $f\in\{\rho,\rho^*\}$, the value on $U$ cannot in general be recovered from the values on an arbitrary partition.
Our key insight is that a weaker form \emph{does} hold for these functions:
composition becomes possible once we restrict attention to the splits that are \emph{edge-disjoint} (no hyperedge meets both sides),
and a substantial part of the paper is devoted to developing the combinatorial machinery needed to make this restriction compatible with the transduction framework.

Our main technical contribution is the extension of the approach of~\cite{DBLP:journals/lmcs/BojanczykP22} to width functions that only permit such weaker compositionality for edge-disjoint partitions.
This requires large-scale changes from the treewidth result, combining multiple additional combinatorial insights with explicit abstraction of various implicit structural observations. 
A secondary motivation for isolating this structure is that it is not tied to MSO.
Our combinatorial analysis concludes in a hypergraph dealternation principle (Reduced Hypergraph Dealternation) that shows that optimal witnesses are in a certain technical sense structurally well-behaved.
As argued for treewidth in~\cite{DBLP:journals/lmcs/BojanczykP22}, such principles are exactly what underlies Bodlaender-Kloks-style direct FPT algorithms for treewidth.
We therefore view our results as providing not only an MSO/transduction-based FPT route, but also a concrete combinatorial foundation for more direct and practical FPT algorithms for $\ghw/\fhw$.

As a final result, our framework produces  an MSO test for $f$-width $\le k$. This lets us go beyond a single $f$ and handle certain
\emph{families} of width functions in one shot. As an illustration, we consider a discretized variant of \emph{adaptive width}~\cite{DBLP:journals/mst/Marx11},
defined as the $\mathcal{F}$-width where $\mathcal{F}_\delta$ ranges over independent-set functions whose vertex weights are multiples of $1/\delta$
(and every hyperedge has total weight at most $1$). We show that this discretized adaptive width can be computed exactly in FPT time.

\begin{theorem}\label{thmadapt1}
For each integer $\delta \geq rank(H)$, $\mathcal{F}_\delta$-width is fixed-parameter tractable parameterised by $k$, $\rank(H)$, $\Delta(H)$, and $\delta$.
\end{theorem}

Theorem \ref{thmadapt1} leaves an interesting open question as to whether
the discretized adaptive width can approximate the actual adaptive width.
We provide some initial insight related to this question. 
We consider a relaxed version $\mathcal{F}^*_{\delta}$ of $\mathcal{F}_{\delta}$
consisting of all  independent set functions where all the non-zero weights
are at least $\delta$ (or, to put it differently, there are no small non-zero weights
$0<x<\delta$). We show that Theorem \ref{thmadapt1},  combined with a simple rounding strategy,
implies factor $2$ FPT approximation of $\mathcal{F}^*_{\delta}$-width.

The rest of this paper is structured as follows. We formally define tree decompositions and key width measures, as well as further necessary technical preliminaries in \Cref{sec:prelims}. We present an extended technical overview of our main results and the central statements leading to it in~\Cref{sec:techoverview}. The lifting of our framework to sets of functions and the corresponding results for adaptive width are presented in \Cref{sec:adw}.  Closing remarks and discussion of a number of interesting new combinatorial open problems raised by our results are given  in~\Cref{sec:conclusion}. To improve presentation some further proof details follow in the appendix.

\section{Preliminaries}
\label{sec:prelims}

We say a family $(U_i)_{i \in I}$ of sets is a \emph{weak partition} of a set $V$ if $\cup_{i\in I}U_i = V$ and $U_i \cap U_j = \emptyset$ for $i \neq j$. If no $U_i$ is empty, we say it is a \emph{partition} of $V$.

\textbf{Graphs \& Hypergraphs \quad } 
A \emph{hypergraph} $H$ is a tuple $(V,E)$ where $V$ is a finite set and $E$ is a set of non-empty subsets $e \subseteq V$. If all $e \in E$ have exactly two elements, we refer to $H$ as a (simple) graph. We call $V$ the vertices of $G$ and $E$ the edges of $H$. For a hypergraph $H$, we denote the vertices of $H$ by $V(H)$ and the edges by $E(H)$. We write $I(v) :=\{e \in E(H) \mid v\in e\}$ for the set of edges incident to vertex $v$, and we say two vertices $u \neq v$ are connected if $I(u) \cap I(v) \neq \emptyset$.  For a vertex subset $U \subseteq V(H)$, we call the hypergraph $H'$ with $V(H') = U$ and $E(H') = \{ U \cap e : e \in E(H),~ U \cap e \neq \emptyset\}$ the \emph{subgraph} of $H$ \emph{induced} by $U$, denoted by $H[U]$. For graphs, we only keep edges of cardinality $2$ in the \emph{induced subgraph}, i.e. only edges contained in $U$. We assume throughout that hypergraphs have no isolated vertices, i.e., every vertex in a hypergraph is contained in some edge.

Let $G$ be a graph. A path from $v_1$ to $v_\ell$ in $G$ is a sequence $(v_1, v_2, \cdots, v_\ell)$ of distinct vertices such that $\{v_i,v_{i+1}\} \in E$ for $1 \leq i \leq \ell-1$. A cycle is a path except that $v_1 = v_\ell$. A graph is \emph{connected} if, for any $v_i ,v_j \in V$, there is a path from $v_i$ to $v_j$. A connected graph without cycles is a \emph{tree}, and a disjoint union of trees is called a \emph{forest}.

A \emph{rooted tree} is a tree $T$ together with a designated vertex $r \in V(T)$ called the root. For a vertex $v \in V(T)$, we refer to the vertices on the unique path from $r$ to $v$ as the \emph{ancestors} of $v$. If $v_1$ is an ancestor of $v_2$, we refer to $v_2$ as a \emph{descendant} of $v_1$. We will denote the subtree of $T$ induced by the  set of all descendants of a vertex $x$ in $T$ as $T_x$. A disjoint union of rooted trees is a \emph{rooted forest} $F$, and we refer to the roots of the trees as the roots of $F$. Note that this implies that every vertex is both ancestor and descendant to itself.

\smallskip
\noindent\textbf{Hypergraph Decompositiosn\quad }

A \emph{tree decomposition} of hypergraph $H$ is a tree $T$ together with a function $\bag : T \to 2^{V(H)}$ such that, for every edge $e \in E(H)$, there is a vertex $n \in V(T)$ such that $e \subseteq \bag(n)$

. Additionally, we require that the subgraph (of $T$) induced by $\{n \in V(T) : v \in \bag(n)\}$ is connected for each $v \in V(H)$. We will refer to the set $\{\bag(n) : n \in T\}$ as the bags of $T$. Throughout the paper we will consider tree decompositions to be rooted. Note that this is not a technical restriction as any arbitrary rooting is sufficient.

A \emph{width function} $f$ is a function that assigns a hypergraph $H$ together with a subset $U \subseteq V(H)$ of its vertices a non-negative real number $f(H,U)$. 
We will commonly write this as $f_H(U)$, and, if the underlying hypergraph is clear from context, simply as $f(U)$. 
For a given tree decomposition $(T,\bag)$ of $H$, the \fwidth of $(T,\bag)$ is the maximal width of any bag, i.e. $\max_{v \in T} f(H,\bag(v))$. The \fwidth of $H$ is the minimal \fwidth over all tree decompositions of $H$. We denote this as 
$\fwidth(T)$ and 
$\fwidth(H)$ respectively.
Now let $H$ be a hypergraph and $U \subseteq V(H)$. We say a function $\lambda : E(H) \to \{0,1\}$ is an edge cover of $U$ if $\sum_{\substack{e: v \in e}} \lambda(e) \geq 1$ for all $v \in U$. In other words, we choose a set of edges such that each vertex is included in at least one of them.
The weight of an edge cover is $\sum_{e \in E} \lambda(e)$, and the edge cover number $\rho(H,U)$ is the minimal weight of an edge cover of $U$. The generalized hypertreewidth (\ghw) of a hypergraph $H$ is the $\fwidth(H)$ where $f = \rho$.
The fractional cover number $\rho^*(H,U)$ is defined analogously, except that we can assign non-integer values to edges, i.e. $\lambda : E(H) \to [0,1]$. The fractional hypertreewidth (\fhw) of a hypergraph $H$ is the $\fwidth(H)$ where $f = \rho^*$.

\smallskip
\noindent
\textbf{Factors and Elimination Forests \quad}
Our technical development involves extensive combinatorial arguments about the structure of certain rooted forests. In particular we argue on the structure in terms of over so-called \emph{factors} as introduced in~\cite{DBLP:journals/lmcs/BojanczykP22}.

\begin{restatable}[Factors]{definition}{factors}\label{def:factors}
For rooted forest $T$ we define the following three kinds of factors:
\begin{description}[topsep=0pt,noitemsep]
\item[Tree Factor] A \emph{tree factor} is a set $V(T_x)$ for some $x \in V(T)$. We call $x$ \emph{the root} of the factor.
    \item[Forest Factor] 
    A forest factor is a {\bf nonempty} 
union of tree factors whose roots are siblings. In particular, 
a tree factor is also a forest factor but the empty set is not. 
The set of roots of the tree factors included in the union are 
called the \emph{roots} of the 

factor. 
If they are not roots of components of $T$, we define their 
joint parent as the \emph{parent} of the factor. Note that We consider two nodes of $T$  siblings when they
have a joint parent or if they both roots of components of $T$. This convention is chosen to remain consistent with the notation of~\cite{DBLP:journals/lmcs/BojanczykP22}.
\item[Context Factor] A context factor is a set $U=X \setminus Y$ where $X$ is a 
tree factor and $Y$ is a forest factor whose roots are strict
descendants of the root of $X$. The \emph{root} of $U$, denoted by $rt(U)$, is the root of $X$
and the roots of $Y$ are called the \emph{appendices} of $U$. 
Note that a tree factor is {\bf not} a context factor. 
We will sometimes refer to
$X$ as the \emph{base tree factor} of $U$ and to $Y$ as the \emph{removed forest} of $U$.  For clarity we sometimes denote $X$ and $Y$ as $X(U)$ and $Y(U)$.
We also denote the parent of the appendices by $pr(U)$ and let 
the \emph{spine} of $U$, denoted by $spine(U)$, be the path 
from $rt(U)$ to $pr(U)$
\end{description}
\end{restatable}

\smallskip
We say that two factors are  \emph{intersecting} if there have at least one vertex in common. Note also that a tree factor is also always forest factor.

Let $X$ be a set of nodes of a forest $T$, an \emph{$X$-factor} (wrt. $T$) is a factor $F$ with $F\subseteq X$.
A \emph{factorization} of $X$ is a partition of $X$ into $X$-factors. A $X$-factor is maximal if no other $X$-factor contains it as a strict subset. Let $\mf(X)$ denote the set of maximal $X$-factors.  Note that by Lemma~3.2~\cite{DBLP:journals/lmcs/BojanczykP22} we have that $\mf(X)$  is always a factorization of $X$.

We will be interested in particular in factors of \textit{elimination forests}. Intuitively, this is a tree-shaped arrangement of the vertices of a hypergraph that takes adjacency into account. We will see that this representation also naturally induces tree decompositions.

\begin{restatable}[Elimination Forest]{definition}{elimforest}
Let $G$ be a hypergraph. An \emph{elimination forest} $T$ for $G$ is a rooted tree over $V(G)$ such that $u,v \in e$ for some $e \in  E(G)$ implies that either $u$ is an ancestor of $v$ in $T$ or $v$ is an ancestor of $u$.

An elimination forest $T$ induces a function $\bag_T:$
 $V(T) \to 2^{V(G)} $ as follows: $\bag_T(u)$ contains exactly $u$ and all those ancestors $v$ that are adjacent with $u$ or a descendant of $u$.  Note that $(T, \bag_T)$ is a tree decomposition of $H$.
In what follows, when we refer to the \fwidth of an
 elimination forest, we mean the \fwidth  of the tree decomposition
 induced by the forest.
 \end{restatable}

\begin{figure}
\begin{center}
        \begin{tikzpicture}
            \node at (0,0) {$v_1$};
            \node at (1.5,0) {$v_3$};
            \node at (0,-1.5) {$v_4$};
            \node at (3,0) {$v_5$};
            \node at (-1.5,0) {$v_2$};

            \draw[rounded corners = 8mm, very thick] (-0.5,0.5) -- (-0.5,-2)  -- (2,-2) -- (2,0.5) -- cycle;
            \draw[rounded corners = 5mm, very thick] (1,0.5) -- (1,-0.5)  -- (3.5,-0.5) -- (3.5,0.5) -- cycle;
            \draw[rounded corners = 5mm, very thick] (-2,0.5) -- (-2,-0.5)  -- (0.5,-0.5) -- (0.5,0.5) -- cycle;

            \node at (5,0.5) (v1){$v_1$};
            \node at (4,-0.75) (v2){$v_2$};
            \node at (6,-0.75) (v3){$v_3$};
            \node at (5,-2) (v4){$v_4$};
            \node at (7,-2) (v5){$v_5$};

            \draw[very thick] (v1) -- (v2);
            \draw[very thick] (v1) -- (v3);
            \draw[very thick] (v3) -- (v4);
            \draw[very thick] (v3) -- (v5);

            \node at (9.5,0.5) (b1){$\{v_1\}$};
            \node at (8.5,-0.75) (b2){$\{v_1,v_2\}$};
            \node at (10.5,-0.75) (b3){$\{v_1,v_3\}$};
            \node at (9.5,-2) (b4){$\{v_1,v_3,v_4\}$};
            \node at (11.5,-2) (b5){$\{v_3,v_5\}$};

            \draw[very thick] (b1) -- (b2);
            \draw[very thick] (b1) -- (b3);
            \draw[very thick] (b3) -- (b4);
            \draw[very thick] (b3) -- (b5);
        \end{tikzpicture}
\end{center}
\caption{An elimination forest for the depicted hypergraph (left) and its induced tree decomposition (right).}
\end{figure}

\smallskip

Tree decompositions induced by elimination forests will play a central role in our technical developments. 
We will now fix some auxiliary definitions to simplify reasoning over tree decompositions. Note that while tree decompositions in this paper will generally be induced by an elimination forest, the definitions are independent of the forest and apply generally. To that end, for tree decomposition $\mathbf{t}=(T,B)$ we define the following.

For a non-root node $x$ of $T$ with parent $p$, the \emph{adhesion} is $adh(x) :=\bag(x)\cap \bag(p)$. 
The adhesion of the root node is the empty set.

The \emph{margin} of a node $x$ of $T$ are those vertices of $\bag(x)$ that are not in the adhesion of $x$, i.e., $mrg(x) :=\bag(x) \setminus adh(x)$.

Finally, the \emph{component} of node $x$ of $T$ is $cmp(x):=\bigcup_{y \in V(T_x)} mrg(y)$.

The \emph{target sets} of $\mathbf{t}$ are the components of nodes of $T$. That is $\target(\textbf{t}) = \{cmp(x) \mid x\in V(T)\}$.

When there is an underlying elimination forest that is not clear from the description, we state it in the subscript: say $adh_T(x)$.

\begin{figure}
    \usetikzlibrary {fit}
    \begin{subfigure}{0.45 \textwidth}
    \begin{tikzpicture}[scale = 1,
        customnode/.style={scale = 1,
         circle,
         fill=black,
         fill opacity = 1,
         text opacity=1,
         inner sep=0pt,
         minimum size=5pt,
         font=\small}]

        \node[customnode, fill = red] at (0,0) (v1) {};
        \node[customnode, fill = green] at (0.3,0) (v2) {};

        \node[customnode, fill = red] at (0,-1.03) (v3) {};
        \node[customnode, fill = green] at (0.3,-1.03) (v4) {};
        \node[customnode, fill = orange] at (0.15,-1.26) (v5) {};

        \node[customnode, fill = purple] at (1,-2.11) (v6) {};
        \node[customnode, fill = orange] at (1.3,-2.11) (v7) {};

        \node[customnode, fill = red] at (-1,-2.03) (v9) {};
        \node[customnode, fill = blue] at (-0.7,-2.03) (v10) {};
        \node[customnode, fill = orange] at (-0.85,-2.26) (v11) {};

        \draw[thick] (0.15,0) ellipse (9pt and 9pt);
        
        \draw[thick] (0.15,-1.11) ellipse (9pt and 9pt);

        \node at (0.7,-0.9) (x) {{\Large $x$}};

        \draw[thick] (1.15,-2.11) ellipse (9pt and 9pt);

        \draw[thick] (-0.85,-2.11) ellipse (9pt and 9pt);

        \draw[thick] (0.15,-0.3) -- (0.15, -0.8);
        \draw[thick] (0.33,-1.36) -- (1.1, -1.8);
        \draw[thick] (-0.03,-1.36) -- (-0.8, -1.8);

        \node at (3.5,-0.5) (){$adh(x):$};
        \node[customnode, fill = red] at (4.4,-0.5) {};
        \node[customnode, fill = green] at (4.65,-0.5) {};

        \node at (3.5,-1) (){$mrg(x):$};
        \node[customnode, fill = orange] at (4.4,-1) {};
        
        \node at (3.5,-1.5) (){$cmp(x):$};
        \node[customnode, fill = orange] at (4.4,-1.5) {};
        \node[customnode, fill = blue] at (4.65,-1.5) {};
        \node[customnode, fill = purple] at (4.9,-1.5) {};

    \end{tikzpicture}
    \caption{The adhesion, margin and component of the node $x$.}
    \end{subfigure}
    \hfill
    \begin{subfigure}{0.45 \textwidth}
    \begin{tikzpicture}[scale = 0.8,
        customnode/.style={scale = 1,
         circle,
         fill=black,
         fill opacity = 1,
         text opacity=1,
         inner sep=0pt,
         minimum size=5pt,
         font=\small}]

        \node[customnode] at (0,0) (v1) {};
        \node[customnode] at (-0.7,-1) (v2) {};
        \node[customnode, fill = white, draw] at (0,-1) (v3) {};
        \node[customnode] at (0.7,-1) (v4) {};
        \node[customnode] at (-0.7,-2) (v5) {};
        \node[customnode] at (0,-2) (v6) {};
        \node[customnode] at (0,-3) (v7) {};

        \draw (v1) -- (v2);
        \draw (v1) -- (v3);
        \draw (v1) -- (v4);
        \draw (v3) -- (v5);
        \draw (v3) -- (v6) -- (v7);

        \draw[draw= red, fill = red!60, thick,opacity = 0.2, rounded corners] (0,-0.7) -- (-1,-2) -- (-0.3,-3) -- (0,-3.3) -- (0.3,-3) -- (0.3,-1) -- cycle; 
        
         \node[customnode] at (2.8,0) (u1) {};
        \node[customnode, fill = white, draw] at (2.3,-1) (u2) {};
        \node[customnode, fill = white, draw] at (3.3,-1) (u3) {};
        \node[customnode] at (2,-2) (u4) {};
        \node[customnode] at (2.6,-2) (u5) {};
        \node[customnode] at (3.3,-2) (u6) {};
        \node[customnode] at (3.3,-3) (u7) {};

        \draw (u1) -- (u2);
        \draw (u1) -- (u3);
        \draw (u2) -- (u4);
        \draw (u2) -- (u5);
        \draw (u3) -- (u6);
        \draw (u6) -- (u7);

        \draw[draw= green, fill = green!60, thick,opacity = 0.2, rounded corners] (2.3,-0.7) -- (2,-1) -- (1.7,-2) -- (3.3, -3.3) -- (3.6,-3) -- (3.6, -1) -- (3.2,-0.7) -- cycle; 

        \node[customnode] at (5.6,0) (w1) {};
        \node[customnode] at (5.1,-1) (w2) {};
        \node[customnode, fill = white, draw] at (6.1,-1) (w3) {};
        \node[customnode] at (6.1,-2) (w4) {};
        \node[customnode, fill = orange!70, draw] at (5.4,-3) (w5) {};
        \node[customnode, fill = orange!70, draw] at (6.1,-3) (w6) {};
        \node[customnode] at (6.8,-3) (w7) {};

        \draw (w1) -- (w2);
        \draw (w1) -- (w3);
        \draw (w3) -- (w4);
        \draw (w4) -- (w5);
        \draw (w4) -- (w6);
        \draw (w4) -- (w7);

        \draw[draw= blue, fill = blue!60, thick,opacity = 0.2, rounded corners] (6.1,-0.7) -- (5.8,-1) -- (5.8,-2) -- (6.1,-2.3) -- (6.8,-3.3) -- (7.1,-3) -- (6.4,-1) -- cycle;
        
    \end{tikzpicture}
    \caption{A Tree, Forest and Context Factor. Roots marked in white, appendices in orange.}
    \end{subfigure}
    
    \caption{Illustrations of factors and key tree decomposition notions used in this paper.}
\end{figure}

\smallskip

\noindent\textbf{MSO Transductions \quad}
Ultimately our FPT algorithm will be stated in terms of MSO transductions.

A transduction is a mapping of a model over the input signature to a set of models over the output signature 
that is defined as a composition of elementary transductions of Colouring, Copying, MSO Interpretation, Filtering, and Universe restriction. 
(a detailed description of elementary transductions is provided Appendix \ref{sec:transd}).
In particular, let ${\bf g}_1, \dots {\bf g}_p$ be a sequence of elementary transductions
such that for each $2 \leq i \leq p$ the input signature of ${\bf g}_i$ is
the same as the output signature of ${\bf g}_{i-1}$. 

Then the \emph{transduction} $Tr[{\bf g}_1, \dots, {\bf g}_p]$ is a function
over the models $M$ of the input signature of ${\bf g}_1$ defined as follows. 
If $p=1$ then $Tr[{\bf g}_1](M)={\bf g}_1(M)$. 
Otherwise $Tr[{\bf g}_1, \dots, {\bf g}_p](M)=\bigcup_{M' \in {\bf g}_1(M)} Tr[{\bf g}_2, \dots, {\bf g}_p](M')$.
The size of an MSO transduction is the sum of the sizes of its elementary MSO transductions, where colouring has size 1, copying has as size the number $k$ of copies it produces, and for any other type of elementary transduction the size is the total size of MSO formulas (i.e., number of symbols) that make up its description. We write $\|\mathbf{Tr}\|$ for the size of transduction $\mathbf{Tr}$.

We note that in \cite{DBLP:journals/lmcs/BojanczykP22}, a transduction is defined as a binary relation with pairs $(M_L,M_R)$ where the 'left-hand' model $M_L$ 
is over the input signature whereas the 'right-hand' model $M_R$ is over the output signature.
Our definition is equivalent as a mapping of a left-hand model to a set of right model determines
a binary relation and vice versa. 
For a detailed description of MSO logic and formal languages we refer the reader to \cite{DBLP:books/daglib/0030804}.

We will state MSO formulas over two specific signatures, $\tau_{td}$ for formulas over tree decompositions, and $\tau_{ef}$ for formulas over elimination forests.
Formally, the signature $\tau_{td}$ consists of unary predicates $\mathit{Vertex}, \mathit{Edge}, \mathit{Node}$ and binary predicates $\mathit{Bag}, \mathit{Adjacenct}, \mathit{Descendant}$. 
The signature $\tau_{ef}$ consists of unary predicates $\mathit{Vertex}, \mathit{Edge}$ and binary relations $\mathit{Child}, \mathit{Adjacenct}$.

The intended meaning of these predicates is the following: for a model $M$ over $\tau_{td}$, $\mathit{Vertex}$ and $\mathit{Edge}$ are the vertices and (hyper)edges of a hypergraph $H(M)$, and $\mathit{Adjacent}$ describes, for a given tuple $(v,e)$, if vertex $v$ is contained in the edge $e$. $Nodes$ and $Descendant$ describe the vertices and descendancy of a tree $T$ and the predicate $Bag$ induces a bag function $bag$ 
(via $bag(t) = \{v \in \mathit{Vertex} \mid Bag(t,v)\}$) 
such that ${\bf t}=(T,bag)$ is a tree decomposition of $H(M)$. For a hypergraph $H$ and a tree decomposition ${\bf t}=(T,{\bf B})$,
we define the model $M=M(H,{\bf t})$ \emph{induced} by $H$ and ${\bf t}$ as the model $M$ with $H=H(M)$
and ${\bf t}$ being the tree decomposition described by $Node$,  $\mathit{Descendant}$ and $Bag$ relation of $M$. 

For a model $M$ over $\tau_{ef}$, $\mathit{Vertex}, Edge$ and $\mathit{Adjacent}$ describe a hypergraph $H(M)$ (as with $\tau_{td}$), while the predicates $\mathit{Vertex}$ and $\mathit{Child}$ model a directed graph $F(M)$ 
that is, in fact, will be proved to be an elimination forest of $H(M)$  if $M$ belongs to the output of the transduction being constructed.

\section{Technical Overview}
\label{sec:techoverview}

To show our main result, we require a substantial amount of rather technical combinatorial arguments. In this section we therefore focus on providing a technical overview, presenting the main combinatorial results, how they are derived, and how they interact to form the final FPT algorithm. 
We first define the notion of a manageable width function in \Cref{subsec:main} 
In \Cref{subsec:optelim} and \Cref{subsec:transd}, we develop the main technical ingredients  for the proof of the theorem and provide the proof in  \Cref{subsec:mainproof}. Full proof details are provided in Appendix sections A to D.

\subsection{The Main Theorem} \label{subsec:main}

We isolate the minimal properties of a width function that we need in order to extend the
transduction-based optimisation framework of Boja\'nczyk and Pilipczuk~\cite{DBLP:journals/lmcs/BojanczykP22}. We call such
width functions \emph{manageable}. This class includes the edge-cover number $\rho$ and its
fractional relaxation $\rho^\ast$, and thus subsumes generalized and fractional hypertree width.
At the same time, it is broad enough to plausibly capture further hypergraph width measures
studied in the literature (e.g., variants based on modular width functions~\cite{DBLP:journals/mst/Marx11}).

A key obstacle compared to treewidth is that the relevant width functions are not compositional
with respect to arbitrary partitions of a bag. We therefore distinguish a strong and a weak
notion of separability.

\begin{definition}[Monotone separability]\label{def:weaksep}
Let $f$ be a width function. We say that $f$ is \emph{monotonically separable} if there exists a
function $g_f:\mathbb{R}_{\ge 0}^2\to \mathbb{R}_{\ge 0}$ that is monotone in both arguments such that,
for every hypergraph $H$, every $U\subseteq V(H)$, and every partition $U=U_1\cupdot U_2$, we have
\[
  f_H(U) = g_f\bigl(f_H(U_1), f_H(U_2)\bigr).
\]
We say that $f$ is \emph{weakly monotone separable} if the same condition holds for every
\emph{edge-disjoint} partition $U=U_1\cupdot U_2$, i.e.\ no hyperedge of $H$ intersects both $U_1$ and $U_2$.
We refer to $g_f$ as the \emph{combiner} for $f$. Due to associativity, we will denote $g_F(g_F(n_1,n_2),n_3)$ as $g_F(n_1,n_2,n_3)$.
\end{definition}

For instance, treewidth corresponds to $f_H(U)=|U|-1$ and is monotone separable with combiner
$g_f(x,y)=x+y+1$.
In contrast, neither generalized nor fractional hypertree width is monotone separable:
for $f=\rho$ (and similarly $f=\rho^\ast$), the value on $U$ cannot be recovered from the values
on an arbitrary partition $U_1\cupdot U_2$. However, the monotone separability is restored under edge-disjoint
partitions: if $U_1$ and $U_2$ are edge-disjoint, then $\rho_H(U)=\rho_H(U_1)+\rho_H(U_2)$ (and likewise
for $\rho^\ast$), so these functions are weakly monotone separable.

The treewidth algorithm of~\cite{DBLP:journals/lmcs/BojanczykP22}
crucially relies on the underlying width function being monotone separable.
Our main technical step is to
replace it by weak separability, which requires substantial technical extension and additional structure to control how decompositions
interact with edge-disjoint splits. We capture the extra requirements in the following definition.

\begin{restatable}{definition}{managewidth}\label{def:manageablewidth}
We call a width function $f$ \emph{manageable} if it satisfies all of the following.

\begin{description}[topsep=0pt,noitemsep]
\item[Weak monotone separability.]
$f$ is weakly monotone separable, as in Definition~\ref{def:weaksep}.

\item[Bounded size.]
There exists a computable function $\beta$ such that for every $k\ge 0$, every hypergraph $H$,
and every $U\subseteq V(H)$, if $f_H(U)\le k$, then
\(
  |U|\le \beta\bigl(k,\rank(H)\bigr).
\)
We refer to any such $\beta$ as a \emph{size bound} for $f$.

\item[Invariant locality.]
For all hypergraphs $H_1,H_2$ and all $U_1\subseteq V(H_1)$, $U_2\subseteq V(H_2)$, if
$H_1[U_1]\cong H_2[U_2]$, then $f_{H_1}(U_1)=f_{H_2}(U_2)$.

\item[Automatizability.]
There exists an algorithm that, given a hypergraph $H$, computes $f_H(V(H))$ in time at most
$t(|V(H)|)$ for some computable function $t$.
\end{description}
\end{restatable}

Intuitively, invariant locality means that $f_H(U)$ depends only on the isomorphism type of the induced
subhypergraph $H[U]$.

Together with bounded size, this yields a \emph{bounded number of values} phenomenon: for fixed
$k$ and $r$, there are only finitely many isomorphism types of induced subhypergraphs $H[U]$
with $\rank(H)\le r$ and $f_H(U)\le k$, and hence only finitely many values $\le k$ that $f$ can
take on such bags (bounded by a function of $k$ and $r$).
Finally, automatizability is necessary only at the final stage, to effectively enumerate (up to isomorphism) all small hypergraphs of $\!f$-width at most $k$.

We are now in a position to state our main result regarding
exact computation of \fwidth in full formality.

\begin{restatable}[Extended version of \Cref{thm1}]{theorem}{fwidthcheck} \label{th:mainalgo}\label{thm:main}
\textsc{$f$-width-check} is fixed-parameter tractable for every manageable width function $f$.
In particular, 
there is an $g(k,\rank(H),\degree(H) \cdot |V(H)|)$, for some computable function $g$, time algorithm that takes as input a hypergraph $H$ and a parameter $k$ and returns a tree decomposition $H$ with width 
$\fwidth(H)$ if $\fwidth(H) \leq k$, or rejects. In the latter case it is guaranteed that $\fwidth(H)>k$.
\end{restatable}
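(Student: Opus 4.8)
The plan is to lift the treewidth algorithm of Bojańczyk and Pilipczuk~\cite{DBLP:journals/lmcs/BojanczykP22} to the manageable-width setting: first compute a coarse tree decomposition of bounded width, then rewrite it into an optimal one by an MSO transduction, replacing every graph-specific combinatorial ingredient by its counterpart for manageable $f$. First I would observe that the \emph{bounded size} condition already yields a cheap coarse decomposition: if $\fwidth(H)\le k$ and $\mathbf{t}$ witnesses this, then every bag $B$ of $\mathbf{t}$ has $|B|\le\beta(k,\rank(H))$, so $\tw(H)\le\beta(k,\rank(H))-1$. Hence we may run Bodlaender's linear-time algorithm~\cite{DBLP:journals/siamcomp/Bodlaender96} on the primal graph of $H$ with target width $w:=\beta(k,\rank(H))-1$; if it reports a larger treewidth we reject (correctly, since then $\fwidth(H)>k$), and otherwise we obtain a tree decomposition $\mathbf{t}_0$ of $H$ of width at most $w$, where $w$ depends only on $k$ and $\rank(H)$. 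Next I would invoke the fact — standard for $\rho$ and $\rho^{\ast}$, and which I would re-establish under the manageable conditions — that $\fwidth(H)$ is already attained by the tree decomposition $\bag_S$ induced by some elimination forest $S$ of $H$, so the remaining task is to search over elimination forests of $H$.

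The combinatorial heart of the argument, carried out in \Cref{sec:lemdealt}, is an extension of the Dealternation Lemma: there is a bound $d=d(k,\rank(H),\degree(H))$ such that whenever $\fwidth(H)\le k$ there is an elimination forest $S^{\ast}$ of $H$ with $\fwidth(\bag_{S^{\ast}})=\fwidth(H)$ that is \emph{tame} with respect to $\mathbf{t}_0$: every factor of $S^{\ast}$ is a union of at most $d$ factors of $\mathbf{t}_0$ (equivalently, the target sets of $\bag_{S^{\ast}}$ and of $\mathbf{t}_0$ interleave with bounded alternation). I expect this to be the main obstacle. In the graph case one can argue on the primal graph, but here $f$ is only \emph{weakly} additive and hyperedges may span large portions of the forest — already $\ghw$-$1$ hypergraphs have unbounded treewidth — so neither the alternation bound nor the preservation of optimality can be read off from the primal graph; both must be extracted directly, playing the bounded local incidence guaranteed by $\rank(H)$ and $\degree(H)$ against weak additivity of $f$.

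Given the Dealternation Lemma, I would construct for each candidate value $w'\in\{0,\dots,k\}$ an MSO transduction $\mathbf{Tr}_{w'}$ over $\tau_{td}$ of size bounded by a function of $k,\rank(H),\degree(H)$ (the content of \Cref{sec:mso}). On input $M(H,\mathbf{t}_0)$ it first nondeterministically colours a $d$-tame elimination forest $S$ in $O(d)$ guessing stages and verifies, using the $\mathit{Adjacent}$ relation, that $S$ is a genuine elimination forest of $H$; since at most $\rank(H)\cdot\degree(H)$ hyperedges are incident to any bag, this verification stays within formulas of bounded size, and the $d$ stages are what make the transduction size depend on $\degree(H)$. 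It then outputs the model of the induced decomposition $\bag_S$ together with a check that every bag $B$ satisfies $f_H(B)\le w'$. This check is finitary: $f_H(B)\le w'$ forces $|B|\le\beta(w',\rank(H))$, so by \emph{invariant locality} $f_H(B)$ depends only on the isomorphism type of the bounded-size hypergraph $H[B]$, and by \emph{automatizability} the finitely many admissible types can be precomputed and hard-wired into the formula, with oversized bags rejected outright.

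It remains to evaluate and check correctness. Since $\mathbf{t}_0$ has width $O(w)$, a suitable relational encoding of $M(H,\mathbf{t}_0)$ (attaching each hyperedge to a node whose bag contains it, and presenting the tree by parent/child edges) has treewidth bounded by a function of $k$ and $\rank(H)$, so by the effective version of Courcelle's theorem for transductions~\cite{DBLP:books/daglib/0030804} nonemptiness of $\mathbf{Tr}_{w'}(M(H,\mathbf{t}_0))$ can be tested and a witness extracted in time $g(\|\mathbf{Tr}_{w'}\|,w)\cdot|V(H)|$. By construction every element of any $\mathbf{Tr}_{w'}(M(H,\mathbf{t}_0))$ is a tree decomposition of $H$ of width at most $w'$, while by the Dealternation Lemma $\mathbf{Tr}_{w'}$ has a nonempty output as soon as $\fwidth(H)\le\min(k,w')$; hence iterating $w'=0,1,\dots,k$ and returning (the decomposition in) the first nonempty output — or rejecting if all are empty — gives a decomposition of width exactly $\fwidth(H)$ when $\fwidth(H)\le k$, and a correct rejection otherwise. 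As $\|\mathbf{Tr}_{w'}\|$, $w$, and $d$ all depend only on $k,\rank(H),\degree(H)$, the total running time is $f(k,\rank(H),\degree(H))\cdot|V(H)|$, as claimed.
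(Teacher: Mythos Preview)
Your overall architecture matches the paper's: bound the treewidth via the bounded-size property, compute a coarse decomposition $\mathbf{t}_0$, invoke the hypergraph Dealternation Lemma to guarantee a well-structured optimal elimination forest relative to $\mathbf{t}_0$, build a bounded-size MSO transduction that guesses such a forest and filters by an MSO check of $f_H(B)\le w'$ (using invariant locality and automatizability to hard-wire the finitely many admissible bag types), and evaluate via the transduction theorem of~\cite{DBLP:journals/lmcs/BojanczykP22}. So the strategy is right.

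There is, however, a genuine gap in your last step. You iterate over \emph{integer} thresholds $w'\in\{0,1,\dots,k\}$ and claim that the decomposition returned at the first nonempty $w'$ has width exactly $\fwidth(H)$. That is false for a general manageable $f$: already for $f=\rho^*$ the optimum $\fwidth(H)$ may be, say, $3/2$. Then $\mathbf{Tr}_1$ is empty and $\mathbf{Tr}_2$ is nonempty, but $\mathbf{Tr}_2$ outputs \emph{all} tame elimination forests of width $\le 2$; the evaluation algorithm is only guaranteed to hand you \emph{some} element of that set, which could well have width $7/4$ or $2$ rather than $3/2$. Emptiness of $\mathbf{Tr}_{w'-1}$ only tells you $\fwidth(H)>w'-1$, not that every width-$\le w'$ tame forest is optimal. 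So the output need not be optimal, contradicting the theorem's requirement to return a decomposition of width exactly $\fwidth(H)$.

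The paper closes this gap with an additional observation (its Lemma~\ref{lem:iso4}\ref{lem.item:vals}): by bounded size plus invariant locality, the set $\mathit{Val}_{f,k,r}=\{f_{H'}(V(H'))\mid \rank(H')\le r,\ f_{H'}(V(H'))\le k\}$ of all values $f$ can take on bags of admissible size is finite and \emph{computable} from $k$ and $r$. One then iterates $a$ over $\mathit{Val}_{f,k,r}$ in increasing order rather than over integers. At the first $a$ with $\mathbf{Tr}_a$ nonempty one has both $\fwidth(H)\le a$ and $\fwidth(H)>a'$ for every smaller $a'\in\mathit{Val}_{f,k,r}$; since $\fwidth(H)$ itself lies in $\mathit{Val}_{f,k,r}$, this forces $\fwidth(H)=a$, and hence every forest in the (nonempty) output of $\mathbf{Tr}_a$ is optimal. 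Your argument goes through once you replace the integer sweep by this finite value set.
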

\smallskip

Recall, $\ghw$ is precisely \fwidth when $f$ is the edge cover number $\rho$, and $\fhw$ is $\fwidth$ when $f$ is the fractional edge cover number $\rho^*$. 
It is trivial to verify that  $\rho$ is indeed manageable.
For $\rho^*$, the bounded size follows from
\begin{equation}
    |U| \leq \sum_{v \in U} \underbrace{\sum_{e: v \in e} \lambda(e)}_{\geq 1} = \sum_{e \in E} \sum_{v \in e} \lambda(e) \leq \sum_{e \in E} \rank(H) \lambda(e) \leq k \cdot \rank(H).
\end{equation}
The other conditions of manageability are immediate. 

The long-standing open question of fixed-parameter tractable algorithms for \ghw and \fhw then directly follow from \Cref{thm:main}.

\begin{corollary}
    For input hypergraph $H$ and $k>0$, it is fixed-parameter tractable to check $\ghw(H)\leq k$ and $\fhw(H)\leq k$ parameterised by $k, \rank(H), \degree(H)$.
\end{corollary}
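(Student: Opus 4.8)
The plan is to obtain the corollary as a direct instantiation of \Cref{thm:main}: since $\ghw(H) = \fwidth(H)$ for $f = \rho$ and $\fhw(H) = \fwidth(H)$ for $f = \rho^*$ (as recorded in the excerpt), it suffices to check that both $\rho$ and $\rho^*$ are manageable width functions in the sense of \Cref{managblewidth}, after which the promised FPT algorithm is literally the one furnished by \Cref{thm:main} for the parameters $k$, $\rank(H)$, $\degree(H)$.

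The main work is the four-point check, which I would carry out as follows. For \emph{weak additivity}, take disjoint non-adjacent $U_1, U_2 \subseteq V(H)$; because no edge of $H$ meets both sets, the edges incident to $U_1$ and those incident to $U_2$ form disjoint families, and any (fractional) edge cover may be assumed to put weight only on edges that actually intersect the set being covered. Hence an optimal cover of $U_1 \cup U_2$ restricts to a cover of $U_1$ plus a cover of $U_2$, giving $\rho_H(U_1 \cup U_2) \ge \rho_H(U_1) + \rho_H(U_2)$, and conversely the disjointly-supported sum of optimal covers of $U_1$ and of $U_2$ is a cover of $U_1 \cup U_2$ whose support being disjoint keeps the assignment $\{0,1\}$-valued for $\rho$ (respectively $[0,1]$-valued for $\rho^*$); so equality holds in both cases. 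For \emph{bounded size}, $\rho$ admits the trivial bound $|U| \le k\cdot\rank(H)$ and $\rho^*$ admits the bound $|U| \le k\cdot\rank(H)$ shown in the displayed inequality of the excerpt, so $\beta(k,r) = k\cdot r$ works for both. For \emph{invariant locality}, discarding edges of $H$ that miss $U$ does not change the cover number, and the remaining edges restricted to $U$ are exactly the edges of $H[U]$, so $\rho_H(U) = \rho_{H[U]}(U)$ and likewise for $\rho^*$; isomorphic induced subhypergraphs therefore receive equal width. For \emph{automatizability}, since the edges of $H$ are distinct subsets of $V(H)$ we have $|E(H)| \le 2^{|V(H)|}$, so computing a minimum integral edge cover of $V(H)$ by exhaustive search, or a minimum fractional edge cover by linear programming, both run within a bound depending only on $|V(H)|$.

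I do not expect a genuine obstacle here, as all the difficulty resides in \Cref{thm:main} itself; the one point worth flagging is \emph{weak additivity}, where it is essential that additivity is demanded only for \emph{non-adjacent} disjoint sets — $\rho$ and $\rho^*$ are only sub-additive on adjacent disjoint sets, which is precisely why the additivity axiom had to be weakened. With the four conditions verified, applying \Cref{thm:main} with $f = \rho$ and with $f = \rho^*$ yields the two FPT decision procedures, completing the proof.
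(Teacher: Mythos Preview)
Your proposal is correct and follows exactly the paper's approach: verify that $\rho$ and $\rho^*$ are manageable (the paper calls this ``trivial'' for $\rho$ and supplies only the bounded-size inequality for $\rho^*$, declaring the rest ``immediate''), then invoke \Cref{thm:main}. You have simply spelled out the four manageability conditions in more detail than the paper does, which is entirely appropriate.
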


In the rest of this section, we will provide an overview of the proof of \Cref{thm:main}. Subsequent sections will then expand on the technical details of the statements made here.

\subsection{The Structure of Optimal Elimination Forests} \label{subsec:optelim}

In this subsection, we develop the main combinatorial engine behind
Theorem~\ref{thm:main}, which we call the \emph{Reduced Hypergraph Dealternation Lemma}. The key to our algorithm is that we can check manageable \fwidth on elimination forests whose structure is simple. However, this simplicity is relative in the following sense. Let ${\bf t}=(T,{\bf B})$ be a tree decomposition
of $H$ whose $f$-width is not much larger than $\fwidth(H)$.
Then there exists an \emph{optimal} elimination forest $F$ for $H$ (i.e., $\fwidth(F)=\fwidth(H)$)
whose induced decomposition is  \emph{aligned} with~${\bf t}$.

This alignment in particular refers to how factorisations of the elimination forest match up with ${\bf t}$.
We will therefore be interested in measures that relate the factorisation of elimination forests to specific tree decompositions. In the following, we refer to a pair of a tree decomposition and an elimination forest for the same hypergraph $H$ as a \emph{\tf-pair}.

We introduce two new measures to quantify the relevant structural alignment of \tf-pairs. The first factor in such efficient computability is that we need each target set of ${\bf t}$ to have a small factorization in $F$. Intuitively, this restricts the possible ways to construct $F$ from ${\bf t}$ 
(recall, target sets $\target$ and maximal factors $\mf$ are defined in \Cref{sec:prelims})
.

\begin{definition} \label{def:twinforest}
For \tf-pair ${\bf t} ,F$, we define the \emph{split} of the pair as the size of the largest maximal factorization of target sets of $\bf t$, i.e., $\splitf({\bf t},F) := \max_{U \in \target({\bf t})} |\mf(U)|$.
\end{definition}

The second aspect to this structural alignment is that the individual factors themselves follow a kind of top-down monotonicity. Formally, this means that we want to avoid the presence of many context-factors in our representation.
For the definition, it will be helpful to extend our notation. We call $U\subseteq V(H)$ \emph{well-formed} w.r.t.\ $F$ if $\mf(U)$ contains no context factor
(equivalently, it consists only of forest factors).

\begin{definition}
For \tf-pair ${\bf t},F$ we define the \emph{irregularity} of a node $x$ of ${\bf t}$ as the number of children $y$ of $x$, such that $cmp(y)$ is not well-formed w.r.t. $F$. We denote the maximum irregularity of a node of $T$ by $\mir({\bf t},F)$. 
\end{definition}

These two measures of structural complexity, $\splitf$ and $\mir$,  are key to our framework. 
In key contrast to \cite{DBLP:journals/lmcs/BojanczykP22}, we isolate split and irregularity as the only interface parameters necessary to build a modular, general theoretical framework for width checking using MSO transductions. It allows us to decouple various statements for treewidth into independent parts, connected only through split and irregularity. This abstraction and generalisation ultimately allows us to establish core properties also for width functions that are not monotone separable.

\begin{lemma}[Reduced Hypergraph Dealternation]\label{lem:dealt}
Assume that $f$ is manageable. There exists a monotone function $\gamma$ such that for every
hypergraph $H$ and every tree decomposition ${\bf t}$ of $H$, there is an elimination forest $F$ of $H$
with $\fwidth(F)=\fwidth(H)$ and
\[
  \splitf({\bf t},F),\ \mir({\bf t},F)\ \le\
  \gamma\bigl(\tw({\bf t}),\, \fwidth(H),\, \rank(H),\, \Delta(H)\bigr).
\]
\end{lemma}
\begin{proof}[Proof Sketch]
We analyse partitions $(Red,X,Blue)$ of $V(H)$, where $X$ separates $Red$ from $Blue$ (i.e., no hyperedge has both a red and a blue vertex). In the application, $X$ is chosen as a subset of a bag $B_x$ of
${\bf t}$, and hence $|X|\le \tw({\bf t)}+1$. We further consider only \emph{reduced} elimination forests: an elimination
forest $F$ is reduced if every parent--child edge $(u,v)$ of $F$ is witnessed by some hyperedge, i.e.,
there exists $e\in E(H)$ with $u\in e$ and $e\cap V(F_v)\neq\emptyset$, where $F_v$ is the subtree rooted
at $v$. We show that one can transform every elimination forest into a reduced one of the same width, hence this restriction can be made without loss of generality.

Fix a reduced optimal elimination forest $F$ and a coloured context factor $U\subseteq Red\cup Blue$.
Along the spine of $U$, consider the maximal monochromatic subpaths (intervals). The number of such
intervals is a coarse measure of how often the spine alternates between the two sides of the cut,
and thus of how ``entangled'' this factor is with the separator $X$: many intervals indicate a complicated
interaction that we will later rule out in a well-aligned optimal forest. A \emph{swap} exchanges two
consecutive intervals along the spine. In the alternating pattern of four consecutive intervals,
swapping the middle two makes adjacent intervals merge, reducing the total number of intervals by~2
while preserving connectivity and monochromaticity (cf.\ \Cref{fig:swap_main}).

\begin{figure}[t]
\centering
\begin{minipage}{0.49\linewidth}
\centering
\medskip
\resizebox{.9\linewidth}{!}{
\begin{tikzpicture}[
  mycircle/.style={circle,draw=black,fill=black,inner sep=0pt,minimum size=5pt,font=\small}
]
  \node[mycircle] at (0.3,0) (v1) {};
  \node[mycircle] at (1.7,0) (v2) {};
  \node[mycircle] at (2.3,0) (v3) {};
  \node[mycircle] at (3.7,0) (v4) {};
  \node[mycircle] at (4.3,0) (v5) {};
  \node[mycircle] at (5.7,0) (v6) {};
  \node[mycircle] at (6.3,0) (v7) {};
  \node[mycircle] at (7.7,0) (v8) {};

  \draw[red, very thick] (v1) -- (v2);
  \draw[very thick] (v2) -- (v3);
  \draw[blue, very thick] (v3) -- (v4);
  \draw[very thick] (v4) -- (v5);
  \draw[red, very thick] (v5) -- (v6);
  \draw[very thick] (v6) -- (v7);
  \draw[blue, very thick] (v7) -- (v8);

  \node at (1,-0.5) {$P_{i-1}$};
  \node at (3,-0.5) {$P_{i}$};
  \node at (5,-0.5) {$P_{i+1}$};
  \node at (7,-0.5) {$P_{i+2}$};
\end{tikzpicture}
}
\end{minipage}\hfill
\begin{minipage}{0.49\linewidth}
\centering
\resizebox{.9\linewidth}{!}{
\begin{tikzpicture}[
  mycircle/.style={circle,draw=black,fill=black,inner sep=0pt,minimum size=5pt,font=\small}
]
  \node[mycircle] at (0.3,0) (v1) {};
  \node[mycircle] at (1.7,0) (v2) {};
  \node[mycircle] at (2.3,0) (v3) {};
  \node[mycircle] at (3.7,0) (v4) {};
  \node[mycircle] at (4.3,0) (v5) {};
  \node[mycircle] at (5.7,0) (v6) {};
  \node[mycircle] at (6.3,0) (v7) {};
  \node[mycircle] at (7.7,0) (v8) {};

  \draw[red, very thick] (v1) -- (v2);
  \draw[very thick] (v2) edge[bend right = 20] (v5);
  \draw[blue, very thick] (v3) -- (v4);
  \draw[very thick] (v6) edge[bend right = 20] (v3);
  \draw[red, very thick] (v5) -- (v6);
  \draw[very thick] (v4) edge[bend right = 20] (v7);
  \draw[blue, very thick] (v7) -- (v8);

  \node at (1,-0.5) {$P_{i-1}$};
  \node at (3,-0.5) {$P_{i}$};
  \node at (5,-0.5) {$P_{i+1}$};
  \node at (7,-0.5) {$P_{i+2}$};
\end{tikzpicture}
}
\end{minipage}

\caption{A swap operation exchanging two consecutive monochromatic intervals.}
\label{fig:swap_main}
\end{figure}

The key claim is that sufficiently long alternation forces the existence of a \emph{neutral} swap, i.e.,
a swap that does not increase $\fwidth$. To prove this, we classify intervals by their \emph{interface}
with the separator $X$. An \emph{$X$-ball} expresses how the bags induced along an interval intersect~$X$.
Manageability implies that, for fixed $|X|$, $\fwidth(H)$, $\rank(H)$, and $\Delta(H)$, only boundedly many
different $X$-balls can occur (\Cref{sec:xintervalbounded}). In particular, this follows from the \emph{bounded size} property limiting the size of the interface, and \emph{invariant locality} limiting the number of isomorphism types. Hence, if a spine contains many intervals,
then many of them share the same $X$-ball. On such a homogeneous region, the effect of swapping depends
only on bounded local information; a pigeonhole argument yields two consecutive intervals whose swap is
neutral (\Cref{sec:xhomogenousneutral}).

Finally, we choose among all reduced optimal forests one that minimizes an appropriate potential: first
the total number of monochromatic intervals, and then the induced factorisation complexity. A neutral
swap  strictly decreases this potential, so a minimum contains no long alternation across any such
separator. This bounded alternation translates into bounded maximal factorisations of target sets and
bounded irregularity, giving the desired bounds on $\splitf({\bf t},F)$ and $\mir({\bf t},F)$.

   The full proof of the lemma is the content of \Cref{sec:lemdealt}.
\end{proof}

The combinatorial insight from \Cref{lem:dealt} may have implications beyond the algorithm presented here.
It is discussed in \cite{DBLP:journals/lmcs/BojanczykP22} that the Dealteration Lemma for treewidth in a sense reflects the combinatorial core that underlies the classic FPT algorithm for checking treewidth by Bodlaender and Kloks~\cite{bodlaender1996efficient}. This is of particular note as no analogous algorithm is known for hypergraph width measures. For our main theorem we proceed similar to~\cite{DBLP:journals/lmcs/BojanczykP22}, by abstractly using the combinatorial result to express our problem in terms of bounded MSO transductions. However, \Cref{lem:dealt} suggests that building on our introduction of $\splitf$ and $\mir$, and their bounded character, may form the basis of future work into a more immediate (Bodlaender-Kloks style) FPT algorithm for deciding \fwidth.

\subsection{Efficient Construction of Transductions} \label{subsec:transd}

The second major technical challenge in proving \Cref{thm:main} is the construction of a small transduction that relates tree decompositions to the optimal elimination forests of \Cref{lem:dealt}.

For a tree decomposition $(T,\bag)$ of a hypergraph $H$, $\mbag$ is a function from $V(H)$ to $V(T)$
where $u$ is mapped to the (unique) node $x$ with $u \in mrg(x)$, we say $x$ is the \emph{the margin node of $u$}.
We will primarily be interested in the paths in the tree decomposition induced by two margin nodes. For nodes $u,v$ of $T$, let $\mathit{Path}_\mathit{\mbag}(u,v)$ be the set of vertices on the shortest path from $\mbag(u)$ to $\mbag(v)$ in $T$.

\begin{definition} \label{def:stain}
Let ${\bf t},F$ be a \textsc{tf}-pair of $H$ and let $u \in V(H)$. The \emph{stain} of $u$ (w.r.t. ${\bf t},F$),
denoted by $\mathit{Stain}(u,({\bf t},F))$, is the set consisting 
of $\mbag_{\bf t}(u)$ and  the vertices of $\mathit{Path}_\mathit{\mbag}(u,v)$
for each child $v$ of $u$ in $F$.
That is,
$$
\mathit{Stain}(u,({\bf t},F))=\{\mbag(u)\} \cup \bigcup_{v \in \mathit{Child}_F(u)} \mathit{Path}_\mathit{\mbag}(u,v)
$$
\end{definition}

Note that Definition \ref{def:stain} is different from
the definitions of irregularity and split above in that it considers a part of $F$ in the context of 
${\bf t}$ rather than a part of ${\bf t}$ in the context of $F$. Intuitively, the stain of a vertex $u$ represents  subtree of $\bf t$ that interacts with the part $F$ rooted at $u$. For the transduction it will important that the way these subtrees are coupled with each other is controlled. We capture this ``coupling'' more formally in terms  of the \emph{conflict graph} of a \tf-pair ${\bf t},F$, denoted by 
$\cg({\bf t},F)$, which has $V(H)$ as the set of vertices
and $u$ is adjacent to $v$ if and only if their stains
intersect. Analogous to~\cite{DBLP:journals/lmcs/BojanczykP22}, we show that there is an MSO transduction that relates ${\bf t}$ to $F$ for every \tf pair, assuming that the transduction was constructed with knowledge of the chromatic number of $\cg({\bf t}, F)$.

\begin{lemma} \label{lem:transtwforest}
There is an algorithm whose input is $q>0$ and the output is a
transduction ${\bf Tr}$ of size $O(q)$ whose input signature is $\tau_{td}$
and the output signature is $\tau_{ef}$. 
Let ${\bf t},F$ be a \tf-pair of hypergraph $H$ and let $M(H, {\bf t})$ be the relational structure over $\tau_{td}$ induced by $H$ and $\bf t$.
 Then the following is true regarding ${\bf Tr}(M(H,{\bf t}))$.
\begin{enumerate}
\item Let $M^* \in {\bf Tr}(M(H,{\bf t}))$. Then $H(M^*)=H$ and $F(M^*)$
is an elimination forest of $H$. 
\item If 
the chromatic number of $\cg({\bf t},F)$ is at most $q$, then there is a $M^* \in {\bf Tr}(M(H,{\bf t}))$ such that $F(M^*)=F$.
\end{enumerate}
\end{lemma}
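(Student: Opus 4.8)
The plan is to follow the architecture of Bojańczyk and Pilipczuk~\cite{DBLP:journals/lmcs/BojanczykP22} but rebuilt on top of our abstractions. First I would fix the input model $M(H,{\bf t})$ over $\tau_{td}$ and observe that by our earlier definitions we have access, within MSO over $\tau_{td}$, to the tree order of $T$, the bag relation, the adjacency of $H$, and hence to the derived notions $adh$, $mrg$, $cmp$, the margin map $\mbag$, and shortest-path predicates $\mathit{Path}_\mathit{\mbag}(u,v)$ between margin nodes. The guiding idea is that an elimination forest $F$ of $H$ is determined by its child relation on $V(H)$, and that the ``stain'' $Stain(u,({\bf t},F))$ is the piece of ${\bf t}$ that witnesses the part of $F$ rooted at $u$; two vertices whose stains are disjoint can be handled independently. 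So the transduction will (i) colour $V(H)$ with $q$ colours, intending a proper colouring of $\cg({\bf t},F)$, (ii) for each colour class, use the disjointness of stains to reconstruct, in MSO, the local child relation of $F$ restricted to that class together with its neighbours, and (iii) glue these colour-local pieces into a single child relation and verify that the result is a genuine elimination forest of $H$.

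Concretely the elementary steps, in order, are: (1) a \textbf{colouring} transduction adding $q$ unary predicates $C_1,\dots,C_q$ over $\mathit{Vertex}$ — this is the only nondeterministic step and is where the chromatic-number hypothesis of part~2 enters; (2) an \textbf{MSO interpretation} that, using the colours and the $\tau_{td}$-definable stain relation, defines a candidate $\mathit{Child}$ relation on $V(H)$: for a vertex $u$ of colour $i$, its children of colour $j\neq i$ are read off from how the stains of the $C_i$-vertices and $C_j$-vertices sit inside $T$ (since stains of same-coloured vertices are disjoint under a proper colouring, each colour class forms a forest of nested stains along $T$, and the between-class child edges are forced by the definition of $Stain$); (3) a \textbf{filtering} step that retains only those models in which $\mathit{Child}$ is the child relation of a rooted forest on $V(H)$ and in which $u,v\in e$ implies one of $u,v$ is an ancestor of the other — i.e. the elimination-forest axioms, all of which are MSO-expressible over the now-available $\mathit{Child}$ and $\mathit{Adjacent}$; (4) a \textbf{universe restriction}/relabelling to the signature $\tau_{ef}$, keeping $\mathit{Vertex},\mathit{Edge},\mathit{Adjacent},\mathit{Child}$ and dropping the tree-decomposition predicates and the auxiliary colours. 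Each elementary step has size $O(q)$ or $O(1)$, so $\|{\bf Tr}\| = O(q)$ as required; part~1 (soundness) follows from the filtering step, which guarantees every output model encodes an elimination forest of $H$, and part~2 (completeness) follows because, given a $q$-colouring of $\cg({\bf t},F)$, choosing $C_1,\dots,C_q$ accordingly in step~(1) makes the interpretation in step~(2) output exactly the $\mathit{Child}$ relation of $F$, which then survives filtering.

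The main obstacle I expect is step~(2): writing an MSO formula over $\tau_{td}$ (plus the $q$ colours) that pins down the child relation of $F$ exactly, using only the stain/conflict-graph structure. The subtlety is that a single colour class can still induce a complicated nested family of stains inside $T$, and one must argue that the parent of $u$ within $F$ is recoverable as, roughly, the $\subseteq$-minimal stain of another vertex that strictly contains (the relevant part of) $Stain(u)$ — and that this is expressible without quantifying over $F$ itself, which we do not have. This is exactly the point where the hypothesis that $\cg({\bf t},F)$ is properly $q$-coloured is used: disjointness of same-coloured stains is what collapses the reconstruction to a bounded, MSO-definable case analysis, mirroring the corresponding argument of~\cite{DBLP:journals/lmcs/BojanczykP22} but now carried out over hypergraph adjacency and the general $\mbag$/$\mathit{Path}_\mathit{\mbag}$ machinery rather than the graph-specific constructions there. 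A secondary, more routine obligation is verifying that the elimination-forest axioms and the ``is a rooted forest'' condition are MSO over $\tau_{td}$ extended by a binary $\mathit{Child}$ predicate; this is standard since reachability and ancestry in a functional graph are MSO-definable, and the hyperedge condition is a single $\forall e\, \forall u\, \forall v$ statement.
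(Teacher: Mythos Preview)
Your proposal has a genuine gap at step~(2), and you correctly flag it as ``the main obstacle'' but do not resolve it. The stain $Stain(u,({\bf t},F))$ is \emph{not} $\tau_{td}$-definable: by Definition~\ref{def:stain} it is $\{\mbag(u)\}\cup\bigcup_{v\in\mathit{Child}_F(u)}\mathit{Path}_\mathit{\mbag}(u,v)$, which explicitly quantifies over the children of $u$ \emph{in $F$}. So your interpretation step would need $F$ to define the stains, while the stains are supposed to be what lets you define $F$---the construction is circular. More fundamentally, a proper $q$-colouring of $\cg({\bf t},F)$ does not determine $F$: many different elimination forests $F'$ give rise to conflict graphs for which the very same partition $C_1,\dots,C_q$ is a proper colouring. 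Hence no deterministic interpretation from just the $C_i$'s can output the child relation of the specific $F$ we are after; more nondeterminism is required.

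The paper's proof supplies exactly this missing nondeterminism. Beyond guessing the colour classes $C_1,\dots,C_q$, the transduction also guesses (via further colouring steps) sets $D_0,\dots,D_q$ on $V(H)$ and pairs $K^0_i,K^1_i$ on $V(T)$. The intended interpretation is that $D_i$ is the set of $F$-children of vertices in $C_i$ (with $D_0$ the roots), and $K^0_i,K^1_i$ together encode the subgraph $T_i=\bigcup_{u\in C_i}T[Stain(u,({\bf t},F))]$ of $T$. With these additional guessed predicates in hand, the child relation becomes first-order: $(u,v)\in\mathit{Child}$ iff $v\in D_i$, $u\in C_i$, and $\mbag(v)$ lies in the connected component of $T_i$ containing $\mbag(u)$. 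Lemma~\ref{lem:reconstr} is precisely the verification that, when the guesses are made correctly (a ``transduction tuple''), this recovers $F$ exactly; the final filtering step then ensures soundness. All of this is still $O(q)$ many colouring steps and $O(q)$-size filters/interpretation, so the size bound survives. The piece your outline is missing is this second and third round of guessing---the $D_i$'s and the encoded stain-subgraphs---without which step~(2) cannot be made to work.
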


Fortunately, it is in fact possible to bound the chromatic number of $\cg$ in terms of the two central parameters of our framework, $\splitf$ and $\mir$. Meaning  in the context of \Cref{lem:dealt}, we can consider transductions from \Cref{lem:transtwforest} in size bounded in terms of $\tw(\bf t), \fwidth(H), \rank(H)$  and $\degree(H)$.

\begin{theorem} \label{th:cgchrom}
There is a monotone function $\eta$ such that
the chromatic number of $\cg({\bf t},F)$
is at most $\eta(\splitf({\bf t},F),\mir({\bf t},F),tw({\bf t}),rank(H))$. 
\end{theorem}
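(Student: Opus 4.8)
The goal is to bound the chromatic number of the conflict graph $\cg({\bf t},F)$ in terms of $\mathit{split}$, $\mathit{mir}$, $\tw({\bf t})$ and $\rank(H)$. The natural route is to bound some quantity that controls colourability, and the cleanest candidate is degeneracy: if every subhypergraph of $\cg({\bf t},F)$ has a vertex of bounded degree, then the chromatic number is bounded by that degree plus one. Since $\cg({\bf t},F)$ on any induced vertex subset $W$ is essentially the conflict graph of the same $F$ restricted to $W$ with the same ${\bf t}$ (stains only shrink when we drop children, so restricting the vertex set cannot create new intersections), it suffices to prove that $\cg({\bf t},F)$ itself always has a vertex of degree bounded by $\eta'(\mathit{split},\mathit{mir},\tw,\rank)$ for a monotone $\eta'$; then $\eta = \eta'+1$ works.

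The plan is therefore to fix an arbitrary $\tf$-pair ${\bf t},F$ and locate a good witness vertex, for which the deepest leaves of $F$ (or more precisely vertices $u$ whose $F$-subtree is small) are the natural choice, since their stains are small and hence intersect few other stains. Concretely, I would first bound the size of the stain of a suitable vertex $u$: the stain is $\{\mbag(u)\}$ together with one root-to-child path $\mathit{Path}_\mathit{\mbag}(u,v)$ in $T$ for each child $v$ of $u$ in $F$. One controls this by noting that the image under $\mbag$ of the vertices in a single component-style region of $F$ lands in a subtree of ${\bf t}$ of controlled shape; well-formedness (no context factors, controlled by $\mathit{mir}$) together with $\mathit{split}$ forces the set of $T$-nodes that can appear as $\mbag$-images of $u$'s $F$-children to be confined, and the length of each path is bounded via $\tw({\bf t})$ because a shortest path in a tree decomposition that must pass through few adhesions is short. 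The upshot is a bound $|Stain(u,({\bf t},F))| \le s$ for some $s = s(\mathit{split},\mathit{mir},\tw,\rank)$, at least for a carefully chosen $u$ (e.g. one of minimal $F$-subtree size, or an $F$-leaf).

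Next, having a vertex $u$ with a small stain, I would bound how many $v \in V(H)$ can have $Stain(v,({\bf t},F)) \cap Stain(u,({\bf t},F)) \neq \emptyset$. Each node $x$ of $T$ lies in the stain of a vertex $v$ only if $x$ is on a short $T$-path emanating from $\mbag(v)$ or is $\mbag(v)$ itself; since $\mbag$ is injective and each node $x$ of $T$ has bag size at most $\tw({\bf t})+1$, and a node $x$ can be an interior node of $\mathit{Path}_\mathit{\mbag}(v,w)$ only for boundedly many pairs once we exploit that $\mbag(v)$, $\mbag(w)$ are then forced to lie in the two ``directions'' away from $x$ and that the relevant $v$ must be an $F$-ancestor whose stain reaches $x$ — this is where $\mathit{split}$ and well-formedness re-enter, bounding the $F$-ancestral ``spread'' of any node. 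Combining: each of the $\le s$ nodes of $Stain(u)$ is hit by at most $c = c(\mathit{split},\mathit{mir},\tw,\rank)$ stains of other vertices, so $\deg_{\cg}(u) \le s\cdot c =: \eta'$. Since this bound is uniform over all $\tf$-pairs and the same argument applies to every induced subgraph, degeneracy is bounded and hence so is the chromatic number; monotonicity of $\eta$ follows by taking maxima.

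The main obstacle will be the second step in the paragraph above: cleanly bounding, in terms of the four parameters, how many distinct vertices $v$ can have their stain pass through a single fixed node $x$ of $T$. A stain passing through $x$ forces a structured relationship between $x$ and the $F$-position of $v$ and its $F$-children, and turning ``$x$ lies on a $\mbag$-path between $v$ and one of its $F$-children'' into a bounded-count statement seems to require a careful argument combining (i) that $x$'s bag, of size $\le \tw({\bf t})+1$, together with adhesions bounds how the component structure of ${\bf t}$ branches at $x$, and (ii) that the factorisations of the relevant target sets of ${\bf t}$ in $F$ are small ($\mathit{split}$) and context-factor-free ($\mathit{mir}$), so the $F$-ancestor chains whose stains can reach $x$ form a bounded-size antichain-like collection. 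I expect this to be where most of the technical work — and any dependence subtleties among the parameters — will reside.
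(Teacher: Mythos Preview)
Your proposal correctly isolates the decisive technical ingredient: bounding, for a fixed node $x$ of $T$, how many vertices $u$ have $x \in Stain(u,({\bf t},F))$. This is exactly what the paper establishes (their Theorem~\ref{theor:sgdegbound}), via the observation that any such $u$ with $u \notin mrg(x)$ must be the parent of the appendices of a \emph{context} factor in the characteristic factorisation $\textsc{cf}_{{\bf t},F}(x)$, giving an injection into those context factors, whose number is bounded in terms of $split$ and $mir$; the remaining $u$'s lie in $mrg(x)$ and are bounded by the bag size. Your sketch of this step is vaguer than the paper's argument but points in the right direction.

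However, your \emph{wrapper} has a genuine gap. The degeneracy reduction does not go through as you describe it. You write that $\cg({\bf t},F)[W]$ ``is essentially the conflict graph of the same $F$ restricted to $W$'', but this is false in the direction you need: stains are defined using \emph{all} $F$-children of $u$, not only those in $W$, so the edges of the induced subgraph $\cg[W]$ are determined by the original stains. Your own observation that ``stains only shrink'' when passing to $F|_W$ shows that $\cg({\bf t},F|_W)$ is a \emph{subgraph} of $\cg[W]$, and a low-degree vertex in the former need not be low-degree in the latter. Concretely, your witness vertex is an $F$-leaf (stain of size $1$), but an arbitrary $W$ need not contain any $F$-leaf, so the argument does not recurse. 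Moreover, $F|_W$ is not an elimination forest of $H$, so the uniform bound ``for all \tf-pairs'' is not available. Finally, your preliminary step of bounding $|Stain(u)|$ is both unnecessary and generically impossible: stains can be long paths in $T$, and the paper never bounds their size.

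The paper bypasses degeneracy entirely. Each stain is a connected subtree of the tree $T$, so $\cg({\bf t},F)$ is the intersection graph of subtrees of a tree and hence chordal; as a perfect graph its chromatic number equals its clique number. Subtrees of a tree satisfy the Helly property, so any clique of stains shares a common node $x$ of $T$, whence $\omega(\cg) \le \max_x |\{u : x \in Stain(u)\}|$, which is precisely the multiplicity bound you identified. This gives $\chi(\cg)$ directly, with no induction over induced subgraphs and no need to control stain sizes.
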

\begin{proof}[Proof Idea]
    The graph $\cg({\bf t}, F)$ is an intersection graph of connected subgraphs of a forest, and therefore also chordal. Chordal graphs are perfect graphs, meaning their chromatic number is the size of their largest clique. We then use the fact that $\cg({\bf t}, F)$ also enjoys the Helly property and further observations on the ``density of stains'' to ultimately bound the size of the largest clique in $\cg({\bf t}, F)$. The full proof of the theorem is the content of \Cref{sec:comb}.
\end{proof}

As a final step we need a further transduction that verifies that \fwidth of the forests produced by the transduction of \Cref{lem:transtwforest}. For treewidth, this can be done implicitly, as this corresponds to limiting the size of bags themselves. To lift the overall procedure to arbitrary manageable \fwidth, we add another step to the transduction that filters for those elimination forests with optimal \fwidth. We show that such a transduction indeed always exists, with the core properties encapsulated in the following technical lemma which is the main result of \Cref{sec:mso}.

\begin{lemma} \label{lem:iso4}
Let $f$ be a manageable width function and let $\beta$ be a size bound for $f$ as in Definition~\ref{def:manageablewidth}.
Then, for  each $k \geq 0$ and positive integer $r$, the following statements hold. 
\begin{enumerate}[noitemsep,topsep=0pt,label=(\roman*)]
\item There is an MSO sentence
$\varphi^*_{f,k,r}$ over the signature $\tau_{ef}$ such that for a structure
$M$ over the signature, $M \models \varphi^*_{f,k,r}$ if and only if 
$F(M)$ is of \fwidth at most $k$.   
\item 
There is an algorithm that outputs $\varphi_{f,k,r}^*$
with input $k,r$ in time that depends only on a computable function in $k,r$ and $\beta$.
\item  \label{lem.item:vals}
Let $\mathcal{H}_{k,r}$ be all hypergraphs with $\rank(H)\leq r$ and $f_H(V(H)\leq k$. Then there is an algorithm that computes the set $V=\{f_H(V(H))\mid H\in \mathcal{H}_{k,r}\}$ in time depending on $k, r$, and $\beta$. In consequence, $|V|$ can also be bounded in terms of $k,r,\beta$.
\end{enumerate}
\end{lemma}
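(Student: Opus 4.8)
The plan is to use the conditions \textbf{bounded size}, \textbf{invariant locality}, and \textbf{automatizability} from \Cref{managblewidth} (weak additivity is not needed here) to reduce the semantic property ``$F(M)$ has \fwidth at most $k$'' to a finite case distinction over isomorphism types of small induced subhypergraphs, which can then be written as an MSO sentence over $\tau_{ef}$ and assembled within the stated time bounds.

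First I would set $b := \max_{0 \le r' \le r} \beta(k, r')$, a computable quantity. Recall that for an elimination forest $F$ of a hypergraph $H$ the induced tree decomposition has \fwidth $\max_{u \in V(F)} f_H(\bag_F(u))$, and that the set $\bag_F(u)$ is MSO-definable over $\tau_{ef}$ with $u$ free: $v \in \bag_F(u)$ iff $v = u$, or $v$ is a proper ancestor of $u$ --- ancestry being the MSO-definable transitive closure of $\mathit{Child}$ --- and $v$ is hypergraph-adjacent (via some edge, using $\mathit{Adjacent}$) to $u$ or to a descendant of $u$. Whenever $\rank(H) \le r$, \textbf{bounded size} makes $f_H(\bag_F(u)) \le k$ equivalent to ``$|\bag_F(u)| \le b$ and $f_H(\bag_F(u)) \le k$'', and by \textbf{invariant locality} the value $f_H(\bag_F(u))$ depends only on the isomorphism type of the induced subhypergraph $H[\bag_F(u)]$. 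Since $|\bag_F(u)| \le b$ and $\rank(H[\bag_F(u)]) \le \rank(H) \le r$, this type ranges over a finite set whose cardinality is bounded by a computable function $N(b,r)$. For a representative $H_\mathcal{T}$ of each such type I would use \textbf{automatizability} to compute $f_{H_\mathcal{T}}(V(H_\mathcal{T}))$ in time bounded by a computable function of $b$; by invariant locality this value equals $f_H(\bag_F(u))$ whenever $H[\bag_F(u)] \cong H_\mathcal{T}$, so it is meaningful to call $\mathcal{T}$ \emph{good} when $f_{H_\mathcal{T}}(V(H_\mathcal{T})) \le k$.

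This already gives \ref{lem.item:vals}: every $H \in \mathcal{H}_{k,r}$ has $|V(H)| \le \beta(k, \rank(H)) \le b$ by bounded size, so $V$ is exactly the set of values $f_{H_\mathcal{T}}(V(H_\mathcal{T}))$ over good types, which the enumeration above computes in time bounded in $k, r, \beta$; hence also $|V| \le N(b,r)$, bounded in these terms. For~(i), let $\psi_\mathcal{T}(u)$ assert $H[\bag_F(u)] \cong H_\mathcal{T}$: writing $H_\mathcal{T}$ on vertices $\{1, \dots, m\}$ with edge set $\mathcal{E}$, it states that there are pairwise distinct $v_1, \dots, v_m$ that as a set equal $\bag_F(u)$ (via the MSO definition of $\bag_F$), and such that for every nonempty $S \subseteq \{1, \dots, m\}$ we have $S \in \mathcal{E}$ iff some edge $e$ of $H$ satisfies $\mathit{Adjacent}(v_i, e)$ exactly for $i \in S$ --- the ``exactly'' capturing that the induced edge set is $\{U \cap e : e \in E(H)\}$ rather than the edges contained in $U$. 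Then
\[
\varphi^*_{f,k,r} \;:=\; \forall u\,\bigl(\mathit{Vertex}(u) \rightarrow \textstyle\bigvee_{\mathcal{T}\ \mathrm{good}} \psi_\mathcal{T}(u)\bigr),
\]
and for any model $M$ whose hypergraph has rank at most $r$ --- the only regime produced by \Cref{lem:transtwforest} --- $M \models \varphi^*_{f,k,r}$ holds iff every bag of $F(M)$ has a good type, iff $f_{H(M)}(\bag_{F(M)}(u)) \le k$ for all $u$, iff $F(M)$ has \fwidth at most $k$. Part~(ii) then bundles the construction: compute $b$ from $\beta, k, r$; enumerate the $\le N(b,r)$ types; decide goodness of each via the automatizability algorithm; and emit $\varphi^*_{f,k,r}$, whose size is bounded by a function of $b$ and $r$ and whose construction steps each run in time bounded by a computable function of $k, r$, and $\beta$.

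The main obstacle is the faithful MSO encoding behind $\psi_\mathcal{T}$: $\bag_F$ interleaves the forest order (requiring the transitive closure of $\mathit{Child}$) with hypergraph incidence, and the edge set of an induced subhypergraph consists of the \emph{traces} $U \cap e$ rather than the edges contained in $U$, so pinning down an isomorphism type forces one to quantify over edges of $H$ and fix their intersection with the $b$ named vertices exactly. Once this bookkeeping is in place, the remainder is the finite precomputation licensed by bounded size, invariant locality, and automatizability.
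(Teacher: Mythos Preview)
Your proposal is correct and follows essentially the same approach as the paper: enumerate all rank-$\le r$ hypergraphs on at most $\beta(k,r)$ vertices, use automatizability to filter the ``good'' ones with $f \le k$, and assemble $\varphi^*_{f,k,r}$ as a universal quantification over vertices whose bag (MSO-defined via $\mathit{Child}$ and $\mathit{Adjacent}$) induces a subhypergraph isomorphic to one of the good types. The only cosmetic difference is in the isomorphism encoding---the paper existentially names both vertices \emph{and} edge elements of the model and checks an incidence table, whereas you name only vertices and conjoin over all subsets $S$ a trace condition; your variant is arguably cleaner in handling the fact that distinct edges of $H$ may have the same trace on $U$.
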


\subsection{Proof of Theorem \ref{th:mainalgo}} \label{subsec:mainproof}

Putting together the contents of the previous two sections forms the spine of our algorithm. We will show that there is an MSO transduction that represents a sound and complete algorithm for computing an optimal width elimination forest from an initial tree decomposition. Using the following result of Bojańczyk and Pilipczuk and the bounds we obtain on the size of our transduction we then observe that the resulting algorithm is indeed fixed-parameter tractable.

\begin{proposition}[Theorem 6.1 in \cite{DBLP:journals/lmcs/BojanczykP22}]
\label{thm61transopt}
    There is an algorithm that, given an MSO transduction $\mathbf{Tr}$ and a relational structure $\mathbb{A}$ over the signature of $\ \mathbf{Tr}$, implements $\mathbf{Tr}$ on $\mathbb{A}$ in time 
    $$f(\|\mathbf{Tr}\|, \tw(\mathbb{A}))\  (\|\mathbb{A}\| + m),$$
    where $m$ is the size of the output structure (or $0$ if empty) and $f$ is computable.
\end{proposition}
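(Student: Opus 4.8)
The plan is to execute the whole transduction in a single bottom-up pass over a tree (term) representation of $\mathbb{A}$, using the classical correspondence between bounded-treewidth structures, terms over a finite algebra, and tree automata. First I would compute a tree decomposition of (the Gaifman graph of) $\mathbb{A}$ of width $O(\tw(\mathbb{A}))$ in time $g(\tw(\mathbb{A})) \cdot \|\mathbb{A}\|$ using the linear-time algorithm of Bodlaender~\cite{DBLP:journals/siamcomp/Bodlaender96} (a constant-factor approximation already suffices), and turn it into a term $t$ over a finite signature $\Sigma$ depending only on the width and the input signature, such that $\mathbb{A}$ is the structure decoded from $t$, $|t| = O(\|\mathbb{A}\|)$, and every element of the universe is introduced at a bounded number of positions of $t$. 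The second, central step is to \emph{flatten} $\mathbf{Tr}=Tr[\mathbf{g}_1,\dots,\mathbf{g}_p]$ into normal form: using the composition theory of MSO transductions~\cite{DBLP:books/daglib/0030804} --- the backwards-translation lemma, which pulls an MSO formula over $\mathbf{g}_i(\mathbb{B})$ back to an MSO formula over $\mathbb{B}$, together with pushing colourings and copyings past interpretations --- I would rewrite $\mathbf{Tr}$ as a nondeterministic colouring $C$ adding a bounded number $s$ of unary predicates, followed by copying by a bounded factor, followed by a single MSO interpretation $I$ defining the output universe and every output relation, a single filtering sentence $\psi$, and a single universe restriction, with all formulas of size bounded by a computable function of $\|\mathbf{Tr}\|$. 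In all applications here $\mathbf{Tr}$ is functional (its output, when nonempty, is a single structure), so $C$ only has to be realised by \emph{some} witness rather than enumerated; the general case is handled analogously by bounded-delay enumeration of accepting runs.

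The third step translates the normal form into operations on $t$. A colouring of $\mathbb{A}$ corresponds to adding $O(s)$ binary tracks to the labels of $t$, so $C$ becomes a nondeterministic choice of track labels on $t$. By Courcelle's theorem, each formula occurring in $\psi$, in $I$, and in the domain formula translates --- relative to $t$, with the colour-tracks as free set variables --- into a tree automaton over $\Sigma$ enriched with finitely many tracks, of size bounded by a computable function of $\|\mathbf{Tr}\|$ and $\tw(\mathbb{A})$. To implement filtering I would run a bottom-up pass deciding whether some setting of the colour-tracks makes the $\psi$-automaton accept $t$, returning emptiness if not and otherwise extracting one accepting run (hence one valid colouring) in linear time. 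To implement the interpretation, for each output relation defined by $\theta(x_1,\dots,x_a)$ I would enumerate all $a$-tuples of (copies of) universe elements satisfying $\theta$ in $\mathbb{A}$ using the standard MSO query-enumeration machinery on trees --- linear preprocessing followed by bounded-delay enumeration, cf.~\cite{DBLP:books/daglib/0030804} --- producing all answers of that relation in time $h(\|\mathbf{Tr}\|,\tw(\mathbb{A}))\cdot(|t|+m_\theta)$, where $m_\theta$ is the number of answers; the output universe is the same kind of unary query and the universe restriction is folded into $\theta$. Summing the preprocessing, the automata constructions (independent of $\|\mathbb{A}\|$), and the enumeration passes ($\sum_\theta h(\cdot)\cdot(|t|+m_\theta)=h(\cdot)\cdot(O(\|\mathbb{A}\|)+m)$) gives total time $f(\|\mathbf{Tr}\|,\tw(\mathbb{A}))\cdot(\|\mathbb{A}\|+m)$ for a computable $f$, since every ingredient is effective.

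I expect the flattening step to be the main obstacle, precisely because MSO interpretations do not preserve bounded treewidth: the intermediate structures cannot be re-decomposed, so the whole composition must be performed ``in one shot'' against the original term $t$. The delicate points are (i) the backwards-translation/composition lemma for interpretations, together with pushing colourings and copyings to the front of the composition while keeping all formula sizes computably bounded in $\|\mathbf{Tr}\|$, and (ii) handling the nondeterminism of the colouring symbolically on $t$ --- via automata and accepting runs --- rather than enumerating exponentially many colourings. The remaining ingredients (Bodlaender's algorithm, the formula-to-tree-automaton translation of Courcelle's theorem, and bounded-delay MSO enumeration on trees) are standard.
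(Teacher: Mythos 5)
This statement is not proved in the paper at all: it is imported verbatim as Theorem~6.1 of Bojańczyk and Pilipczuk~\cite{DBLP:journals/lmcs/BojanczykP22}, so there is no internal proof to compare your attempt against. Your reconstruction does track the standard argument underlying that cited theorem --- normalise the transduction via the composition method into colourings/copyings followed by a single interpretation, filtering, and universe restriction; compute a width-$O(\tw(\mathbb{A}))$ tree decomposition and turn $\mathbb{A}$ into a term over a finite algebra; compile the residual MSO into tree automata; then a bottom-up run plus linear-delay enumeration yields the output in time $f(\|\mathbf{Tr}\|,\tw(\mathbb{A}))\cdot(\|\mathbb{A}\|+m)$ --- and you correctly identify the crux, namely that intermediate structures along a transduction chain need not have bounded treewidth, so the composition must be flattened and evaluated in one pass over the original term rather than stage by stage. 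Since the paper uses this result as a black box, nothing in your sketch conflicts with it; the only caveat is that it is a plan rather than a proof, and the two technically heavy steps you flag (backwards translation keeping formula sizes computably bounded, and symbolic handling of the colouring nondeterminism on the term) are exactly where the real work in~\cite{DBLP:journals/lmcs/BojanczykP22} lies.
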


\begin{proof}[Proof of Theorem \ref{th:mainalgo}]
The algorithm first computes $p=\beta(k,rank(H))$ 
and runs an off-the-shelf algorithm for exact treewidth computation parameterized by $p-1$. 
We note that, due to the bounded size property,
if $\fwidth(H) \leq k$ then $\tw(H) \leq p-1$.  
Therefore, if the treewidth computation algorithm rejects,
our algorithm also rejects. 

Otherwise, we have 
a tree decomposition ${\bf t}=(T,B)$ of $H$ with treewidth at most $p-1$. 
We then compute $p_1=\gamma(tw({\bf t}), k,rank(H), \Delta(H))$
with $\gamma$ as in Lemma \ref{lem:dealt}, and $q=\eta(p_1,p_1,tw({\bf t}),rank(H))$ with 
$\eta$ as in Theorem \ref{th:cgchrom}. 
Finally, we apply the algorithm as in Lemma \ref{lem:transtwforest} with input $q$
to create a transduction ${\bf Tr}$ as specified in the lemma.

\begin{claim} \label{clm:algo1}
\sloppy
Assume that $\fwidth(H) \leq k$.

Then there is $M^* \in {\bf Tr}(M(H,{\bf t}))$ 
such that
$\fwidth(F(M^*))=\fwidth(H)$. 
\end{claim}

\begin{claimproof}
By Lemma \ref{lem:dealt} and the monotonicity of function $\gamma$
as in the lemma, there is an elimination forest $F$ of $H$
with $\fwidth(F)=\fwidth(H)$ and such that both
$\splitf({\bf t},F)$ and $mit({\bf t},F)$ are at most $p_0$. 
Next, by Theorem \ref{th:cgchrom} and monotonicity of $\eta$, the chromatic number of $\cg({\bf t},F)$
is at most $q$. 
It follows from Lemma \ref{lem:transtwforest} that 
$Tr(M(H,{\bf t}))$ contains a model $M_1$ with $F(M_1)=F$. 
\end{claimproof}

Next, we apply the algorithm as in the last item of 
Lemma \ref{lem:iso4} to produce the set $\mathit{Val}_{f,k,r}$ of values as specified in \Cref{lem.item:vals} of the lemma.

Our algorithm explores the values $a \in \mathit{Val}_{f,k,r}$ in the increasing order. 
For each value $a$, a subprocedure tests whether $\fwidth(H) \leq a$ and either returns a 
corresponding tree decomposition or rejects. In the latter case it is guaranteed that
$\fwidth(H)>a$. If the subprocedure rejects for all $a \in \mathit{Val}_{f,k,t}$, then by definition of $\mathit{Val}_{f,k,r}$ the overall algorithm is safe to reject too.
Otherwise, let $a$ be the first value for which a tree decomposition is returned. 
Once again, according to the properties of $\mathit{Val}_{f,k,r}$, $\fwidth(H)=a$ and hence
the algorithm is safe to return the obtained tree decomposition. 
It thus remains to describe the subprocedure for testing $\fwidth(H) \leq a$ for a specific
value $a \leq k$. 

By Lemma \ref{lem:iso4}, $\varphi^*_{f,a,r}$ (as in the description of the lemma)
can be designed by an algorithm in at most $g_0(k,r)$ time. 
We extend the transduction ${\bf Tr}$ by appending one domain filtering 
transduction with $\varphi^*_{f,a,r}$ being the underlying MSO formula. 
We call the resulting transduction ${\bf Tr^*_\mathit{a}}$.

\begin{claim} \label{clm:algo2}

The following two statements hold. 
\begin{enumerate}
\item If $\fwidth(H) \leq a$ then ${\bf Tr}^*_a(M(H,{\bf t})) \neq \emptyset$. 
Moreover, for each $M^* \in {\bf Tr}^*_a(M(H,{\bf t}))$, $H(M^*)=H$
and the $f$-width of  $F(M^*)$ is at most $a$. 
\item If $\fwidth(H)>a$ then ${\bf Tr}^*_a(M(H,{\bf t}))=\emptyset$.  
\end{enumerate}
\end{claim}

\begin{claimproof}
Assume that $\fwidth(H) \leq a$. 
In particular, this means that $\fwidth(H) \leq k$. 
Hence, we apply Claim \ref{clm:algo1} and conclude that 
the output of ${\bf Tr}(M(H,{\bf t}))$ contains at least one model $M^*$
as specified in the statement of this claim. This model satisfies
the underlying MSO of the last elementary transduction 
of ${\bf Tr}^*_a(M(H,{\bf t}))$ and hence the latter is not empty. 
Let $M^* \in {\bf Tr}^*_a(M(H,{\bf t}))$. This means that 
$M^* \in {\bf Tr}(M(H,{\bf t}))$. Hence, by Lemma \ref{lem:transtwforest},
$H(M^*)=H$ and $F(M^*)$ is an elimination forest for $H$. 
This means that $M^*$ passes through the filtering of the last stage
of ${\bf Tr}^*_a(M(H,{\bf t}))$ and thus $\fwidth(F) \leq a$. 
This proves the first statement of the claim. 

For the second statement, assume that $\fwidth(H)>a$
and yet ${\bf Tr^*}(M(H,{\bf t})) \neq \emptyset$. 
Let $M^* \in {\bf Tr}^*_a(M(H,{\bf t}))$. This means that
$M^* \in {\bf Tr}(M(H,{\bf t})$. By Lemma \ref{lem:transtwforest}
and the assumption, $\fwidth(F(M^*))>a$. 
This means that $M^*$ does not pass through the filtering of the last
step of ${\bf Tr^*}_a(M(H,{\bf t}))$ and hence cannot belong to the output. 
This contradiction proves the second statement of the claim. 
\end{claimproof}

Having constructed ${\bf Tr}^*_a$, we feed it along with $M(H,{\bf t})$
to the input of the algorithm $\mathcal{A}$ as in \Cref{thm61transopt}. 
According to the theorem, $\mathcal{A}({\bf Tr^*}_a,M(H,{\bf t}))$ returns an elimination forest 
of $H$ of width at most $a$ or rejects and, in the latter case, it is guaranteed that 
$\fwidth(H)>a$. If an elimination forest returned it only remains to turn it into a tree decomposition
of $H$ induced by $F$ that can be done in a polynomial time. It thus only remains to
verify the FPT time bound for $\mathcal{A}({\bf Tr}^*_a,M(H,{\bf t}))$. 

The runtime of $\mathcal{A}({\bf Tr}^*_a,M_0)$
is FPT parameterized by the size of ${\bf Tr^*}_a$ and $tw(M(H,{t}))$. 
By our initial choice of $p$, we have that $tw(M(H,{t}))$ is at most $\beta(k,rank(H))$.
The size of ${\bf Tr}^*_a$ is the size of ${\bf Tr}$ plus the size of $varphi^*_{f,a,r}$.
The size of ${\bf Tr}$ us upper bounded by a function of $q$ which itself is upper bounded by
a function of $k,rank(H),\Delta(H)$ by definition. Finally, the length of the last step
is upper bounded by a function of $k$ and $rank(H)$ by Lemma \ref{lem:iso4}. 
\end{proof}

While our proof does not directly provide an easily implementable algorithm, it does finally confirm that FPT checking of \ghw and \fhw, and in fact computation of witnessing decompositions, is possible. We believe that the key insights for obtaining \Cref{thm:main}, as well as establishing that FPT algorithms are theoretically possible, can guide the development for practical algorithms in future work.
While parameters $\fwidth(H)$ and $\rank(H)$ are central to our approach, the degree is only necessary to obtain the bound in \Cref{lem:dealt}. Its role is an interesting topic for future work, and we discuss the matter in detail in \Cref{sec:conclusion}.

\section{Computation and Approximation of Adaptive Width Variants} 
\label{sec:adw}

Let $\mathcal{F}$ be a family of width functions. 
The $\mathcal{F}$-width  of hypergraph $H$ is defined as the $\sup_{f\in \mathcal{F}} \fwidth(H)$. Width measures based on such a  whole class of width functions have become a central element of the study of the complexity of database query answering~\cite{DBLP:journals/mst/Marx11,DBLP:journals/jacm/Marx13}. Despite their importance, we are not aware of any algorithmic study of the complexity of deciding such width measures. Here we show how our framework for manageable width functions also applies to some of these cases.

Any function $f : V(H) \to \mathbb{R}_{\ge 0}$ naturally extends to a width function by setting $f(U)=\sum_{v \in U} f(v)$. Such width functions are called \emph{modular}. Let us additionally require that $f$ is a fractional independent set, i.e., $f(e)\le 1$. Let $\mathcal{F}$ be the class of all such modular width function induced by fractional independent sets (for a given $H$). The $\mathcal{F}$-width is then the so-called \emph{adaptive width} of $H$, a parameter that characterizes when deciding conjunctive queries is in FPT~\cite{DBLP:journals/jacm/Marx13}. For simplicity we fix $\mathcal{F}$ to this class of modular width functions induced by fractional independent sets for the rest of this paper.

In this section, we discuss how algorithms for two restricted versions  of adaptive width follow from the framework for FPT \fwidth checking described above. 

\begin{definition}
Let $H$ be a hypergraph and $\delta \geq rank(H)$ be an integer. 
We define $\mathcal{F}_{\delta}$ as the family of all modular width functions induced by fractional independent sets $f$ where each $v \in V(H)$, $f(v)=\frac{a}{\delta}$ where $a \in  \mathbb{N}_0$ and $0 \le a  \le \delta$.
Furthermore, define $\mathcal{F}^*_{\delta}$ as the set of all modular width functions induced by
fractional independent sets $f$ such that for each $v \in V(H)$ either 
$f(v)=0$ or $f(v) \geq \frac{1}{\delta}$. 
\end{definition}

Thus, $\mathcal{F}_{\delta}$-width is the \emph{discretized} version of 
adaptive width with bounded precision expressed by $\delta$. 
$\mathcal{F}^*_{\delta}$-width relaxes this restriction further and only excludes functions that assign very small non zero weights. To apply our framework, first observe that any fixed $f \in \mathcal{F}_\delta$ satisfies the weak monotone separability and automatizability properties of \Cref{def:manageablewidth}.
The bounded size property holds on the $H^{f+}$, the induced subhypergraph on  vertices where $f(v)\neq 0$. Since $\fwidth(H) = \fwidth(H^{f+})$, simply operating on $H^{f+}$ will be sufficient for us. The only missing part is invariant locality. 
Intuitively, we fix each $f\in \mathcal{F}_\delta$ as vertex labels on the hypergraph. On a technical level, this yields a labelled hypergraph (each node labelled with its value under $f$) for which $f$ indeed satisfies invariant locality. That is, at the technical level where we use our previous machinery on the width functions from $\mathcal{F}_\delta$, they are indeed manageable, and the machinery discussed in the previous section works without change.

The main result of this section is 
a reformulation of Theorem \ref{thmadapt1} as stated below.

\begin{theorem} \label{thm:adaptmain}
There is an algorithm that takes as input a hypergraph $H$, a positive rational $k$
and $\delta \in \mathbb{N}$ and decides if $\mathcal{F}_{\delta}$-width$(H) \le k$. The runtime of this algorithm is FPT
parameterized by $k,rank(H), \Delta(H),\delta$. 
\end{theorem}

We need to introduce some further notation. Let $\delta \in \mathbb{N}$ and let $\mathcal{U} = \{U_0,\dots,U_\delta\}$ be a weak partition of $V(H)$. We write $f[U_0,\dots,U_\delta]$ (or simply $f[\mathcal{U}]$) for the modular width function that, for every vertex $v$, assigns $f(v) = \frac{a}{\delta}$ when $v \in U_a$. For clarity we write $\hat f[\mathcal{U}]$ for the modular width function induced by the vertex weight function $f[\mathcal{U}]$.

The central statement for the proof of Theorem \ref{thm:adaptmain} is to
is the following theorem. It follows from the same underlying machinery as \Cref{thm:main}, but instead of turning the width check transduction directly into an algorithm, we translate it to an MSO formula where a modular width function is given as input via second-order variables. 

\begin{theorem} \label{th:adaptaux}
There is an algorithm $\mathcal{A}(H,{\bf t}, k,\delta)$, where $H$ is  a hypergraph,
${\bf t}$ is a tree decomposition of $H$,
$k$ is a non-negative rational, and $\delta \in \mathbb{N}$,
that returns an MSO formula $\varphi(U_0,\dots,U_\delta)$
where $\mathcal{U}=\{U_0,\dots,U_\delta\}$ are free (monadic) second-order variables
such that:
$M \not\models \varphi(U_0, \dots, U_\delta)$ exactly if $\mathcal{U}$ is a weak partition of $V(H)$, $f[\mathcal{U}]$ is a fractional independent set, and $\hat f[\mathcal{U}]$-width$(H) > k$.

The runtime of the algorithm is FPT parameterized 
by the  treewidth of ${\bf t}$, $k$, $rank(H)$, $\Delta(H)$ and $\delta$
and the size of the resulting formula is upper-bounded by a function depending 
on these parameters. 
\end{theorem}

To prove \Cref{thm:adaptmain} we will extend the formula from \Cref{th:adaptaux} by (indirect) second-order quantification over the possible width functions. In particular,

with the notation as in \Cref{th:adaptaux},
let $\psi=\forall U_0 \dots \forall U_\delta \ \varphi(U_0, \dots, U_\delta)$.
\begin{claim} \label{clm:adaptaux1.main}
    $M \models \psi$ if and only if $\mathcal{F}_{\delta}$-width of $H$
    is at most $k$. 
\end{claim}
\begin{claimproof}
    Assume first that $M \models \psi$.
    For the sake of contradiction, assume that the $\mathcal{F}_{\delta}$-width 
    of $H$ is greater than $k$. This means that there is a function $f \in \mathcal{F}_{\delta}$
    such that the \fwidth of $H$ is greater than $k$. 
    By definition  of $\mathcal{F}_{\delta}$, there is a weak partition $U'_0 \dots U'_\delta$
    such that for each $i \in \{0, \dots \delta\}$, $U'_i$ consists of exactly those elements $v$
    such that $f(v)=\frac{i}{\delta}$. Clearly, $f=f[U'_0, \dots U'_\delta]$ is a fractional independent set. 
    By definition $M \not\models \varphi(U'_0, \dots U'_\delta)$. 
    Then $M \not\models \psi$ as witnessed by $U_0=U'_0, \dots, U_\delta=U'_\delta$
    thus contradicting our assumption. 

    Assume now that $M \not\models \psi$. This means that there are
    $U_0=U'_0, \dots U_\delta=U'_\delta$ such that $M \not\models \varphi(U'_0 \dots, U'_\delta)$. 
    By definition, this means that $f=f[U'_0, \dots, U'_\delta]$ is a fractional independent set 
    of $H$ and the \fwidth of $H$ is greater than $k$. By definition of $f$, clearly $f \in \mathcal{F}_{\delta}$.
    It follows that $\mathcal{F}_{\delta}$-width of $H$ is greater than $k$.
\end{claimproof}

We now sketch the design of the required algorithm.
We observe first that $tw(H)$ is at most twice $\mathcal{F}_{\delta}$-width multiplied by $rank(H)$.
The first step of the algorithm is running a linear time algorithm
for treewidth computation parameterized by $2 \cdot k\cdot rank(H)$. If the treewidth
construction algorithm rejects then our algorithm can safely return $False$. 
Otherwise, the treewidth construction algorithm
returns a tree decomposition ${\bf t}=(T,{\bf B})$ of treewidth at most $k \cdot rank(H) \cdot \Delta(H)$. 
We run $\mathcal{A}(H,{\bf t}, k,\delta)$ as in the statement of Theorem \ref{th:adaptaux}.
We note that, by definition of ${\bf t}$, the runtime of the algorithm $\mathcal{A}$
is FPT parameterized by $k,rank(H),\Delta(H)$, and $\delta$. 
It follows that the size of $\varphi(U_0, \dots,  U_\delta)$ is upper bounded
by such a function of these parameters. Clearly, that the runtime of construction of $\psi$ is FPT
parameterized by  the same parameters and the size of $\psi$
can be upper bounded by a function dependent on $k,rank(H),\Delta(H)$, and $\delta$.

By Claim \ref{clm:adaptaux1.main}, it only remains to test whether 
$M(H,{\bf t}) \models \psi$. This can be done within the required runtime
according to Courcelle's theorem~\cite{DBLP:books/daglib/0030804}. The full proof details for \Cref{thm:adaptmain} are presented in \Cref{app:adw}.

As a corollary of Theorem \ref{thm:adaptmain} we can show that
$\mathcal{F}^*_{\delta}$-width is FPT approximable with ratio $2$ through a rounding argument that connects it to $\mathcal{F}_\delta$-width.
We then use the algorithm of Theorem \ref{thm:adaptmain} with the width parameter being $k$.

\begin{corollary} \label{cor:adapt1}
There is an algorithm that takes as input a hypergraph $H$, a positive rational $k$
and $ \delta \in \mathbb{N}$. 
If the algorithm \emph{accepts}, then $\mathcal{F}^*_{\delta}$-width$(H) \le 2k$. If the algorithm \emph{rejects}, then it is guaranteed
that $\mathcal{F}^*_{\delta}$-width$(H) \ge k$. 
The runtime of this algorithm is FPT
parameterized by $k,rank(H), \Delta(H),\delta$.
\end{corollary}

It follows from Corollary \ref{cor:adapt1}, that
if there is $\delta=\eta(k,rank(H), \Delta(H))$
such that for all $H$, $\mathcal{F}^*_{\delta}=\mathcal{F}$
then adaptive width has a ratio $2$ FPT approximation parameterized by the width $k$, $rank(H)$, and $\Delta(H)$. 
Thus existence of such a function $\eta$ is an interesting open question for future work.

\section{Conclusion \& Future Work}
\label{sec:conclusion}

In this paper we have developed a general framework for fixed-parameter tractable
computation of hypergraph width measures. By introducing the class of \emph{manageable}
width functions and as a special case we obtain the first exact FPT algorithms for deciding $\mathrm{ghw}(H)\le k$
and $\mathrm{fhw}(H)\le k$ under any non-trivial parameterisation. Our framework further yields an exact FPT algorithm for discretized adaptive width, as well as a factor-$2$
FPT approximation algorithm for the relaxed variant $F^*_\delta$-width.

A natural question for further research is whether
the parameter $\degree(H)$ is necessary for fixed-parameter tractability of $\fwidth$
computation. For the related discussion, we find it easier to confine ourselves
to a specific parameter $\ghw$. Then the open question is as follows:
is testing $\ghw(H) \leq k$ fixed-parameter tractable if parameterized by $k$ and $\rank(H)$? 

In order to upgrade the machinery of this paper to the parametrization 
by $k$ and $\rank(H)$ only, it is enough to get rid of $\Delta(H)$ in the statement
of Lemma \ref{lem:existsneutral}. Formally this leads to the following question.

\begin{question} \label{quest:neutral1}
Is there  a function $h$ such that the following holds. 
Let $F$ be an elimination forest of $H$ and let $(Red,X,Blue)$
be a red-blue partition of $V(H)$. 
Let $B$ be a context factor of $F$ and suppose that the number 
of monochromatic intervals of the spine of $B$ is at least
$h(|X|,\ghw(F),rank(H))$. Then $B$ admits a neutral swap (cf.,~\Cref{lem:existsneutral}). 
\end{question}

A positive answer to Question \ref{quest:neutral1}
will lead to an immediate upgrade of the algorithm in this paper to parameterization
by $k$ and $rank(H)$. In order to answer the question negatively, a counterexample 
of the following kind is needed. Pick $k,r>2$ and an infinite set of hypergraphs $H_1,H_2,\dots$,
their respective elimination forests $F_1, F_2 \dots$ and respective context factors $B_1,B_2, \dots $
with the number of  monochromatic intervals of these factors monotonically increasing and yet
none of $B_i$ admitting a neutral swap.

We conclude the section with an open question related to adaptive width. 
\begin{question} \label{quest:adapt}
Are there a function $\delta$ and a constant $c \geq 1$ that for every hypergraph $H$,
 $\mathcal{F}^*_{\delta}$-width$(H) \le adw(H)/c$,
where $\delta=\delta(k,rank(H),\Delta(H))$?
\end{question}
A positive answer to Question \ref{quest:adapt}, combined with the
results of the previous section will imply a constant ratio -- in particular ratio $2c$ -- FPT approximation
algorithm for adaptive width parameterized by the width, the rank, and the max-degree.
Once again,  the question is of a purely combinatorial nature.

\bibliographystyle{ACM-Reference-Format}
\bibliography{refs}

\clearpage

\appendix

\section{Proof of the Reduced Hypergraph Dealteration Lemma (Lemma \ref{lem:dealt})}
\label{sec:lemdealt}

In this section we prove the Reduced Dealteration Lemma.

We start from introducing new terminology. 

\begin{definition}
An elimination forest $F$ of a hypergraph $H$ is \emph{reduced}
if for each non-leaf vertex $u$ and each child $v$ of $u$,
$u$ is adjacent in $H$ to $F_v$. 
\end{definition}

The essential property of reduced elimination forests is stated in the following. 
\begin{lemma} \label{lem:redconnect}
Let $H$ be a hypergraph and let $F$ be a reduced elimination forest for $H$.
Then, for each $u \in V(H)$, $H[V(F_u)]$ is connected.  
Moreover, for each non-leaf vertex $u$ and each child $v$ of $u$, 
$H[F_v \cup \{u\}]$ is also connected. 
\end{lemma}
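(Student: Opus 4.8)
The plan is to prove both statements by induction on the height of the subtree $F_u$, using the definition of a reduced elimination forest together with the connectivity behaviour of elimination forests. Recall that in an elimination forest, if $u$ is a non-leaf with children $v_1, \dots, v_m$, then $V(F_u) = \{u\} \cup \bigcup_{i=1}^m V(F_{v_i})$, and — crucially — the sets $V(F_{v_i})$ are pairwise non-adjacent in $H$: any edge of $H$ meeting two of these subtrees would force an ancestor relationship between vertices in different subtrees, which is impossible. So the only ``glue'' between the subtrees $F_{v_i}$ is the vertex $u$ itself.

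**The two claims, handled together.**

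First I would prove the second statement, that $H[F_v \cup \{u\}]$ is connected for every non-leaf $u$ and child $v$ of $u$, since it feeds directly into the first. By the reducedness hypothesis, $u$ is adjacent in $H$ to some vertex $w \in F_v$, i.e., there is an edge $e \in E(H)$ with $u \in e$ and $e \cap V(F_v) \neq \emptyset$. In the induced subhypergraph $H[F_v \cup \{u\}]$ this edge survives (it meets the vertex set), so $u$ is connected to $w$. Now, by the inductive hypothesis applied to $v$ — whose subtree has strictly smaller height — $H[V(F_v)]$ is connected. One must observe that connectivity of $H[V(F_v)]$ transfers to connectivity within $H[F_v \cup \{u\}]$: edges of $H[V(F_v)]$ are of the form $V(F_v) \cap e'$, and the corresponding edge in $H[F_v \cup \{u\}]$ is $(F_v \cup \{u\}) \cap e' \supseteq V(F_v) \cap e'$, so any path inside $H[V(F_v)]$ remains a path in $H[F_v \cup \{u\}]$. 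Hence every vertex of $F_v$ reaches $w$ and thence $u$, giving connectivity of $H[F_v \cup \{u\}]$.

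**Assembling the first claim.**

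For the first statement, fix $u$. If $u$ is a leaf, $H[V(F_u)] = H[\{u\}]$ is trivially connected (recall the no-isolated-vertex convention guarantees $u$ lies in some edge, and the single-vertex induced hypergraph is connected). If $u$ is a non-leaf with children $v_1, \dots, v_m$, then for each $i$ the second statement (just proved, or equivalently the inductive hypothesis) gives that $H[F_{v_i} \cup \{u\}]$ is connected. Since $V(F_u) = \bigcup_{i=1}^m (F_{v_i} \cup \{u\})$ and all these sets share the common vertex $u$, a union of connected induced subhypergraphs sharing a common vertex is connected — one only needs to check, again via the edge-restriction observation, that edges and paths within each $H[F_{v_i} \cup \{u\}]$ persist in the larger $H[V(F_u)]$, which they do because $F_{v_i} \cup \{u\} \subseteq V(F_u)$. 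This completes the induction.

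**Where the care is needed.**

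The routine arithmetic is trivial; the only genuinely delicate point is being careful about the induced-subhypergraph operation: an edge of $H[U']$ for $U' \subseteq U$ is a \emph{truncation} $U' \cap e$, not an edge of $H[U]$ verbatim, so ``connectivity is monotone under enlarging the induced vertex set'' needs the one-line justification that $U' \cap e \subseteq U \cap e$ and both are realized as single edges in the respective induced hypergraphs, hence any $H[U']$-path lifts to an $H[U]$-path on the same vertex sequence. I expect this bookkeeping — rather than any combinatorial obstacle — to be the main thing to get right, and it is exactly the place where the non-adjacency of sibling subtrees (so that $u$ really is the unique connection point) makes the reducedness hypothesis both necessary and sufficient.
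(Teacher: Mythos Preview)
Your proposal is correct and follows essentially the same approach as the paper: a bottom-up induction on $F$, using the reducedness hypothesis to attach $u$ to each $F_v$ and the inductive hypothesis that $H[V(F_v)]$ is connected. You are considerably more careful than the paper about the induced-subhypergraph bookkeeping (the paper's proof is a three-line sketch), and your remark about non-adjacency of sibling subtrees, while true, is not actually needed for the argument.
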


\begin{proof}
By bottom up induction on $F$.
If $u$ is a leaf then the statement is obvious. 
Otherwise, the tree factor of each child $v$ of $u$ is a connected
subset and $u$ itself is connected to each $F_v$ by definition
of a reduced forest. 
\end{proof}

In what follows, we fix a partition $(Red,X,Blue)$ of $V(H)$
so that $H$ has no hyperedge $e$ containing both a red and a blue vertex.
We refer to the vertices of $Red$ as \emph{red vertices} and to the vertices
of $Blue$ as \emph{blue vertices}. We say that a set $U \subseteq Red \cup Blue$
is \emph{monochromatic} if either $U \subseteq Red$ or $U \subseteq Blue$. 
The reasoning provided below will be mostly symmetric for colours. 
For $u \in Red \cup Blue$, an important part of such a reasoning is which 
of $Red,Blue$ is of the same colour as $u$ and which of the opposite one.
For this purpose, we introduce the notation $Native(u)$ and $Foreign(u)$.
In particular, if $u \in Red$ then $Native(u)=Red$ and $Foreign(u)=Blue$. 
If, on the other hand, $u \in Blue$ then $Native(u)=Blue$ and 
$Foreign(u)=Red$.

\begin{lemma} \label{lem:redmono}
Let $H$ be a hypergraph and let $F$ be a reduced elimination forest for $H$.
Let $U \subseteq Red \cup Blue$. 
If there is $u \in V(H)$ such that $U=F_u$ then
$U$ is monochromatic. Furthermore, if there is $u \in V(H)$
and a child $v$ of $U$ such that $U=F_v \cup \{u\}$  then $U$
is also monochromatic. 
\end{lemma}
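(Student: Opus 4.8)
The plan is to deduce both parts of the statement directly from the connectivity guarantee of Lemma~\ref{lem:redconnect} together with the defining property of the partition $(Red,X,Blue)$: no hyperedge of $H$ contains both a red and a blue vertex. The one real idea needed is that a vertex set contained in $Red \cup Blue$ whose induced subhypergraph is connected cannot see both colours.

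First I would isolate this as the core observation. Take any $U \subseteq Red \cup Blue$ and split it as $U = A \cup B$ with $A = U \cap Red$ and $B = U \cap Blue$ (a disjoint union). Each edge of $H[U]$ has the form $U \cap e$ for some $e \in E(H)$; were it to meet both $A$ and $B$, the edge $e$ would contain a red vertex and a blue vertex, contradicting the assumption on $(Red,X,Blue)$. So every edge of $H[U]$ lies inside $A$ or inside $B$, and consequently no path in $H[U]$ can go from a vertex of $A$ to a vertex of $B$. Hence if $A$ and $B$ are both nonempty, $H[U]$ is disconnected; equivalently, $U$ is monochromatic whenever $H[U]$ is connected.

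Given the observation, both claims fall out at once. If $U = F_u$ for some $u \in V(H)$, then $H[V(F_u)]$ is connected by the first part of Lemma~\ref{lem:redconnect}, and since $U \subseteq Red \cup Blue$ by hypothesis, $U$ is monochromatic. If $U = F_v \cup \{u\}$ with $v$ a child of $u$ — so $u$ is a non-leaf vertex — then $H[F_v \cup \{u\}]$ is connected by the second part of Lemma~\ref{lem:redconnect}, and the same argument applies, yielding that $U$ is monochromatic.

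I do not expect any genuine obstacle here; the only points requiring care are to use the correct notion of connectivity for an induced subhypergraph (paths in the primal graph on $U$) and to observe that the operation $e \mapsto U \cap e$ cannot manufacture an edge spanning the two colour classes out of an edge that does not. Everything reduces to Lemma~\ref{lem:redconnect}, after which monochromaticity is forced purely by the colour-separation property of the partition.
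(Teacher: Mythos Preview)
Your proposal is correct and follows exactly the paper's own argument: invoke Lemma~\ref{lem:redconnect} to get that $H[U]$ is connected, observe that no hyperedge can meet both $U \cap Red$ and $U \cap Blue$, and conclude one of these sets is empty. The paper's proof is just a terser version of what you wrote.
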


\begin{proof}
By Lemma \ref{lem:redconnect}, $U$ is connected. 
On the other hand, there is no edge connecting $U \cap Red$
and $U \cap Blue$. Hence, one of these sets must be empty.
\end{proof}

In order to proceed, we need to recall, for a constant factor $B$ of $F$,
the notions of the root $rt(B)$ of $B$, the parent $pr(B)$ of the appendices
of $B$ and the spine $spine(B)$.
Let us note that a context factor is uniquely determined
by its spine and the appendices. 
Indeed, let $P$ be a directed path of $F$ whose final vertex 
is not a leaf. Let $W$ be a non-empty subset of children of the 
final vertex of $P$. 
Then the context factor $Cont_F(P,W)$ is obtained by adding
to $P$ each $V(F_u)$ such that $u$ is either a child of a non-final
vertex of $P$ other than its successor on $P$ or a a child of the final
vertex of $P$ that does not belong to $W$. 

\begin{lemma} \label{lem:monspine}
Suppose that $F$ is reduced. 
Let $P$ be a directed path of $F$ whose final vertex 
is not a leaf. Let $W$ be a non-empty subset of children of the 
final vertex of $P$. 
Suppose that $Cont_F(P,W) \subseteq Red \cup Blue$
and $V(P)$ is monochromatic. 
Then $Cont_F(P,W)$ is monochromatic. 
\end{lemma} 

\begin{proof}
It is enough to show that
for each $u \in V(P)$ and for each child $v$ of $u$ such 
that $V(F_v) \subseteq Cont_F(P,W)$,
$F_v \cup \{u\}$. 
Since, by definition, $F_v \cup \{u\} \subseteq Red \cup Blue$,
this is immediate from Lemma \ref{lem:redmono}.
\end{proof}

\begin{definition} \label{def:contmono}
Let $U \subseteq Red \cup Blue$ be a context factor of an elimination forest $F$ 
of a hypergraph $H$
Let $P$ and $W$ be the spine and the appendices of $U$ 
respectively . 
We call a maximal monochromatic subpath of $P$ a (colour) \emph{interval}.
In other words $P=P_1+ \dots+P_q$ where $P_1, \dots, P_q$ are the intervals of $P$. 
For each $1 \leq i \leq q$, we introduce a set $W_i$ of \emph{local} appendices 
as follows. 
$W_q=W$ and for each $1 \leq i \leq q-1$, $W_i$ is a singleton containing  the first vertex of $P_{i+1}$, 
We call the factors $Cont_F(P_i,W_i)$ for $1 \leq i \leq q$ the \emph{interval factors} of $U$. 
\end{definition}

\begin{corollary} \label{cor:monspine}
With the notation as in Definition \ref{def:contmono}, and,
under assumption that $F$ is reduced, each interval factor of $U$
is monochromatic. 
\end{corollary}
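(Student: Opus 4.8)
The plan is to obtain the corollary as an immediate consequence of Lemma~\ref{lem:monspine}, applied once to each interval factor $Cont_F(P_i,W_i)$ of $U$. Fix $i \in \{1,\dots,q\}$ and let $z_i$ denote the final vertex of $P_i$. To invoke Lemma~\ref{lem:monspine} with the directed path $P_i$ and the set $W_i$ in place of $P$ and $W$, I must check: (1) $z_i$ is not a leaf of $F$; (2) $W_i$ is a non-empty set of children of $z_i$; (3) $Cont_F(P_i,W_i) \subseteq Red \cup Blue$; and (4) $V(P_i)$ is monochromatic.

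Conditions (1), (2) and (4) are read off directly from Definition~\ref{def:contmono}. For $i<q$, $W_i$ is the singleton consisting of the first vertex of $P_{i+1}$, which — since $P = P_1 + \dots + P_q$ is a directed path of $F$ — is a child of $z_i$; hence $W_i$ is non-empty and $z_i$ is not a leaf. For $i=q$, $z_q$ is the final vertex of the whole spine $P$ and $W_q = W$, which is non-empty because $U$ is a context factor, so again $z_q$ is not a leaf. Condition (4) is the defining property of an interval: $P_i$ is a maximal monochromatic subpath of $P$.

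It remains to verify (3), and since $U = Cont_F(P,W) \subseteq Red \cup Blue$ by assumption it suffices to show the inclusion $Cont_F(P_i,W_i) \subseteq Cont_F(P,W)$. Unwinding the definition, $Cont_F(P_i,W_i)$ is $V(P_i)$ together with $V(F_u)$ for each child $u$ of a non-final vertex of $P_i$ distinct from its successor on $P_i$, and $V(F_u)$ for each child $u \notin W_i$ of $z_i$. Every vertex of $P_i$ lies on $P$, hence in $Cont_F(P,W)$. A non-final vertex $v$ of $P_i$ is also a non-final vertex of $P$ with the same successor on $P_i$ as on $P$, so the subtrees it contributes are already part of $Cont_F(P,W)$. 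Finally, at $z_i$: if $i<q$, the successor of $z_i$ on $P$ is exactly the unique element of $W_i$, so every child $u \notin W_i$ of $z_i$ is a child of a non-final vertex of $P$ other than its successor, giving $V(F_u) \subseteq Cont_F(P,W)$; if $i=q$, then $W_q=W$ and every child $u \notin W$ of $z_q$ contributes $V(F_u) \subseteq Cont_F(P,W)$ by the definition of $Cont_F(P,W)$. This establishes the inclusion, so all four hypotheses hold and Lemma~\ref{lem:monspine} yields that $Cont_F(P_i,W_i)$ is monochromatic; as $i$ was arbitrary, the corollary follows.

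The main (though still routine) point of care is the inclusion $Cont_F(P_i,W_i) \subseteq Cont_F(P,W)$, which turns on correctly identifying, for each vertex of $P_i$, which child plays the role of ``the successor'' with respect to the sub-path $P_i$ versus the full spine $P$; once that bookkeeping is settled, the rest is a direct appeal to Lemma~\ref{lem:monspine}.
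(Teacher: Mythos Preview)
Your proof is correct and follows the same approach as the paper, which simply writes ``Immediate from Lemma~\ref{lem:monspine}.'' You have supplied the routine bookkeeping the paper omits: verifying the hypotheses of Lemma~\ref{lem:monspine} for each $Cont_F(P_i,W_i)$, in particular the inclusion $Cont_F(P_i,W_i)\subseteq Cont_F(P,W)=U\subseteq Red\cup Blue$.
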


\begin{proof}
Immediate from Lemma \ref{lem:monspine}.
\end{proof}

We are now going to formally introduce the notion of a swap:
a modification of $F$ that swaps two interval factors of $U$ 
and hence reduces the number of factors by two. 
We will next define a neutral swap as one that does not 
increase th $f$-width of the resulting forest. 
The main engine of the Reduced Dealteration Lemma is proving that
if $F$ is reduced and $U$ contains many interval factors then
$U$ admits a neutral swap. 
In this section we will prove the statement of the main engine.
We postpone to Section \ref{sec:dealtgather} the description of the way it is 
harnessed to prove the Reduced Dealteration Lemma. 

\begin{definition} \label{def:singleswap}
Continuing on Definition \ref{def:contmono}, 
assume that $q \geq 4$. 
Let $1<i<q-1$. 
Let $(u_1,u_2)$ be the edge from the last vertex of $P_{i-1}$ to
the first vertex of $P_i$. 
Let $(v_1,v_2)$ be the edge from the last vertex 
of $P_i$ to the first vertex of $P_{i+1}$.
Finally, let $(w_1,w_2)$ be the edge connecting the
last vertex of $P_{i+1}$ and the first vertex of $P_{i+2}$. 
Let $F'$ be obtained from $F$ by removal of edges 
$(u_1,u_2)$ , $(v_1,v_2)$, and $(w_1,w_2)$
and adding edges $(u_1,v_2)$, $(w_1,u_2)$, $(v_1,w_2)$. 
We call $F'$ a \emph{swap} of $F$ and the \emph{intervals
being swapped} are $P_i$ and $P_{i+1}$. 

We say that the swap is \emph{neutral} if $\fwidth(F') \leq \fwidth(F)$.
In particular, in this case that $U$ \emph{admits} a neutral swap
(of intervals $P_{i}$ and $P_{i+1}$). 
\end{definition}

\begin{figure} 
    \begin{center}
        \begin{tikzpicture}
        [
        mycircle/.style={
         circle,
         draw=black,
         fill=black,
         fill opacity = 1,
         text opacity=1,
         inner sep=0pt,
         minimum size=5pt,
         font=\small}
        ]

            \node[mycircle] at (0.3,0) (v1) {};
            \node[mycircle] at (1.7,0) (v2) {};
            \node[mycircle] at (2.3,0) (v3) {};
            \node[mycircle] at (3.7,0) (v4) {};
            \node[mycircle] at (4.3,0) (v5) {};
            \node[mycircle] at (5.7,0) (v6) {};
            \node[mycircle] at (6.3,0) (v7) {};
            \node[mycircle] at (7.7,0) (v8) {};
            
            \draw[red, very thick] (v1) -- (v2);
            \draw[very thick] (v2) -- (v3);
            \draw[blue, very thick] (v3) -- (v4);
            \draw[very thick] (v4) -- (v5);
            \draw[red, very thick] (v5) -- (v6);
            \draw[very thick] (v6) -- (v7);
            \draw[blue, very thick] (v7) -- (v8);

            \node at (1,-0.5) (p1) {$P_{i-1}$};
            \node at (3,-0.5) (p2) {$P_{i}$};
            \node at (5,-0.5) (p3) {$P_{i+1}$};
            \node at (7,-0.5) (p4) {$P_{i+2}$};

            \node[mycircle] at (0.3,-1.5) (v1) {};
            \node[mycircle] at (1.7,-1.5) (v2) {};
            \node[mycircle] at (2.3,-1.5) (v3) {};
            \node[mycircle] at (3.7,-1.5) (v4) {};
            \node[mycircle] at (4.3,-1.5) (v5) {};
            \node[mycircle] at (5.7,-1.5) (v6) {};
            \node[mycircle] at (6.3,-1.5) (v7) {};
            \node[mycircle] at (7.7,-1.5) (v8) {};
            
            \draw[red, very thick] (v1) -- (v2);
            \draw[very thick] (v2) edge[bend right = 20] (v5);
            \draw[blue, very thick] (v3) -- (v4);
            \draw[very thick] (v6) edge[bend right = 20] (v3);
            \draw[red, very thick] (v5) -- (v6);
            \draw[very thick] (v4) edge[bend right = 20] (v7);
            \draw[blue, very thick] (v7) -- (v8);

            \node at (1,-2) (p1) {$P_{i-1}$};
            \node at (3,-2) (p2) {$P_{i}$};
            \node at (5,-2) (p3) {$P_{i+1}$};
            \node at (7,-2) (p4) {$P_{i+2}$};
        \end{tikzpicture}
    \end{center}
    \caption{A swap $F'$ of $F$. }
    \label{fig:swap}
\end{figure}

\begin{restatable}{lemma}{NeutralSwap}\label{lem:existsneutral}
There is a monotone function $\kappa$ such that the following holds.
Let $F$ be a reduced elimination forest of $H$ of minimal width (i.e., $\fwidth(F)=\fwidth(H)$),
and let $U$ be a coloured context factor.
If the spine of $U$ has more than $\kappa\bigl(|X|,\fwidth(H),\rank(H),\Delta(H)\bigr)$ monochromatic intervals,
then $U$ admits a neutral swap.
\end{restatable}
\NeutralSwap*

We now proceed proving Lemma \ref{lem:existsneutral}. 
In what follows throughout the proof, the notation is as in the lemma,
unless otherwise explicitly specified.

First of all, we extend our terminology about bags. 
\begin{definition} [{\bf $X$-bags, native bags, and foregin bags}]
Let $u \in V(H)$. We refer to $bag_F(u) \cap X$ as the $X$-bag of $u$
and denote it as $X-bag(u)$. 
For $u \in Red \cup Blue$, we refer to $bag_F(u) \cap Native(u)$
as the \emph{native bag}
of $u$ denoted by $nbag_F(u)$ and to $bag_F(u) \cap Foreign(u)$ as the \emph{foreign}
bag of $u$ denoted by $fbag_F(u)$.

\end{definition}

\begin{definition}
Let $W$ be a connected component of $H[bag_F(u)]$. 
We say that $W$ is an $X$-\emph{component} if $W \cap X \neq \emptyset$. 
Note that otherwise $W$ is monochromatic as no hyperedge of $H$ connects
both  a red and a blue vertex. If $W$ is of the same colour as $u$
then we say that $W$ is a \emph{native component} of $u$. 
Otherwise, $W$ is a \emph{foreign component} of $u$. 
\end{definition}

Based on the last definition we partition $bag_F(u)$ 
into unions of connected components subject to their classification. 

\begin{definition}
We call the union of all $X$-components of $H[bag_F(u)]$
the $X$-ball of $u$ (w.r.t. $F$) and denote it by $X-ball_F(u)$. 
Further on, we refer to the union of all native components 
of $H[bag_F(u)]$ as the native ball of $u$ (w.r.t. $F$) and
denote it by $nball_F(u)$. 
Finally, we refer to the union of foreign components of $bag_F(u)$
as the \emph{foreign ball} of $u$ (w.r.t. $F$) and denote it by $fball_F(u)$. 
\end{definition}

For the reasoning that follows, it is 
important that the $X$-balls, native balls and foreign
balls of a bag are edge disjoint. It thus follows from the monotone weak
separability of $f$ that 
\begin{proposition} \label{prop:strongadd}
For each $u \in V(H)$,
$f(bag_F(u) = g_f(f(X-ball(u)), f(nball_F(u)), f(fball_F(u)))$.    
\end{proposition}

\begin{definition}
With the notation as in Definition \ref{def:contmono},
let $P'$ be a maximal subpath of such that for each $u \in V(P)$,
$X-ball_F(u)$ is the same. We refer to $P'$ an $X$-interval of $U$. 
\end{definition}

Note that $X$-intervals demonstrate a partition of $P$ alternative
to the colour interval. In  particular, it is always guaranteed 
that the number of $X$-intervals is bounded. 

\begin{lemma} \label{lem:boundedxintervals}
The number of $X$-intervals of $U$ is at most 
$\kappa_0(|X|,\fwidth(H),rank(H), \Delta(H))$ where
$\kappa_0$ is a monotone function.  
\end{lemma}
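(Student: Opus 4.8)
The plan is to show that as we walk down the spine $P$ of the context factor $U$, the set $X\text{-}ball_F(u)$ can change only boundedly often, and that it takes only boundedly many distinct values. First I would observe that $X\text{-}ball_F(u)$ is a union of connected components of $H[bag_F(u)]$, each meeting $X$, so $X\text{-}ball_F(u)\subseteq N_H[X]\cap bag_F(u)$; since every such component contains a vertex of $X$ and $|X|$ is fixed, the number of such components is at most $|X|$. The size of $X\text{-}ball_F(u)$ is then bounded in terms of $|X|$, $\tw(\mathbf{t})$ (equivalently the bag size, hence $\fwidth(H)$ via the size bound $\beta$), $\rank(H)$ and $\degree(H)$: a component meeting $X$ lives inside the bag, whose size is bounded by the bounded-size property from $\fwidth(H)$ and $\rank(H)$. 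Consequently there are only a bounded number — call it $N$, a monotone function of $|X|,\fwidth(H),\rank(H),\degree(H)$ — of possible \emph{values} the set $X\text{-}ball_F(u)$ can take as $u$ ranges over $V(P)$.

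The key step is then to bound the number of maximal subpaths of $P$ on which $X\text{-}ball_F(u)$ is constant. Here I would use the standard monotonicity structure of bags along a root-to-leaf path in an elimination forest (equivalently, along a branch of the induced tree decomposition): for a vertex $x\in X$, the set of spine vertices $u$ whose bag contains $x$ forms a \emph{contiguous} subpath of $P$ (this is the connectivity condition of the tree decomposition $(\mathbf{t}$-analogue$)$ $(F,bag_F)$, restricted to the path $P$). Therefore, for each fixed $x\in X$, the indicator ``$x\in bag_F(u)$'' changes at most twice along $P$, so the coarser partition of $P$ according to the set $bag_F(u)\cap X = X\text{-}bag(u)$ has at most $2|X|+1$ blocks. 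Since $X\text{-}ball_F(u)$ is determined by $H[bag_F(u)]$ and in particular refines no finer than the data needed to decide which components hit $X$, I would argue that within one block where $X\text{-}bag(u)$ is constant, the value $X\text{-}ball_F(u)$ can still change — but only when a vertex is added to or removed from $bag_F(u)$ in a way that merges/splits a component touching $X$. To control this I would invoke the fact (from the earlier development, e.g.\ the analysis underlying the $X$-ball construction and $\cg$ density) that the number of distinct sets $X\text{-}ball_F(u)$ encountered is at most $N$, and that along a monotone path each value is taken on a contiguous stretch as well — more precisely, once a vertex leaves a bag along a downward path it never returns, so the sequence $(X\text{-}ball_F(u))_{u\in V(P)}$ has the property that each of its $N$ possible values occupies at most one maximal interval. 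That gives at most $N$ $X$-intervals, and setting $\kappa_0(|X|,\fwidth(H),\rank(H),\degree(H)) := N$ finishes the proof.

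The main obstacle I anticipate is the claim in the last step that each distinct value of $X\text{-}ball_F(u)$ occupies a single contiguous stretch of $P$: this is the place where one genuinely needs the monochromaticity of intervals together with the downward-monotone behaviour of bags, and it may require a careful case analysis of how a component touching $X$ can appear, grow, and disappear along the spine — in particular ruling out that the ``same'' $X$-ball value recurs after an intermediate change. If a fully clean contiguity argument is not available, the fallback is the weaker but still sufficient bound: the partition refining both the $X\text{-}bag(u)$-partition (at most $2|X|+1$ blocks) and using that within each such block the bag changes by at most $\beta(\fwidth(H),\rank(H))$ vertices in total across the block, each change altering $X\text{-}ball_F(u)$ at most once, yielding a bound of the form $(2|X|+1)\cdot(\beta(\fwidth(H),\rank(H))+1)$, which is again a monotone function of the stated parameters and hence suffices.
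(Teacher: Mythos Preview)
Your main line of argument---that each distinct value of $X\text{-}ball_F(u)$ occupies a single contiguous stretch of the spine---does not go through. The paper explicitly points out that a sequence such as $\{v_1\},\{v_1,v_2\},\{v_1\}$ is \emph{perfectly possible}: even though each individual vertex enters and leaves the bag at most once (the tree decomposition connectivity you invoke), the $X$-ball is a \emph{connectivity}-defined object, and a coloured vertex can join the $X$-component of some $x\in X$ when a new spine vertex creates a bridge, then drop out of that component when an ancestor on the connecting path leaves the bag---while remaining in the bag itself. So the same $X$-ball value can recur, and ``once it leaves it never returns'' simply fails at the level of $X$-balls.

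Your fallback is also broken, and in a telling way. You claim that within a block of constant $X$-bag the bag changes by at most $\beta(\fwidth(H),\rank(H))$ vertices in total; this is false, since every spine vertex enters the bag in turn, and the spine can be arbitrarily long. More damningly, your fallback bound does not depend on $\degree(H)$ at all, yet the paper stresses that this lemma is precisely \emph{the} place where bounded degree is essential. The actual argument needs $\degree(H)$ to bound the universe from which $X$-balls are drawn: one shows every vertex of any $X$-ball lies within graph distance $p-1$ of $X$ in $H$ (where $p=\beta(\fwidth(H),\rank(H))$), so all $X$-balls are size-$\le p$ subsets of a set of size $\le |X|\cdot\sum_{i<p}\degree(H)^i$. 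Your containment $X\text{-}ball_F(u)\subseteq N_H[X]$ is too weak---it should be the $(p{-}1)$-neighbourhood. To handle the non-contiguity, the paper then defines \emph{peaks} (indices $i$ with $X_i\setminus X_{i+1}\neq\emptyset$), proves all peaks are distinct via a path-containment argument using the tree decomposition connectivity of an entire path $Q_v\subseteq X_i$, and partitions the sequence into $\subseteq$-chains each ending in a peak and each of length $\le p$.
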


The proof of Lemma \ref{lem:boundedxintervals}
is provided is Subsection \ref{sec:xintervalbounded}.
The critical role of the lemma is that it is 
{\bf the only} statement in the whole proof of our result
where the bounded degree is essential. 

Lemma \ref{lem:boundedxintervals} in fact demonstrates that $U$ 
can be represented as the disjoint union of a bounded number 
of $X$-\emph{homogenous} context factors, the corresponding notion
is defined below. 

\begin{definition} \label{def:homofactor}
Let $U_0 \subseteq Red \cup Blue$ be a context factor and let
$P_0$ be the spine of $U_0$. We say that $U_0$ is $X$-\emph{homogenous}
if all the vertices of $P_0$ have the same $X$-ball. 
\end{definition}

\begin{lemma} \label{lem:homogenousneutral}
There is a function $\kappa_1$ such that the following holds. 
Let $U_0$ be an $X$-homogenous context factor with spine $P_0$ that
has at least $\kappa_1(\fwidth(H),rank(H))$ colour factors. 
Then $U_0$ admits a neutral swap of two non-terminal (neither first nor last) intervals 

\end{lemma}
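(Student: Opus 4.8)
The plan is to use the $X$-homogeneity of $U_0$ to push the whole argument inside a bounded ``interface'', and then to locate two consecutive non-terminal intervals whose exchange alters only boundedly many bags, each in a way that does not raise its $f$-value, so that neutrality follows from invariant locality and weak additivity. Two observations put everything in a bounded frame. First, $X$-homogeneity gives that $X-ball_F(u)$ is a single set $\Xi$ for every spine vertex $u$ of $U_0$, and since $F$ has minimal width $f(\bag_F(u))=\fwidth(H)$, so by bounded size $|\bag_F(u)|\le\beta(\fwidth(H),rank(H))$, with the same bound on $|\Xi|$, $|nball_F(u)|$ and $|fball_F(u)|$; by additivization $f(\bag_F(u))=f(\Xi)+f(nball_F(u))+f(fball_F(u))$. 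Second, any vertex of $\bag_F(u)$ that lies outside $U_0$ is an ancestor of $rt(U_0)$ adjacent to a descendant of $rt(U_0)$, hence belongs to the adhesion $\Lambda:=adh_F(rt(U_0))$; thus the part of every spine bag lying outside $U_0$ is contained in the one bounded set $\Lambda$.

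Next I would record the effect of exchanging two consecutive non-terminal intervals $P_i,P_{i+1}$ of the spine $P_0=P_1+\dots+P_q$, say $P_i$ red and $P_{i+1}$ blue. The only bags that change are those of the spine vertices of $P_i$ and $P_{i+1}$: bags of vertices strictly above or strictly below $U_0$, of spine vertices of the other intervals, and of side-tree vertices are literally unchanged; for the side trees of $P_i$ and $P_{i+1}$ this uses that they are monochromatic (\Cref{cor:monspine}) and that no hyperedge of $H$ joins a red and a blue vertex, so the candidate ancestors newly created by the exchange are never adjacent to them. For $u\in V(P_i)$ one gets $\bag_{F'}(u)=(\bag_F(u)\setminus Q_u)\cup B^{*}$, where $B^{*}\subseteq V(P_{i+1})$ are the vertices of $P_{i+1}$ adjacent in $H$ to a descendant in $F$ of the first vertex of $P_{i+2}$ (so $B^{*}\subseteq\bag_F(\text{first vertex of }P_{i+2})$ and is bounded), and $Q_u$ are the vertices above $P_i$ that were in $\bag_F(u)$ only through adjacency to $V(P_{i+1})$ or its side trees; both $B^{*}$ and $Q_u$ are blue, and a symmetric description with the colours exchanged holds for $w\in V(P_{i+1})$.

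Finally I would pick $i$ so that every modified bag of $F'$ still has $f$-value at most $\fwidth(H)$. Attach to each interval $P_j$ a \emph{type} recording the isomorphism type of the hypergraph induced by the union of the bags of the first and last vertices of $P_j$, decorated with the identities of the vertices lying in the fixed sets $\Xi$ and $\Lambda$, with markers for the spine vertices and the endpoints, with the adhesions to the two neighbouring intervals, and with the subsets of these bags adjacent in $H$ to something strictly below, respectively above, $U_0$. By the bounds above the number of types is at most a computable $g(\fwidth(H),rank(H))$; taking $\kappa_1(\fwidth(H),rank(H))$ large enough in terms of $g$ and using that colours alternate along $P_0$, a counting argument over the middle intervals produces a non-terminal $i$ for which $P_{i-1}$ and $P_{i+1}$ have the same type, $P_i$ and $P_{i+2}$ have the same type, and the interfaces between consecutive intervals match. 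For this $i$ the identification of $V(P_{i+1})$ with $V(P_{i-1})$ and of $V(P_i)$ with $V(P_{i+2})$ supplied by the matched types carries each modified bag of $F'$ to a bag of $F$; it fixes $\Xi$ pointwise, so the $X$-ball of every modified bag stays $\Xi$ and its decomposition into $X$-ball, native ball and foreign ball is preserved. Invariant locality then gives that $f$ agrees on each modified bag and its image, and weak additivity applied to these decompositions shows that no bag of $F'$ exceeds $\fwidth(F)=\fwidth(H)$, so the exchange of $P_i$ and $P_{i+1}$ is a neutral swap.

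The step I expect to be the main obstacle is this last one: designing the type so that the counting argument genuinely forces a configuration whose exchange produces isomorphic bags, and then executing the bookkeeping. The delicate point is that, although the outside-$U_0$ part of every spine bag lies in the fixed bounded set $\Lambda$, distinct intervals still interact with $\Lambda$ through different patterns, and after the exchange $V(P_{i+1})$ moves above $V(P_i)$ while $V(P_i)$ loses $V(P_{i+1})$ from above; one has to verify that no $H$-adjacency is created between $V(P_{i+1})$ and the part below $U_0$, or between $V(P_i)$ and $\Lambda$, beyond what the matched types of $P_{i-1}$ and $P_{i+2}$ already witness. The no-red/blue-edge property of $(Red,X,Blue)$ is precisely what keeps these interactions inside the bounded interface.
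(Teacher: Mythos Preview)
Your high-level framing is right—the $X$-ball is a fixed set $\Xi$ along the spine, and additivization splits each bag into $\Xi$, native ball, and foreign ball—but the plan breaks at two points.

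\textbf{The counting step fails.} You ask for an index $i$ with $type(P_{i-1})=type(P_{i+1})$ and $type(P_i)=type(P_{i+2})$, i.e.\ a repetition at distance~$2$ in the sequence of pair-types $(type(P_j),type(P_{j+1}))$. A bounded alphabet does not force this: the periodic sequence $a,b,c,d,a,b,c,d,\dots$ over four types has no such repetition, so no function of the number of types alone yields $\kappa_1$ for your condition. The paper needs a genuinely different combinatorial statement here (Theorem~\ref{theor:exmut}): in any sequence of at least $4|I|$ reals drawn from a set $I$ there is a \emph{mutable quadruple} $b_1,b_2,b_3,b_4$ with $b_3\ge b_1$ and $b_2\ge b_4$. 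Inequalities, not equalities, are what one can always find.

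\textbf{The identification step is ill-posed.} You propose to identify $V(P_i)$ with $V(P_{i+2})$ and argue that each modified bag of $F'$ is carried to a bag of $F$. But intervals may have different lengths, and even within one interval the native balls $nball_F(u)$ vary with $u$ (they involve the side trees hanging off the specific spine vertex). So $\bag_{F'}(u)$ and $\bag_F(\phi(u))$ will in general differ in their native part and are not isomorphic; invariant locality gives you nothing. The paper sidesteps this entirely by proving two invariants you do not use: the \emph{foreign} ball is constant along each colour interval (Lemma~\ref{lem:neutralaux1}), and after swapping $P_{i+1},P_{i+2}$ the foreign ball of every $u\in V(P_{i+1})$ becomes \emph{exactly} $fball_F(P_{i+3})$, while for $u\in V(P_{i+2})$ it becomes $fball_F(P_i)$ (Claim~\ref{clm:neutralaux28}); the $X$-ball and native ball are literally unchanged (Claims~\ref{clm:neutralaux27},~\ref{clm:neutralaux29}). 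Hence $f(\bag_{F'}(u))$ differs from $f(\bag_F(u))$ only in the foreign-width summand, and neutrality reduces to the two inequalities $fw_F(P_{i+3})\le fw_F(P_{i+1})$ and $fw_F(P_i)\le fw_F(P_{i+2})$. Since the foreign widths take only boundedly many values (Lemmas~\ref{lem:msoaux1},~\ref{lem:msoaux2}), Theorem~\ref{theor:exmut} supplies such a quadruple once $q$ exceeds four times that bound.
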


The proof of Lemma \ref{lem:homogenousneutral}
is provided in Subsection \ref{sec:xhomogenousneutral}.
Now, we are ready to prove Lemma \ref{lem:existsneutral}.

\begin{proof}[Proof of \Cref{lem:existsneutral}] 
\sloppy
Let $r$ be the number of colour intervals of $P$. 
Assume that $r \geq \kappa_1(\fwidth(H),rank(H)) \cdot \kappa_0(|X|,\fwidth(H),rank(H), \Delta(H))$
where $\kappa_0$ is as in Lemma \ref{lem:boundedxintervals}
and $\kappa_1$ is as in Lemma \ref{lem:homogenousneutral}. 
We prove that in this case $U$ admits a neutral swap. 
Let $P'_1, \dots, P'_q$ be the partition of $P$ into 
$X$-intervals. According to Lemma \ref{lem:boundedxintervals},
$q \leq \kappa_0(|X|,\fwidth(H),rank(H), \Delta(H))$. 
It is not hard to see that $U$ is the disjoint union of context factors
$U_1,\dots, U_q$ with $P'_i$ and $W'_i$ be the respective spine and appendices
for each $U_i$ where $W'_i$ is a singleton consisting of the first vertex of $P'_{i+1}$
if $i<q$ and $W'_q=W$.
For $1 \leq i \leq q$, let $r_i$ be the number of colour intervals
of $U_i$. Clearly, $r \leq r_1+ \dots r_q$ as each colour interval 
of each $U_i$ is a subset of exactly one colour interval $U$. 
It thus follows from the pigeonhole principle that some $U_i$ has 
at least $\kappa_1(\fwidth(H),rank(H))$ colour intervals. 
By Lemma \ref{lem:homogenousneutral} and the definition of a neutral
swap, $U_i$ admits a swap of two non-terminal colour intervals
that does not increase the \fwidth of the resulting elimination forest $F'$. 
We note that each non-terminal colour interval of $U_i$ is also 
a colour interval of $U$ and two such consecutive intervals are also consecutive
for $U$. We conclude that their swap is also a swap for $U$ that does not increase
the \fwidth of $F'$ and hence a neutral swap of $U$ as required. 
\end{proof}

\subsection{Bounding the Number of $X$-Intervals (Lemma \ref{lem:boundedxintervals}) } \label{sec:xintervalbounded}
Let $\beta=\beta[f]$ be as in the bounded size property of manageable functions and
let us define for this section $p:=\beta(\fwidth(H),rank(H))$. 

\begin{lemma}\label{lem:boundedint1}
For each $u \in V(P)$, 
each vertex $w \in X-ball_F(u)$ is at distance at most $p-1$
from some vertex of $X$. 
\end{lemma}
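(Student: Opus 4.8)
The plan is to read the bound off from the bounded size property of $f$ after a single routine connectivity observation, so there is essentially no hard step here.

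First I would use that $F$ has minimal width. Since $\fwidth(F)=\fwidth(H)$ and $(F,bag_F)$ is the tree decomposition induced by $F$, every $u\in V(H)$ satisfies $f_H(bag_F(u))\le \fwidth(H)$, and the bounded size property then yields $|bag_F(u)|\le \beta(\fwidth(H),rank(H))=p$. This is the only quantitative input; everything else is bookkeeping about a vertex set of size at most $p$.

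Now fix $u\in V(P)$ and $w\in X-ball_F(u)$. By definition $w$ lies in some $X$-component $W$ of $H[bag_F(u)]$, i.e. a connected component of the induced subhypergraph that meets $X$; pick any $x\in W\cap X$. Since $W$ is a connected component, $w$ and $x$ are joined by a walk that stays inside $W$ and in which consecutive vertices share a hyperedge, and a shortest such walk is repetition-free, hence has at most $|W|$ vertices. As $W\subseteq bag_F(u)$ we get $|W|\le p$, so this walk traverses at most $p-1$ edges, which means $w$ is at distance at most $p-1$ from $x\in X$. (The same bound holds whether distance is measured in $H[bag_F(u)]$ or in $H$, since the walk lives in both.) This is exactly the claim; the only point that needs care is the off-by-one — a repetition-free walk on a vertex set of size at most $p$ uses at most $p-1$ edges — together with the observation that it is minimality of $F$ that licenses applying the bounded size property to $bag_F(u)$.
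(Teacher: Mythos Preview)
Your proof is correct and is essentially the same argument as the paper's: both pick an $X$-component $W$ of $H[bag_F(u)]$ containing $w$ and a vertex $x\in W\cap X$, take a path inside $W\subseteq bag_F(u)$, and use the bounded size property to cap its length by $p$ vertices, hence $p-1$ edges. You are slightly more explicit than the paper in justifying $|bag_F(u)|\le p$ via minimality of $F$, but the approach is identical.
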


\begin{proof}
Indeed, let $w \in X-ball_F(u)$. Then there is a connected component $W$ 
of $H[bag_F(u)]$ such that $W \cap X \neq \emptyset$ and $w \in W$. 
It follows that there is a path $Q$ between $w$ and some vertex $x \in X$
such that $V(Q) \subseteq bag_F(u)$. 
It follows from the bounded size condition that $p \geq |bag_F(u)| \geq |V(Q)|$.
In other words $Q$ contains at most $p-1$ edges and hence $w$ is at distance at most 
$p-1$ from $X$ as required. 
\end{proof} 

\begin{lemma} \label{lem:boundedint2}
Let ${\bf Y}$ be the set of all $X$-balls of the vertices of $P$. 
Then $|{\bf Y}| \leq (|X|\cdot \sum_{i=0}^{p-1} \Delta(H)^{p-1})^p$. 
\end{lemma}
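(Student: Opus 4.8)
The plan is to bound $|\mathbf{Y}|$ by showing that every $X$-ball which can occur along the spine $P$ lies inside one small, fixed vertex set of $H$, and then counting how many subsets of that set are small enough to be an $X$-ball. For the setup, recall that $p=\beta(\fwidth(H),\rank(H))$ and that $F$ has minimal width in the situation of \Cref{lem:existsneutral}, so $\fwidth(F)=\fwidth(H)$ and the bounded size property of $f$ gives $|bag_F(u)|\le p$ for every vertex $u$; since $X\text{-}ball_F(u)\subseteq bag_F(u)$ by definition, every element of $\mathbf{Y}$ has at most $p$ vertices. Moreover, by \Cref{lem:boundedint1}, every $X$-ball of a vertex of $P$ is contained in the set $N_X$ of all vertices of $H$ that lie at distance at most $p-1$ from some vertex of $X$.

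Next I would bound $|N_X|$. Fixing a vertex $x\in X$, a level-by-level count shows that at most $\Delta(H)^{i}$ vertices lie at distance exactly $i$ from $x$ (each vertex being adjacent to at most $\Delta(H)$ others, as $H$ has bounded degree), so at most $\sum_{i=0}^{p-1}\Delta(H)^{i}\le\sum_{i=0}^{p-1}\Delta(H)^{p-1}$ vertices lie within distance $p-1$ of $x$; ranging over the at most $|X|$ possible choices of $x$ yields $|N_X|\le M$, where $M:=|X|\sum_{i=0}^{p-1}\Delta(H)^{p-1}$. It then remains only to count: each element of $\mathbf{Y}$ is a subset of the $M$-element set $N_X$ of cardinality at most $p$, and there are at most $M^{p}$ such subsets --- for instance, one can encode each nonempty subset by the length-$p$ sequence listing its vertices in some fixed order and repeating the last one to pad --- which is precisely the claimed bound $(|X|\cdot\sum_{i=0}^{p-1}\Delta(H)^{p-1})^{p}$.

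I do not expect a genuine obstacle beyond careful bookkeeping, but one conceptual point deserves emphasis: this is exactly the step where bounded degree enters the proof. For a hypergraph of unbounded degree the $(p-1)$-neighbourhood of $X$ could be arbitrarily large, so $|\mathbf{Y}|$ would not be bounded purely in terms of the parameters; the degree bound is precisely what caps $|N_X|$ by $M$. The bounded size property is likewise indispensable here, as it is what makes every $X$-ball small and thereby reduces the count from roughly $2^{M}$ down to $M^{p}$.
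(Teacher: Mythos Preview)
Your proposal is correct and follows essentially the same route as the paper: define the $(p-1)$-neighbourhood of $X$ (the paper calls it $W(X)$, you call it $N_X$), bound its size by $|X|\cdot\sum_{i=0}^{p-1}\Delta(H)^{p-1}$, invoke \Cref{lem:boundedint1} and the bounded size property to conclude each $X$-ball is a size-$\le p$ subset of this neighbourhood, and count. Your explicit padding encoding for the final count and your remark that this is precisely where bounded degree enters are both accurate and match the paper's own commentary.
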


\begin{proof}
Let $W(X)$ be the set of vertices of $H$ that are at distance at 
at most $p-1$ from $X$. It is not hard to see that
$|W(X)| \leq |X|\cdot \sum_{i=0}^{p-1} \Delta(H)^{p-1}$. 
On the other hand, it follows from the combination of 
Lemma \ref{lem:boundedint1} and the bounded size condition that
each element of ${\bf Y}$ is a subset of $W(X)$ of size at most $p$
thus implying the desired upper bound. 
\end{proof}

Let $P'_1,\dots, P'_q$ be the partition of $P$ into $X$-intervals
and let $X_1, \dots, X_q$ be the corresponding $X$-balls of vertices of $P'_1,\dots, P'_q$. 
Had $X_1, \dots, X_q$ been all distinct,  Lemma \ref{lem:boundedxintervals}
would have immediately followed from Lemma \ref{lem:boundedint2}.
However, this is not necessarily so as, for instance, a sequence 
$\{v_1\}, \{v_1,v_2\}$ and then back $\{v_1\}$
is perfectly possible. This matter can be addressed through the notion of \emph{peaks}.

\begin{definition}
We say that $X_i$ is a \emph{peak} if $i=q$ or 
if $i<q$ and $X_i \setminus X_{i+1} \neq \emptyset$.  
\end{definition} 

\begin{lemma} \label{lem:boundedint3}
The peaks are all distinct. 
\end{lemma}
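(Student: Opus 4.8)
The plan is to argue by contradiction. Suppose that $X_i = X_j$ for two peaks with $i < j$, and take such a pair with $j - i$ minimal. First I would exploit the fact that along the spine $P$ the $X$-ball changes in a controlled way from one $X$-interval to the next: as we move one step down the path from the last vertex of $P'_\ell$ to the first vertex of $P'_{\ell+1}$, the bag $bag_F(\cdot)$ and hence its $X$-part $X\text{-}bag_F(\cdot)$ can only lose vertices or gain the single new vertex added by going down one level — more precisely, by the connectivity properties of bags along a root-to-leaf path in the elimination forest (the bag of a descendant restricted to ancestors is "nested" in the appropriate sense), the $X$-balls along $P$ form a sequence in which consecutive members differ in a restricted way. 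The key monotonicity observation I want is: if $X_\ell$ is \emph{not} a peak, then $X_\ell \subseteq X_{\ell+1}$, i.e. $X$-balls only grow as long as we have not hit a peak.

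With that monotonicity in hand, the argument runs as follows. Between two consecutive peaks, say peak at index $i$ and the next peak at index $i' > i$, the indices $i+1, \dots, i'-1$ are all non-peaks, so $X_{i} \subseteq X_{i+1} \subseteq \dots \subseteq X_{i'}$ is a nondecreasing chain; but $X_i$ being a peak means $X_i \setminus X_{i+1} \neq \emptyset$, forcing $i+1$ to either be a peak (contradicting consecutiveness) or — here I need to be careful — the definition says $X_i$ is a peak iff $X_i\setminus X_{i+1}\neq\emptyset$, so the chain of non-peaks strictly \emph{after} a peak is an increasing chain that never returns to a previously-seen value, because once we pass a peak $X_i$ the set strictly decreases at that step ($X_{i+1}\subsetneq$ something incomparable or smaller) and then only increases until the next peak. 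I would formalize this as: the sequence of peak-values $X_{i_1}, X_{i_2}, \dots$ is such that each $X_{i_{m+1}}$ cannot equal any earlier $X_{i_\ell}$, since between peak $i_\ell$ and peak $i_{m+1}$ the $X$-ball strictly dropped at step $i_\ell$ and then the subsequent growth is bounded above by $X_{i_{\ell+1}},\dots$; carefully tracking containments shows a repeated peak value would force the intervening intervals to all have the \emph{same} $X$-ball, contradicting maximality of the $X$-intervals.

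The main obstacle I expect is establishing the precise monotonicity statement "non-peak $\Rightarrow$ $X_\ell \subseteq X_{\ell+1}$" (or whatever the correct controlled-change statement is) from the structure of bags in an elimination forest: one must show that moving one step down the spine, a vertex can leave the $X$-ball only if it also leaves the $X$-bag permanently, i.e. it cannot re-enter later, which is exactly what lets "peak" capture the last occurrence of each value. Once that nesting/no-return property of $X$-bags along a downward path is pinned down (it follows from the connectedness condition of tree decompositions applied to $(F, bag_F)$ — each vertex occupies a connected, hence a contiguous prefix-like region along the spine), the combinatorics of why distinct peaks get distinct values is a short pigeonhole-style argument. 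I would therefore first prove the auxiliary "contiguity of $X$-bags along the spine" claim, then derive the monotonicity of $X$-balls between peaks, and finally conclude distinctness of peaks; this then combines with Lemma \ref{lem:boundedint2} to bound the number of peaks, and a separate short argument bounds the number of $X$-intervals between consecutive peaks, yielding Lemma \ref{lem:boundedxintervals}.
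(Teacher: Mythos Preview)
Your proposal has a genuine gap that the paper's proof handles with a specific trick you are missing.

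First, a minor point: the statement ``non-peak $\Rightarrow X_\ell \subseteq X_{\ell+1}$'' that you flag as the main obstacle is immediate from the definition of peak (a peak is exactly an index where $X_\ell \setminus X_{\ell+1} \neq \emptyset$), so there is nothing to prove there.

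The real issue is your intended ``no-return'' property for $X$-balls: you want that once a vertex $v$ leaves the $X$-ball along the spine it cannot re-enter, and you justify this via the connectedness axiom of the tree decomposition $(F,bag_F)$. But that axiom only gives contiguity of membership in \emph{bags}, not in $X$-balls. Recall that $v \in X\text{-}ball_F(u)$ requires both $v \in bag_F(u)$ \emph{and} that $v$ lies in a connected component of $H[bag_F(u)]$ meeting $X$. A vertex can remain in the bag throughout yet drop out of the $X$-ball because the vertices on its connecting path to $X$ leave the bag; later a \emph{different} path to $X$ (through vertices that only now become ancestors) could in principle appear. Your appeal to ``contiguity of $X$-bags'' does not rule this out, and nothing else in your plan does either.

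The paper's proof sidesteps this entirely by exploiting the hypothesis $X_i = X_j$ in a sharper way. It picks $v \in X_i \setminus X_{i+1}$ and fixes a single path $Q_v$ from $v$ to $X$ inside $H[X_i]$. Because $X_i = X_j$, the \emph{same} path $Q_v$ lies in $X_j$ as well, hence $V(Q_v) \subseteq bag_F(u)$ for every $u$ in $P'_i$ and in $P'_j$. Now contiguity is applied to the finite set $V(Q_v)$ (vertex by vertex), forcing $V(Q_v) \subseteq bag_F(w)$ for all $w$ between $P'_i$ and $P'_j$, in particular for $w \in P'_{i+1}$. But then $v$ is connected to $X$ inside $H[bag_F(w)]$ via $Q_v$, so $v \in X_{i+1}$, the desired contradiction. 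The point is that tracking a whole connecting path, and using $X_i = X_j$ to make it the \emph{same} path at both ends, is what converts bag-contiguity into the $X$-ball statement you need; your plan never isolates this step.
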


\begin{proof}
Indeed, assume  the opposite. 
Then there are $1 \leq  i<j \leq q$ such that $X_i$ and $X_j$ are peaks
and $X_i=X_j$. 
Let $v \in X_i \setminus X_{i+1}$. 
We note that $X_i \cap X_{i+1} \neq \emptyset$.
Indeed, by definition $X_i \cap X=X_j \cap X \neq \emptyset$.
By the connectivity property of tree decompositions, $X_i \cap X \subseteq X_{i+1}$.
Let $Q_v$ be a path from $X_i \cap X$ to $v$ in $H[X_i]$. 

\begin{claim} \label{clm:boundedint31}
Let $w \in P'_{i+1}$. 
Then $V(Q_v)$ is not a subset of $bag_F(w)$. 
\end{claim}

\begin{claimproof}
Indeed, assume the opposite.
Then in $H[bag_F(w)]$ $v$ is connected to a vertex of $X$ and hence belongs to $X-bag_F(w)=X_{i+1}$,
a contradiction. 
\end{claimproof}

Note a subtle point of Claim \ref{clm:boundedint31}.
We do not claim that $v \notin bag_F(w)$, we claim 
that at least one of $V(Q_v)$ is outside of $bag_F(w)$,

It follows from the tree decomposition axioms that all the vertices of $P$
whose bags w.r.t. $F$ contain $Q_v$ as a subset form a subpath $P_v$ of $P$.  
By assumption both $P'_i$ and $P'_j$ are subpaths of $P_v$. 
But then for each $a>0$ such $i+a \leq j$, 
$P'_{i+a}$  is also a subpath of $P_v$.
In particular, $P'_{i+1}$ is a subpath of $P_v$.
However, this is contradicts Claim \ref{clm:boundedint31}
\end{proof}

\begin{proof}[Proof of \Cref{lem:boundedxintervals}] 
We partition $[q]$ into maximal intervals $I_1, \dots I_a$
such that each $I_b$ is a chain in the following sense.
Let $b_1<b_2 \in I_b$, Then $X_{b_1} \subseteq X_{b_2}$. 
Let $j$ be the final index of $I_b$. Then, due to to the
maximality of $I_b$, $X_j$ is a peak.
We thus conclude from the combination of Lemma \ref{lem:boundedint2}
and Lemma \ref{lem:boundedint3} that $a \geq (|X|\cdot \sum_{i=0}^{p-1} \Delta(H)^{p-1})^p$. 
On the other hand, the elements of each $I_b$ are in a bijective correspondence
with a chain of sets of size at most $p$ each. 
We thus conclude that $q=|I_1|+ \dots+|I_a| \leq p \cdot (|X|\cdot \sum_{i=0}^{p-1} \Delta(H)^{p-1})^p$.
as required. 
\end{proof}

\subsection{Neutral Swaps on Spines with Many Factors (Lemma \ref{lem:homogenousneutral}) } \label{sec:xhomogenousneutral}
We first prove a combinatorial statement about
sequences of real numbers and then harness this statement to
establish a proof of Lemma \ref{lem:homogenousneutral}. 

\begin{definition}
Let $S=(a_1, \dots, a_r)$ be a sequence of numbers. 
The quadruple  $(b_1,b_2,b_3,b_4)$ of four \emph{consecutive}
elements of $S$ is called \emph{mutable} if $b_3 \geq b_1$
and $b_2 \geq b_4$.  
\end{definition}

\begin{theorem} \label{theor:exmut}
Let $I$ be a finite set of reals
and let $S=(s_1,\dots, s_r)$ be a sequence such
that $s_i \in I$ for each $i \in [r]$. 
Suppose that $r \geq 4|I|$. 
Then $S$ has a mutable quadruple. 
\end{theorem}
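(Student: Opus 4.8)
The plan is to reduce the statement to a pair of monotonicity-type constraints on the odd-indexed and even-indexed subsequences of $S$, and then obtain a contradiction from the assumption that $S$ has no mutable quadruple. Write $n := |I|$, let $a_k := s_{2k-1}$ and $c_k := s_{2k}$ be the odd- and even-indexed subsequences, and set $q := \lfloor r/2 \rfloor$; note that $a_1,\dots,a_q$ and $c_1,\dots,c_q$ are all well-defined and that $r \le 2q+1$. The key observation is a direct translation of the definition: a mutable quadruple occurring at an odd position $2k-1$ is exactly the condition $a_k \le a_{k+1}$ and $c_k \ge c_{k+1}$, while a mutable quadruple at an even position $2k$ is exactly $c_k \le c_{k+1}$ and $a_{k+1} \ge a_{k+2}$. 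Hence, assuming that no mutable quadruple exists, we obtain two families of constraints: (A) for every $k \in [1,q-1]$, $a_k > a_{k+1}$ or $c_k < c_{k+1}$; and (B) for every $k \in [1,q-2]$, $c_k > c_{k+1}$ or $a_{k+1} < a_{k+2}$. (Only a subset of all quadruples is used, which is fine since the hypothesis ``no mutable quadruple'' is global.)

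Next I would split into two cases according to the behaviour of the even subsequence. If $c_k \ge c_{k+1}$ for all $k \in [1,q-1]$, then (A) forces $a_k > a_{k+1}$ for all such $k$, so $a_1 > a_2 > \dots > a_q$ is a strictly decreasing sequence of values from $I$; thus $q \le n$ and $r \le 2q+1 \le 2n+1 < 4n$, contradicting $r \ge 4n$. Otherwise, let $k_0$ be the least index with $c_{k_0} < c_{k_0+1}$. By minimality of $k_0$ and (A), $a_1 > a_2 > \dots > a_{k_0}$, so $k_0 \le n$. The core of the argument is a bootstrapping chain: from $c_{k_0} < c_{k_0+1}$, apply (B) to get $a_{k_0+1} < a_{k_0+2}$, then (A) to get $c_{k_0+1} < c_{k_0+2}$, then (B) again, and so on; a short induction (carefully tracking that (A) is applied only at indices $\le q-1$ and (B) only at indices $\le q-2$) shows $c_{k_0} < c_{k_0+1} < \dots < c_q$. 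This chain contains $q - k_0 + 1$ distinct values of $I$, so $q - k_0 + 1 \le n$. Adding this to $k_0 \le n$ gives $q + 1 \le 2n$, hence $q \le 2n-1$ and $r \le 2q+1 \le 4n-1 < 4n$, again a contradiction. Therefore $S$ must contain a mutable quadruple.

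I expect the main obstacle to be the index bookkeeping in the bootstrapping step: one must check that the alternating applications of (A) and (B) never leave their valid ranges, and handle the boundary case $k_0 = q-1$, where the induced $a$-chain is trivial but the two-element $c$-chain together with $k_0 \le n$ still yields $q \le 2n-1$ (here $n \ge 2$ automatically, since a mutable quadruple at an even position is only excluded nontrivially when $I$ has at least two elements). Everything else is routine: the translation between ``mutable quadruple at position $i$'' and inequalities on $a$ and $c$, and the pigeonhole fact that a strictly monotone sequence over $I$ has length at most $n$. It is worth noting that the resulting bound $r \ge 4|I|$ is tight: over $I = \{0,1\}$, the length-$7$ sequence $1,1,0,0,0,1,1$ has no mutable quadruple, matching $r = 4|I| - 1$.
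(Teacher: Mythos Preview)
Your argument is correct. The translation of ``no mutable quadruple at position $2k-1$'' and ``no mutable quadruple at position $2k$'' into the disjunctive constraints (A) and (B) is accurate, and the bootstrapping chain works as claimed: once $c_{k_0} < c_{k_0+1}$ holds, alternating applications of (B) at index $k_0+j$ and (A) at index $k_0+j+1$ (both requiring only $k_0+j \le q-2$) propagate the strict inequality all the way to $c_{q-1} < c_q$. The two inequalities $k_0 \le n$ and $q - k_0 + 1 \le n$ then combine additively to give $q \le 2n-1$, independently of the value of $k_0$, so no boundary case actually needs separate treatment. Your tightness example is also correct.

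The paper's proof is quite different: it proceeds by induction on $|I|$, observing that if the maximum element $a \in I$ occurs at any interior position (i.e.\ with two neighbours on each side), then one of the two overlapping quadruples containing $a$ in its third slot is mutable; otherwise $a$ is confined to the first two or last two positions, and stripping those four positions yields a shorter sequence over $I \setminus \{a\}$ to which the inductive hypothesis applies. This argument is shorter and arguably more structural, since it exploits the extremal value directly rather than tracking parity. Your approach, by contrast, is a direct contradiction argument that never singles out a particular value of $I$; it instead extracts two interleaved monotone chains and bounds their total length by $2|I|$. Both routes naturally hit the same $4|I|$ threshold, but yours makes the tightness more transparent, while the paper's makes the role of the maximum element explicit.
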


\begin{proof}
By induction on $|I|$. 
If $|I|=1$ then we observe that $4$ consecutive $a,a,a,a$ form
a mutable sequence. 

Assume now that $|I|>1$ and let $a$ be the largest element of $I$. 
Consider a sequence $S$ as specified in the statement of the lemma
and having length at least $4|I|$.

Suppose that $S$ has $5$ consecutive elements of the form $a_1,a_2,a,a_3,a_4$. 
If $a_2 \geq a_3$ then $a_1,a_2,a,a_3$ is a mutable quadruple. 
Otherwise, if $a_2<a_3$ then $a_2,a,a_3,a_4$ is a mutable quadruple. 

It remains to assume that the five consecutive as above do not occur in $S$. 
It follows that $a$ can only occur among the first two elements or 
among the last two elements of $S$. Let $S'$ be the sequence obtained by removal
from $S$ the first two and the last two elements. 
Then each element of $S'$ is contained in $I \setminus \{a\}$ and the 
the length of $S'$ is at most $4(|I|-1)=4|I \setminus \{a\}|$. 
Therefore $S'$, and hence $S$, contains a mutable quadruple by 
the induction assumption. 
\end{proof}

In order to use Theorem \ref{theor:exmut},
we map each colour interval of $P_0$ with a real number.
The mapping is based on the invariance as in Lemma \ref{lem:neutralaux1}. 
Prior to that, we prove one more auxiliary lemma.

\begin{lemma} \label{lem:neutralaux0}
Let $U' \subseteq Red \cup Blue$ be a context factor of $F$
and let $P'$ be a colour interval of its spine. 
Then all the vertices of $P'$ have the same foreign bags.
\end{lemma}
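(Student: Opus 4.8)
The plan is to exploit the colour structure of the spine together with the definition of the foreign bag and the tree-decomposition-like connectivity of bags in an elimination forest. Fix a colour interval $P'$ of the spine of $U'$; by definition $P'$ is monochromatic, say $P' \subseteq Native(u)$ for the common colour, so that $Foreign(w)$ is the \emph{same} colour set (the opposite one) for every $w \in V(P')$. The foreign bag of $w$ is $fbag_F(w) = bag_F(w) \cap Foreign(w)$, and since the foreign colour is constant along $P'$, it suffices to show that $bag_F(w) \cap Foreign(w)$ is independent of the choice of $w \in V(P')$.

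First I would recall that $(F, bag_F)$ is a tree decomposition of $H$, so for any vertex $z$ of $H$ the set of nodes $w$ with $z \in bag_F(w)$ induces a connected subtree of $F$; in particular its intersection with the path $P'$ is a (possibly empty) subpath. So it is enough to argue that no foreign-coloured vertex $z$ can lie in $bag_F(w_1)$ and $bag_F(w_3)$ for $w_1, w_3 \in V(P')$ but be missing from some intermediate $w_2$ — but that is automatic from connectivity along the path. Hence the real content is: if $z \in Foreign(u)$ lies in $bag_F(w)$ for \emph{some} $w \in V(P')$, then $z \in bag_F(w')$ for \emph{every} $w' \in V(P')$. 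Equivalently, the ``foreign part'' of the bag can only be constant, never strictly grow or shrink, as we move along a monochromatic interval.

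The key step is to show this monotonicity fails only in the presence of a colour change on the spine. Suppose $z \in Foreign(u) \cap bag_F(w)$ for some $w \in V(P')$ but $z \notin bag_F(w')$ for some other $w' \in V(P')$. By the elimination-forest property, $z$ and $w$ are adjacent in $H$ or $z$ is adjacent to a descendant of $w$; tracing the definition of $bag_F$ (a vertex $w$ has in its bag exactly itself and every ancestor adjacent to $w$ or to a descendant of $w$), membership of an ancestor $z$ of $w$ in $bag_F(w)$ is equivalent to $z$ being adjacent to $F_w$. Since $P'$ is a subpath of the spine of the context factor $U'$ and the vertices strictly between consecutive spine nodes contribute whole tree factors to $U'$, adjacency of $z$ to $F_w$ for one $w \in V(P')$ propagates to adjacency to $F_{w'}$ for every $w' \in V(P')$ lying below — and for those above, we use that $U'$ is a context factor so the appendices are the only ``removed'' parts, and the removed forest is itself monochromatic in the native colour by Corollary \ref{cor:monspine} / Lemma \ref{lem:redmono}. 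Thus a foreign vertex adjacent to $F_w$ cannot have that adjacency ``broken'' by the appendix removal, since no hyperedge joins a red and a blue vertex. Combining these, the foreign bag is constant along $P'$.

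I expect the main obstacle to be the careful bookkeeping of \emph{which} descendants of a spine node $w$ are still present in the context factor $U'$ versus removed as appendices — i.e., making precise that adjacency of a foreign vertex to $F_w$ (computed in $H$) versus to the ``trimmed'' tree factor inside $U'$ agree. This is exactly where the hypothesis that $H$ has no red–blue hyperedge, together with Lemma \ref{lem:monspine} (the interval factors $Cont_F(P_i, W_i)$ are monochromatic in the native colour), does the work: the removed forest hanging off $P'$ is native-coloured, so deleting it cannot affect adjacency to \emph{foreign}-coloured vertices. Once that observation is isolated, the rest is the standard tree-decomposition connectivity argument applied along the monochromatic subpath $P'$.
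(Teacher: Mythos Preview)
Your approach is the paper's: a foreign vertex $w \in bag_F(u_1)$ is a proper ancestor of all of $P'$ (by colour it is not on $P'$), and the witness $v \in F_{u_1}$ adjacent to $w$ cannot be native-coloured (no red--blue hyperedge), hence $v \notin Cont_F(P',W')$ by Corollary~\ref{cor:monspine}, so $v$ lies strictly below the last vertex of $P'$; thus $v \in F_{u_2}$ and $w \in bag_F(u_2)$ for every $u_2 \in V(P')$. The paper runs exactly this argument in four lines, without the tree-decomposition connectivity detour.

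One point to fix: you have the easy and hard directions reversed. For $w'$ an \emph{ancestor} of $w$ on $P'$ the propagation $z \in bag_F(w) \Rightarrow z \in bag_F(w')$ is immediate since $F_w \subseteq F_{w'}$; the content lies entirely in the case where $w'$ is a \emph{descendant} of $w$, where $F_{w'} \subsetneq F_w$ and you must argue the witness $v$ survives. That is precisely where the colour argument is needed, not the ancestor case where you currently place it. Also, the phrase ``removed forest hanging off $P'$'' is misleading: the removed forest $Y$ of the context factor $U'$ plays no role here (and need not be monochromatic); what matters is that the \emph{interval factor} $Cont_F(P',W')$ is native-coloured, which you do cite correctly. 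Once you straighten out the direction, your plan is complete.
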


\begin{proof}
Let $u_1,u_2 \in V(P')$. 
Let $w \in fbag_F(u_1)$. 
Since $u_1$ and $u_2$ are selected arbitrarily,
it is enough to demonstrate that $w \in fbag_F(u_2)$. 
As $w$ is not of the same colour as $u_1$,
$w$ is a proper ancestor of $u_1$ and, in fact, $w$
is a proper ancestor of all the vertices of $P'$.
In particular, $w$ is a proper ancestor of $u_2$.
Next, $w$ is not adjacent to $u_1$ since the vertices are of distinct colours.
Therefore, $w$  is adjacent to a proper successor of $u_1$ and, in fact
to a proper successor of all  of $P'$ (in light of Corollary \ref{cor:monspine}). In  particular,
$u$ is adjacent to a proper successor of $u_2$. 
We conclude that $w \in bag_F(u_2)$.
Since $u_2$ and $w$ are of different colours, $w \in fbag_F(u_2)$. 
\end{proof}

With the notation as in Lemma \ref{lem:neutralaux0},
we introduce the notation $fbag_F(P')$ 
as $fbag_F(u)$ for an arbitrary $u \in V(P')$.

\begin{lemma} \label{lem:neutralaux1}
Let $P'$ be a colour interval of $P_0$. 
Then all the vertices of $P'$ have the same foreign balls. 
\end{lemma}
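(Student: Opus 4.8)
The plan is to express the foreign ball of a vertex as the set difference of two quantities that \Cref{lem:neutralaux0} and the $X$-homogeneity hypothesis already pin down along $P'$. Concretely, the first thing I would establish is the elementary identity that for every $u \in V(H)$,
\[
fball_F(u) \;=\; fbag_F(u) \setminus X-ball_F(u).
\]
This follows from the fact that $bag_F(u)$ is the disjoint union of its $X$-ball, its native ball, and its foreign ball: the native ball is monochromatic of colour $Native(u)$, hence disjoint from $fbag_F(u) = bag_F(u) \cap Foreign(u)$, while the foreign ball is monochromatic of colour $Foreign(u)$, hence entirely contained in $fbag_F(u)$. Intersecting the three-way partition with $Foreign(u)$ therefore leaves $fbag_F(u) = (X-ball_F(u) \cap Foreign(u)) \sqcup fball_F(u)$, and since $fbag_F(u) \subseteq Foreign(u)$ is disjoint from the part of $X-ball_F(u)$ lying outside $Foreign(u)$, removing all of $X-ball_F(u)$ from $fbag_F(u)$ leaves exactly $fball_F(u)$.

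With this identity in hand the lemma is immediate. Since $P'$ is a colour interval of the spine $P_0$ of the context factor $U_0$, \Cref{lem:neutralaux0} gives that $fbag_F(u)$ is the same set for all $u \in V(P')$. Since $V(P') \subseteq V(P_0)$ and $U_0$ is $X$-homogeneous (\Cref{def:homofactor}), $X-ball_F(u)$ is likewise the same set for all $u \in V(P')$. Substituting both constants into the displayed identity shows that $fball_F(u)$ is constant on $V(P')$, which is the claim.

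I do not expect a genuine obstacle: the entire argument is bookkeeping on the partition of a bag into its three balls. The only point worth stating carefully is why a connected component of $H[bag_F(u)]$ that misses $X$ is monochromatic — this is exactly where the standing hypothesis that no hyperedge of $H$ meets both $Red$ and $Blue$ is used, since colour cannot change across a hyperedge, so an $X$-free component cannot contain both a red and a blue vertex. Everything else is set algebra. (Alternatively, one can run the same reasoning element by element: given $w \in fball_F(u_1)$, \Cref{lem:neutralaux0} places $w$ in $fbag_F(u_2)$, and $X$-homogeneity together with $w \notin X-ball_F(u_1)$ forbids $w$ from lying in $X-ball_F(u_2)$, so the component of $w$ in $H[bag_F(u_2)]$ misses $X$ and hence $w \in fball_F(u_2)$; symmetry closes the argument.)
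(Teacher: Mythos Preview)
Your proof is correct and uses the same two ingredients as the paper (Lemma~\ref{lem:neutralaux0} for constancy of the foreign bag along $P'$, and $X$-homogeneity of $U_0$ for constancy of the $X$-ball along $P_0$). The only difference is packaging: the paper argues component by component, showing that each foreign component of $H[bag_F(u_1)]$ is again a foreign component of $H[bag_F(u_2)]$, whereas you pass directly through the set identity $fball_F(u)=fbag_F(u)\setminus X\text{-}ball_F(u)$; this identity is in fact invoked later in the paper (in the proof of Claim~\ref{clm:neutralaux28}), so your shortcut is entirely in the spirit of the argument.
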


\begin{proof}
Let $u_1,u_2 \in V(P')$. 
Let $W$ be a foreign component of $u_1$. 
Since vertices $u_1$ and $u_2$ are selected arbitrarily,
it is enough to demonstrate that $W$ is a foreign component of $u_2$. 
It is immediate from Lemma \ref{lem:neutralaux0} that
$W \subseteq bag_F(u_2)$. 
As $W$ is a connected component of $H[bag_F(u_1)]$, $W$ is a
connected set of $H[bag_F(u_2)]$ but possibly not a maximal one. 
In this case, there is a vertex $w' \in bag_F(u_2) \setminus W$ such that
$w'$ is adjacent to $W$. 
We observe that $w'$ cannot be the colour of $u_2$ since there
$H$ has no hyperedges including vertices of two different colours. 
If $w'$ is the colour of $W$ then by Lemma \ref{lem:neutralaux0},
$w' \in bag_F(u_1)$. But then $W \cup \{w'\}$ is a connected subset
of $H[bag_F(u_1)]$ in contradiction to the maximality of $W$.

It remains to assume that $w' \in X$.
But then $W$ is a subset of the $X$-ball of $u_2$.
But this is a contradiction since $W$ is disjoint with the $X$-ball of $u_1$
and the latter is equal to the $X$-ball of $u_2$ since both $u_1$ and $u_2$
belong to the same $X$-interval.
\end{proof}

We refer to $\fwidth(X-ball_F(u))$,
$\fwidth(nball_F(u))$ and $\fwidth(fball_F(u))$
as the $X$-width, native  width, and foreign width of $u$
respectively (w.r.t. $F$) and denote them
by $X-width_F(u)$, $nwidth_F(u)$, and $fw_F(u)$
\footnotemark \footnotetext{The naming $fwidth_F$, consistent with $nwidth_F$, could be easily confused with
$\fwidth$. Therefore we chose another, easily distinguishable notation.}. 
It follows from Lemma \ref{lem:neutralaux1} that
the foreign width is invariant for the vertices of 
a colour interval. Therefore, we use 
the notation $fw_F(P')$ where $P'$ is a colour
interval to denote the foreign width w.r.t. $F$ of
an arbitrary vertex of $P'$.

Now, let $P'_1, \dots, P'_q$ be the partition of $P_0$
into colour intervals that occur on $P_0$
in the order listed starting from the root of $U_0$.
In the rest of the proof, we demonstrate that
if a sequence $fw_F(P'_1), \dots fw_F(P'_q)$
has a mutable quadruple then the desired neutral swap
exists. We then harness Theorem \ref{theor:exmut}
to demonstrate that if $q$ is sufficiently long w.r.t.
$\fwidth(H)$ and $rank(H)$ then such a quadruple does
exist.

\begin{lemma} \label{lem:neutralaux2}
\sloppy
Assume that there is $1 \leq i \leq q-3$
so that\\ $fw_F(P'_i),fw_F(P'_{i+1}),fw_F(P'_{i+2}),fw_F(P'_{i+3})$
forms a mutable quadruple of $fw_F(P'_1), \dots fw_F(P'_q)$.
Then the swap of $P'_{i+1}$ and $P'_{i+2}$ is neutral.
\end{lemma}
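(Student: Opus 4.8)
The plan is to analyze how a swap of the intervals $P'_{i+1}$ and $P'_{i+2}$ affects the bag of each vertex in the spine $P_0$, and show that under the mutability hypothesis the $f$-width of no bag increases. First I would fix notation: write $a_j := fw_F(P'_j)$ for $j \in [q]$, so that the quadruple $(a_i, a_{i+1}, a_{i+2}, a_{i+3})$ being mutable means $a_{i+2} \geq a_i$ and $a_{i+1} \geq a_{i+3}$. Let $F'$ be the swap of $F$ at intervals $P'_{i+1}$ and $P'_{i+2}$ as in \Cref{def:singleswap}. The key observation is that the swap only rewires three tree edges connecting four consecutive intervals, so the only vertices whose bags can change are those lying on $P'_{i+1} \cup P'_{i+2}$, together with possibly the transition vertices between the four intervals. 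Vertices outside $spine(U_0)$, and vertices of $P'_j$ for $j \notin \{i+1, i+2\}$, keep the same bag. Since $U_0$ is $X$-homogeneous, every vertex on the spine has the same $X$-ball, and by \Cref{lem:neutralaux1} the foreign ball is constant along each colour interval; this means the native ball is the only ``degree of freedom'' and $f(bag_F(u)) = f(X\text{-}ball_F(u)) + nwidth_F(u) + fw_F(u)$ by additivization.

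Next I would compute the new bags explicitly. After the swap, a vertex $u$ originally in $P'_{i+1}$ sits in $F'$ with the same native descendants (the native ball is unchanged since it depends only on the subtree structure below $u$ restricted to the same colour, which is preserved — the swap does not alter which native vertices lie below $u$) but now its proper ancestors along the spine change: instead of having $P'_i$ above it, it has $P'_i$ then $P'_{i+2}$ then $P'_{i+1}$'s own portion... the precise bookkeeping here is exactly where I must be careful. The point is that a vertex in $P'_{i+1}$ will now have the foreign contribution coming from the interval that used to be $P'_{i+2}$ in the old configuration relative to it — and the foreign width seen by $P'_{i+1}$ in $F'$ becomes $a_{i+2}$ (or rather, is controlled by $a_{i+2}$), while a vertex in $P'_{i+2}$ will now see foreign width controlled by $a_{i+1}$. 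Hence the maximum bag-width among spine vertices after the swap is bounded by $\max(\,\dots, a_{i+2} + (\text{stuff}), a_{i+1} + (\text{stuff}), \dots)$ where the analogous pre-swap quantities were $a_{i+1} + (\text{stuff})$ and $a_{i+2} + (\text{stuff})$ — but because the ``stuff'' (native and $X$ contributions) is unchanged for the corresponding vertex, and because $a_{i+2} \geq a_i$ and $a_{i+1} \geq a_{i+3}$ ensure the relevant comparisons between the old maxima and new maxima go the right way, we get $\fwidth(F') \leq \fwidth(F)$.

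To make this rigorous I would: (1) verify $F'$ is still an elimination forest of $H$ — this follows from \Cref{lem:monspine}/\Cref{cor:monspine} style reasoning since every edge of $H$ touching the swapped region has both endpoints comparable in $F'$ because the swapped intervals are monochromatic and non-adjacent to the opposite colour; (2) partition the spine vertices of $F'$ into the unchanged ones and the changed ones, and for each changed vertex $u$ identify its bag in $F'$ in terms of an $X$-ball (same as before, by homogeneity), a native ball (same), and a foreign ball whose $f$-width equals one of $a_{i}, a_{i+1}, a_{i+2}, a_{i+3}$ — here I would invoke \Cref{lem:neutralaux0} and \Cref{lem:neutralaux1} to transport foreign bags/balls along colour intervals, and the fact that foreign ancestors of a spine vertex are determined by which later intervals are ``native below'' versus ``foreign above''; (3) conclude by a short case analysis that each new bag-width is $\leq \fwidth(F)$ using precisely the two mutability inequalities. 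The main obstacle I anticipate is step (2): pinning down exactly which foreign ball each vertex of the swapped intervals acquires after rewiring, and confirming the native and $X$ contributions genuinely do not change — the native ball could a priori shift because the set of descendants of $u$ changes under the swap, so I would need to argue that the native descendants of a vertex in $P'_{i+1}$ are unchanged (they lie in tree factors hanging off $P'_{i+1}$ itself plus $P'_{i+1}$'s own tail, all preserved by the swap since the swap only reattaches whole interval-blocks) and only the foreign (ancestor-side) contribution is permuted. Once that invariant is established, the arithmetic closing the argument is immediate from mutability.
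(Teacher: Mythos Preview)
Your overall architecture matches the paper's: restrict attention to vertices on $P'_{i+1} \cup P'_{i+2}$, additivize each bag into $X$-ball, native ball, and foreign ball, argue the first two are invariant under the swap, and control the change in foreign width via mutability. However, your identification of the new foreign widths is wrong, and this is precisely where mutability must bite. You claim that after the swap a vertex in $P'_{i+1}$ acquires foreign width $a_{i+2}$ and a vertex in $P'_{i+2}$ acquires $a_{i+1}$. With those values the mutability inequalities $a_{i+2} \geq a_i$ and $a_{i+1} \geq a_{i+3}$ buy you nothing: to bound the new bag of $u \in P'_{i+1}$ by the old one you would need $a_{i+2} \leq a_{i+1}$, which is not given. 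The correct picture is that after the swap the spine reads $\dots, P'_i, P'_{i+2}, P'_{i+1}, P'_{i+3}, \dots$, so $P'_{i+1}$ and $P'_{i+3}$ (same colour) merge into a single colour interval of $F'$, as do $P'_i$ and $P'_{i+2}$. Applying \Cref{lem:neutralaux0} and \Cref{lem:neutralaux1} in $F'$, the foreign ball is constant along the merged interval containing $P'_{i+1} \cup P'_{i+3}$; since the predecessor and successor sets of every vertex of $P'_{i+3}$ are identical in $F$ and $F'$, one gets $fball_{F'}(u) = fball_F(P'_{i+3})$ for $u \in P'_{i+1}$, i.e.\ $fw_{F'}(u) = a_{i+3}$. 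Symmetrically $fw_{F'}(u) = a_i$ for $u \in P'_{i+2}$. Now mutability is exactly what closes the argument: $a_{i+3} \leq a_{i+1}$ handles $P'_{i+1}$ and $a_i \leq a_{i+2}$ handles $P'_{i+2}$.

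A secondary gap: invariance of the $X$-ball under the swap does not follow from $X$-homogeneity of $U_0$ alone, since homogeneity is a statement about $F$, not $F'$. One must first verify that the $X$-bag of each $u \in P'_{i+1} \cup P'_{i+2}$ is unchanged (the paper does this by comparing predecessor and successor sets and invoking homogeneity in $F$ to rule out new $X$-bag elements arising through the relocated interval), and then that no connected component of $H[bag_{F'}(u)]$ meeting $X$ gains or loses vertices --- this second step uses both homogeneity and the foreign-bag identification above. Your plan flags this as something to check, but the one-word justification ``by homogeneity'' is not self-sufficient.
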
 

\begin{proof}

Throughout the proof, we will frequently 
refer to the set of predecessors and successors of a vertex in an elimination forest. 
We denote these sets by $pred_F(u)$ and $succ_F(u)$, respectively. 

Let $F'$ be the elimination forest resulting from the swap. 

Let $u_1$ and $v_1$ be the respective first and last vertices
of $P_{i+1}$, $u_2$ and $v_2$ be the respective first and last 
vertices of $P_{i+2}$ and $u_3$ be the first vertex of $P_{i+3}$.
(the path are explored along the spine of $U_0$ starting from the root). 

\begin{claim} \label{clm:neutralaux21}
Let $u \in (V(H) \setminus V(F_{u_1})) \cup V(F_{u_3})$. 
Then $bag_{F}(u)=bag_{F'}(u)$ 
\end{claim}

\begin{claimproof}
It is straightforward to observe that for each $u$ as in the
statement of the claim, $pred_F(u)=pred_{F'}(u)$
and $succ_F(u)=succ_{F'}(u)$

\end{claimproof}

\begin{claim} \label{clm:neutralaux22}
Let $u \in Cont_F(P_{i+1}+P_{i+2},\{u_3\}) \setminus V(P_{i+1} \cup P_{i+2})$.
Then $bag_F(u)=bag_{F'}(u)$. 
\end{claim}

\begin{claimproof}
We start by observing that $succ_F(u)=succ_{F'}(u)$. 
By Corollary \ref{cor:monspine}, all the vertices of $succ_F(u)$ are of the same colour
as $u$. Therefore, both $bag_F(u)$ and $bag_{F'}(u)$ can have only elements of $X$
or elements of the same colour as $u$. 
But $pred_{F}(u) \cap (X \cup Native(u))=pred_{F'}(u) \cap(X \cup Native(u))$.

\end{claimproof}

In light of Claims \ref{clm:neutralaux21} and \ref{clm:neutralaux22}, 
we only need to verify the lemma for $V(P_{i+1}) \cup V(P_{i+2})$. 
Our goal is to demonstrate that for each $u \in V(P_{i+1}) \cup V(P_{i+2})$,
$X-ball_{F}(u)=X-ball_{F}(u)$, $nball_F(u)=nball_{F'}(u)$ 
and $|fball_F(u)| \leq |fball_{F'}(u)|$,

We observe (by construction and Corollary \ref{cor:monspine}) 
that for each $u \in V(P_{i+1}) \cup V(P_{i+2})$,
$pred_F(u) \cap Native(u)=pred_{F'}(u) \cap Native(u)$ and
$succ_F(u) \cap (Native(u) \cup X)=succ_{F'}(u) \cap (Native(u) \cup X)$.

We thus conclude that

\begin{claim} \label{clm:neutralaux23}
For each $u \in V(P_{i+1}) \cup V(P_{i+2})$, $nbag_F(u)=nbag_{F'}(u)$. 
\end{claim}

\begin{claim} \label{clm:neutralaux24}
$F'$ is a reduced elimination forest. 
$U$ is a context factor of $F'$. 
The sequence of colour intervals of $U$ in $F'$
is the concatenation of 
${\bf P}_0$, $F'[V(P_{i}) \cup  V(P_{i+2})],F'[V(P_{i+1}) \cup V(P_{i+3}))]$, ${\bf P}_1$.
Here ${\bf P}_0$ is a possibly empty sequence of colour intervals of $P$ occurring before $P_i$
and ${\bf P}_1$ is a possibly empty sequence of colour intervals of $P$ occurring after $P_{i+3}$.
\end{claim}

\begin{claimproof}
The first two statements follow from Lemma \ref{lem:singleswap}.
The third statement is immediate by construction. 
\end{claimproof}

\begin{claim} \label{clm:neutralaux25}
For each $u \in V(P_{i+1})$, $fbag_{F'}(u)=fbag_{F}(P_{i+3})$ and
for each $u \in V(P_{i+2})$, $fbag_{F'}(u)=fbag_F(P_i)$. 
\end{claim}

\begin{claimproof}
We only prove the statement for $u \in V(P_{i+1})$, the other case is symmetric.
Denote $F'[V(P_{i+1}) \cup  V(P_{i+3})]$ by $P'$. 
By Claim \ref{clm:neutralaux24} and Lemma \ref{lem:neutralaux0}, 
$fbag_{F'}(u)=fbag_{F'}(P')$. On the other hand $P'$ includes vertices of $P_{i+3}$.
Hence, by Claim \ref {clm:neutralaux21}, $fbag_{F'}(P')=fbag_F(P_{i+3})$.
\end{claimproof}

\begin{claim} \label{clm:neutralaux26}
For each $u \in V(P_{i+1}) \cup V(P_{i+2})$,
$X-bag_F(u)=X-bag_{F'}(u)$. 
\end{claim}

\begin{claimproof}
We first observe that $pred_F(u) \cap X=pred_{F'}(u) \cap X$. 
Let $u \in V(P_{i+2})$. By construction, 
$succ_{F'}(u)=succ_F(u) \cup Cont_F(P_{i+1},\{u_2\})$. 
Therefore, if the claim is not true then $X-bag_{F'}(u)$ is a strict
superset of $X-bag_F(u)$. 
Let $v \in X-bag_{F'}(u) \setminus X-bag_F(u)$. 
It follows that there is $w \in Cont_F(P_{i+1},\{u_2\})$
such that $v$ is adjacent to $w$. 
Let $u_0$ be an arbitrary vertex of $P_i$.
We observe that $v \in pred_F(u_0)$ and $w \in succ_F(u_0)$. 
Hence, $v \in X-bag_F(u_0)$. Due to $U_0$ being $X$-homogenous, 
$bag_F(u_0)=bag_F(u)$ and hence $v \in bag_F(u)$ in contradiction 
to our assumption. 

Now, let $u \in V(P_{i+1})$. 
Then $succ_{F'}(u)=succ_F(u) \setminus Cont_F(P_{i+2}, \{u_3\})$. 
Hence,  if the claim does not hold, $X-bag_{F'}(u)$ is a strict subset
of $X-bag_F(u)$. 
Let $v \in X-bag_F(u) \setminus X-bag_{F'}(u)$.
Let $u_0 \in P_{i+3}$. 
Due to the $X$-homogenicity of $U_0$, $v \in X-bag_{F}(u_0)$. 
By Claim \ref {clm:neutralaux21}, $v \in bag_{F'}(u_0)$. 
Since $succ_{F'}(u_0) \subseteq succ_{F'}(u)$, we conclude 
that $v \in X-bag_{F'}(u)$, a contradiction. 
\end{claimproof}

\begin{claim} \label{clm:neutralaux27}
For each $u \in V(P_{i+1}) \cup V(P_{i+2})$,
$X-ball_F(u)=X-ball_{F'}(u)$. 
\end{claim}

\begin{claimproof}
In light of Claim \ref{clm:neutralaux26},
it is enough to observe that for each $x \in X-\bag_F(u)$,
the connected component of $H[\bag_F(u)]$ containing $x$ 
is also a connected component of $\bag_{F'}(u)$. 

So, let $W$ be such a component. 
We need first to verify that $W \subseteq \bag_{F'}$.
$W \cap (nbag_F(u) \cup X-\bag_F(u)) \subseteq \bag_{F'}(u)$
by combination of Claims \ref{clm:neutralaux23} and \ref{clm:neutralaux26}. 
Let $v \in W \cap fbag_{F}(u)$.

Assume first that $u \in V(P_{i+2})$. 
Let $u_0 \in V(P_i)$. 
Due to the $X$-homogenicity of $U_0$ and Claim \ref{clm:neutralaux21},
$v \in bag_{F'}(u_0)$. As $u$ and $u_0$ are of the same colour,
$v \in fbag_{F'}(u_0)$. By Claim \ref{clm:neutralaux25},
$v \in fbag_{F'}(u) \subseteq bag_{F'}(u)$. 
If $u \in V(P_{i+1})$, we apply the same argument 
with $P_{i+3}$ being used instead of $P_i$. 

By assumption about $W$, it is a connected set of $H$
and hence a connected set of $H[bag_{F'}(u)]$ and hence, in turn,
a subset of a connected component of $H[bag_{F'}(u)]$.
It remains to verify that there are no other vertices in this connected
component. Such an extra vertex $v$ may only belong to 
$bag_{F'}(u) \setminus bag_F(u)$. 
By combination of Claims \ref{clm:neutralaux23} and \ref{clm:neutralaux26},
$v \in fbag_{F'}(u)$. Assume that $u \in V(P_{i+2})$
Let $u_0 \in V(P_i)$. By Claim \ref{clm:neutralaux25}, $v \in bag_{F'}(u_0)$. 
On the other hand, by the $X$-homogenicity of $U_0$ and Claim \ref{clm:neutralaux21},
$W$ is a connected component of $H[bag_{F'}(u_0)]$, a contradiction. 
If $u \in V(P_{i+1})$ then the argument is symmetric with $P_{i+3}$ being used 
instead of $P_0$.
\end{claimproof}

\begin{claim} \label{clm:neutralaux28}
For each $u \in V(P_{i+1})$, $fball_{F'}(u)=fball_{F}(P_{i+3})$ and
for each $u \in V(P_{i+2})$, $fball_{F'}(u)=fball_F(P_i)$. 
\end{claim}

\begin{claimproof}
We prove the claim for $u \in V(P_{i+1})$,
for the other case, the proof is symmetric. 
We first observe that $fball_{F'}(u)=fbag_{F'}(u) \setminus X-ball_{F'}(u)$. 
Let $u_0 \in V(P_{i+3})$
By Claim \ref{clm:neutralaux25}, $fbag_{F'}(u)=fbag_{F'}(u_0)$. 
By combination of Claim \ref{clm:neutralaux27}, the $X$-homogenicity of $U_0$
and Claim \ref{clm:neutralaux21}, $X-ball_{F'}(u)=X-ball_{F'}(u_0)$. 
We thus conclude that $fball_{F'}(u)=fball_{F'}(u_0)$. 
By Claim \ref{clm:neutralaux21}, $fball_{F'}(u)=fball_F(u_0)$. 
Finally, by Lemma \ref{lem:neutralaux1}, $fball_{F'}(u)=fball_F(P_{i+3})$.
\end{claimproof}

\begin{claim} \label{clm:neutralaux29}
For each $u \in V(P_{i+1}) \cup V(P_{i+2})$,
$nball_F(u)=nball_{F'}(u)$. 
\end{claim}

\begin{claimproof}
We observe that
$nball_{F'}(u)=nbag_{F'}(u) \setminus X-ball_{F'}(u)=
nbag_F(u) \setminus X-ball_F(u)=nball_F(u)$,
the first inequality follows from the combination of Claims
\ref{clm:neutralaux25} and \ref{clm:neutralaux27}
\end{claimproof}

Let $u \in V(P_{i+1})$.
Then 

\begin{eqnarray*}f(bag_{F'}(u))&=&g_f(nwidth_{F'}(u),fw_{F'}(u),X-width_{F'}(u)) \\
&=&g_f(nwidth_{F}(u),fw_{F}(P_{i+3}),X-width_{F}(u)) \\
&\leq& g_f(nwidth_{F}(u),fw_{F}(P_i+1),X-width_{F}(u)) = f(bag_F(u),\end{eqnarray*}

the first equality follows from Proposition \ref{prop:strongadd}, 
the second equality follows from the combination of Claims 
\ref{clm:neutralaux29}, \ref{clm:neutralaux25}, and \ref{clm:neutralaux28},
the inequality follows from the definition of a neutral swap.

For $u \in V(P_{i+2})$ the argument is the same with $P_i$
used instead of $P_{i+3}$ and $P_{i+2}$ used instead of $P_{i+1}$.  
\end{proof} 

\begin{proof}[Proof of \Cref{lem:homogenousneutral}]
The following is immediate from  the bounded number of values property.
\begin{claim} \label{clm:neutral21}
There is a function $\kappa_2$ 
such for hypergraph $H'$ of rank at most $r$ and $U' \subseteq V(H')$
$f_{H'}(U')$ can take at most $\kappa_2(k,r)$ values smaller than or equal to $k$.
\end{claim}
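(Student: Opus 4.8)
The plan is to reduce Claim~\ref{clm:neutral21} to a finite counting estimate over isomorphism types of small induced subhypergraphs, using only the \textbf{bounded size} and \textbf{invariant locality} conditions from the definition of a manageable width function. First I would observe that if $f_{H'}(U') \leq k$, then bounded size forces $|U'| \leq \beta(k,\rank(H')) \leq \beta(k,r) =: p$ (assuming, as we may after replacing $\beta$ by its running maximum, that $\beta$ is non-decreasing in each argument). So every value of $f_{H'}(U')$ that is at most $k$ is already attained on a vertex set of size at most $p$.

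Next I would invoke invariant locality with $H_1 = H'$, $U_1 = U'$ and $H_2 = H'[U']$, $U_2 = V(H'[U'])$: since $H_1[U_1] = H'[U'] = H_2[U_2]$ trivially, this gives $f_{H'}(U') = f_{H'[U']}(V(H'[U']))$. Moreover every edge of the induced subhypergraph $H'[U']$ has size at most $\rank(H') \leq r$, so passing to an induced subhypergraph never raises the rank. Consequently the set of all values $f_{H'}(U') \leq k$ --- taken over all hypergraphs $H'$ of rank at most $r$ and all $U' \subseteq V(H')$ --- is contained in $\{\, f_G(V(G)) : G \text{ a hypergraph with } |V(G)| \leq p \text{ and } \rank(G) \leq r \,\}$. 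It therefore suffices to bound the number of isomorphism types of such $G$: fixing the vertex set to be $[p]$, an edge is one of at most $2^{p}$ nonempty subsets, so there are at most $2^{2^{p}}$ labelled hypergraphs on $[p]$, hence at most $2^{2^{p}}$ isomorphism types. Setting $\kappa_2(k,r) := 2^{2^{\beta(k,r)}}$ then yields the claimed bound, and it is computable since $\beta$ is.

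I do not expect a genuine obstacle here: the argument is pure bookkeeping, and the only points that need care are applying invariant locality so that the value depends on the induced subhypergraph $H'[U']$ alone, and the trivial observation that taking an induced subhypergraph cannot increase the rank. This is exactly the content recorded by Lemmas~\ref{lem:msoaux1} and~\ref{lem:msoaux2} --- the former bounding the number of relevant isomorphism types, the latter evaluating $f$ on each of them via automatizability --- so once those are in hand the claim is immediate.
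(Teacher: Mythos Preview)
Your proposal is correct and takes essentially the same approach as the paper: the paper simply cites Lemmas~\ref{lem:msoaux1} and~\ref{lem:msoaux2}, whose proofs are precisely the bounded-size-plus-invariant-locality reduction to finitely many small isomorphism types that you spell out explicitly. The only cosmetic point is that your count $2^{2^{\beta(k,r)}}$ tacitly embeds hypergraphs on fewer than $p$ vertices into $[p]$, which clashes with the paper's no-isolated-vertices convention; summing over $q \leq p$ fixes this without affecting the argument.
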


Let $Val$ be the set of all values $f_{H'}(U')$  that is at most $\fwidth(H)$,
$U' \subseteq V(H')$, and $H'$ is of rank at most $rank(H)$. 
According to Claim \ref{clm:neutral21},
$|Val| \leq \kappa_2(\fwidth(H),rank(H))$. 
Set $\kappa_1=4\kappa_2$ and
assume that $q \geq \kappa_1(\fwidth(H),rank(H))$.

Clearly $fw_F(P'_1), \dots fw_F(P'_q)$
is a sequence of elements of $Val$.
By Theorem \ref{theor:exmut}, there is $i \leq q-3$
such that $fw_F(P'_i),fw_F(P'_{i+1}),fw_F(P'_{i+2}),fw_F(P'_{i+3})$
is a mutable quadruple.
By Lemma \ref{lem:neutralaux2}, the swap of $P'_{i+1}$ and $P'_{i+2}$
is neutral. We also note that none of these two is a terminal colour interval. 
\end{proof}

\subsection{Gathering the things together} \label{sec:dealtgather}
The main idea is establishing a connection between the `red-blue' case and
the  Reduced Dealteration lemma. 
First of all, we identify the `interface' between two cases. 
In the red-blue case, we apply a series of swaps to obtain a reduced elimination
forest with small monochromatic factorisations for both the red and the blue subsets. 
Apart from that, the resulting forest must preserve an invariant compared to the initial 
forest. To define this invariant, we introduce the notion of a \emph{replacement}.

\begin{definition} \label{def:repls}
A reduced elimination forest $F'$ is a \emph{replacement} of a reduced elimination 
forest $F$ (w.r.t. $(Red,X,Blue)$ if not clear from the context) if each monochromatic
factor of $F$ is also a monochromatic factor of $F'$. 
We say that $F'$ is a strong replacement if $\fwidth(F') \leq \fwidth(F)$. 
\end{definition}

From the reasoning perspective the statement serving as the interface 
between the red-blue case and the general case is what in \cite{DBLP:journals/lmcs/BojanczykP22} paper
called Local Dealteration Lemma. We state the local version below. 

\begin{lemma} \label{lem:localdealt}
There is a monotone function $\kappa_2$ such that the following holds. 
Let $F$ be a reduced elimination forest.
Then there is a strong replacement $F'$ of $F$ such that
the maximal factorisation of both $Red$ and $Blue$ under $F'$ is
at most $\kappa_2(|X|,tw({\bf t}),\fwidth(F),rank(H), \Delta(H))$
\end{lemma}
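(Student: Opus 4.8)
The plan is to follow the route sketched after \Cref{lem:dealt}: starting from a small factorisation of $Red \cup Blue$ in $F$ whose non-context parts are monochromatic, repeatedly apply the neutral swaps of \Cref{lem:existsneutral} to whittle down the number of colour intervals inside each context factor until all of them are short, arguing that every individual swap is local enough to preserve the required invariants.

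First I would fix the factorisation $\mathcal F$ of $Red \cup Blue$. Since $(Red,X,Blue)$ partitions $V(H)$, we have $Red \cup Blue = V(H)\setminus X$, and the maximal factorisation $\mf(V(H)\setminus X)$ consists of only $O(|X|)$ factors, by the structural analysis of elimination forests in~\cite{DBLP:journals/lmcs/BojanczykP22} (each vertex of $X$ is chargeable with a bounded number of maximal factors of its complement). By \Cref{lem:redmono}, every tree factor occurring here is monochromatic and every maximal forest factor is a sibling-union of monochromatic tree factors; grouping those by colour splits each such forest factor into at most one red and one blue forest factor. This yields a factorisation $\mathcal F$ of $Red\cup Blue$ with $O(|X|)$ members in which every forest factor is monochromatic and at most $O(|X|)$ members are context factors. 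By \Cref{cor:monspine}, each context factor of $\mathcal F$ decomposes into monochromatic interval factors, one per colour interval of its spine; hence $\mathcal F$, with each context factor expanded into its interval factors, already induces factorisations of $Red$ and of $Blue$, and the only defect is that a single context factor may contribute many interval factors.

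Next I would thin the context factors. As long as some context factor $U\in\mathcal F$ has more than $\kappa(|X|,\fwidth(F),\rank(H),\Delta(H))$ colour intervals — with $\kappa$ as in \Cref{lem:existsneutral}; when $F$ is not of minimal width one uses the variant of that lemma in which $\fwidth(H)$ is replaced by $\fwidth(F)$ throughout, obtained by the same proof — I would apply the neutral swap that \Cref{lem:existsneutral} provides for $U$, obtaining $F'$ with $\fwidth(F')\le\fwidth(F)$ that is again a reduced elimination forest and in which $U$ is still a context factor, now with two fewer colour intervals (cf.\ \Cref{clm:neutralaux24}). The key point is that a swap reorders two consecutive colour-interval blocks along a single spine and hence alters the ancestor/descendant relation only inside $U$; so every member of $\mathcal F$ other than $U$ is disjoint from $U$, is untouched, and remains a factor of the same type in $F'$. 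Each swap strictly decreases the total number of colour intervals over all context factors of $\mathcal F$, so the process halts, leaving a reduced elimination forest $F'$ in which every context factor of $\mathcal F$ has at most $\kappa(|X|,\fwidth(F),\rank(H),\Delta(H))$ intervals.

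Finally I would verify that $F'$ is a strong replacement of $F$ and read off the bound. The width bound $\fwidth(F')\le\fwidth(F)$ is immediate since every swap is neutral. For the replacement property one checks that a single swap carries each monochromatic factor of $F$ to a monochromatic factor of $F'$: a monochromatic factor is confined to a single maximal monochromatic region, which a swap moves rigidly, so the factor reappears, possibly with a re-expressed base tree factor and removed forest; transitivity over the swap sequence then yields that $F'$ is a replacement of $F$. In $F'$, the red forest factors of $\mathcal F$ together with the red interval factors of the context factors of $\mathcal F$ form a factorisation of $Red$ of size $O(|X|)\cdot\bigl(1+\kappa(|X|,\fwidth(F),\rank(H),\Delta(H))\bigr)$, and since $|\mf(Red)|$ cannot exceed the size of any factorisation of $Red$, this bounds $|\mf(Red)|$; the argument for $Blue$ is symmetric. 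Taking $\kappa_2$ to be a monotone upper bound on this quantity (in particular bounded by a function of $|X|$, $\tw({\bf t})$, $\fwidth(F)$, $\rank(H)$ and $\Delta(H)$) completes the proof. I expect the main obstacle to be exactly this locality analysis: establishing in full that a neutral swap leaves all other members of $\mathcal F$ and all monochromatic factors of $F$ intact — the step described as tedious in the overview — which requires a careful case distinction on whether a factor's root and appendices lie above, inside, or below the swapped intervals, and on how the off-spine subtrees are carried along.
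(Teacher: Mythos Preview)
Your proposal is correct and follows essentially the same route as the paper: start from the maximal factorisation of $Red\cup Blue=V(H)\setminus X$ (bounded in $|X|$ via \Cref{lem:transopr7.5}), split forest factors by colour using \Cref{lem:redmono}, then iterate neutral swaps on over-long context factors until each has at most $\kappa(\cdot)$ intervals, and finally expand into monochromatic interval factors to bound $|\mf(Red)|$ and $|\mf(Blue)|$. The paper packages the iteration step as a separate lemma (\Cref{lem:seqswaps}) and isolates the locality analysis you flag as the main obstacle into \Cref{lem:singleswap}, but the content is the same; your remark about replacing $\fwidth(H)$ by $\fwidth(F)$ in \Cref{lem:existsneutral} is also apt, though in the paper's eventual use all forests happen to have minimal width anyway.
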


We are going to prove Reduced Hypergraph Dealternation Lemma with Lemma \ref{lem:localdealt}
considered as given and then prove Lemma \ref{lem:localdealt} itself. 
For the proof of the Reduced Hypergraph Dealternation Lemma we need three auxiliary lemmas 
that are stated below. 

\begin{lemma} \label{lem:transopr7.5}
Let  $W_0 \subseteq W_1 \subseteq  V(H)$.
Then $\textsc{mf}_F(W_0)| \leq 9 \cdot\textsc{mf}_F(W_1)+3\cdot |W_1 \setminus W_0|$.
\end{lemma}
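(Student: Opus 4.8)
The plan is to prove the slightly stronger statement $|\mf(W_0)| \le |\mf(W_1)| + 3\,|W_1\setminus W_0|$, from which the claimed bound follows at once since $|\mf(W_1)| \le 9\,|\mf(W_1)|$. I would obtain this by deleting the vertices of $W_1\setminus W_0$ from $W_1$ one at a time and controlling how the number of maximal factors changes at each step; concretely, it suffices to prove that for every vertex set $W$ and every $v\in W$ we have $|\mf(W\setminus\{v\})| \le |\mf(W)| + 3$, and then iterate over $W_1\setminus W_0$.

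The first ingredient is a localisation statement. For $v\in W$, let $M_v$ be the block of $\mf(W)$ containing $v$ (unique, since $\mf(W)$ is a partition by Lemma 3.2 of \cite{DBLP:journals/lmcs/BojanczykP22}). I would show that every block of $\mf(W)$ other than $M_v$ is still a maximal $(W\setminus\{v\})$-factor, and that every maximal $(W\setminus\{v\})$-factor is either such a block or is contained in $M_v\setminus\{v\}$; consequently $|\mf(W\setminus\{v\})| = |\mf(W)| - 1 + |\mf(M_v\setminus\{v\})|$. The heart of this is the claim that no factor $G\subseteq W$ can meet two distinct blocks of $\mf(W)$, i.e.\ every $W$-factor lies inside a single block. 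I would prove this by a case analysis on the type of $G$: if $x$ is a root of $G$ and $M$ is the block of $\mf(W)$ containing $x$, then maximality forces $T_x\subseteq M$, and one then checks that $M\cup G$ is again a tree, forest, or context factor (using that the roots/appendices involved are siblings), so by maximality $G\subseteq M$.

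The second ingredient is a purely combinatorial bound for a single factor: for any factor $M$ and any $v\in M$, $|\mf(M\setminus\{v\})|\le 4$ (in fact $\le 3$). This is a finite case distinction according to the position of $v$ in $M$ — the root, an internal vertex of the spine of a context factor, the parent of the appendices, or a vertex strictly inside a subtree attached to $M$ — and in each case $M\setminus\{v\}$ splits into at most one ``upper'' context factor, the forest factor hanging below $v$, and at most one further leftover piece. The crucial point is that the children of a deleted vertex are siblings and therefore bundle into a single forest factor, which is exactly what keeps the count bounded despite unbounded degree. Combining the two ingredients gives $|\mf(W\setminus\{v\})|\le |\mf(W)|+3$, and iterating yields $|\mf(W_0)|\le |\mf(W_1)| + 3\,|W_1\setminus W_0|$.

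I expect the main obstacle to be the ``no straddling'' property in the first ingredient. The delicate configurations are those where the block $M$ containing the root of $G$ is a context factor whose removed forest interacts non-trivially with $G$: one must verify that even then $M\cup G$ is a legitimate factor (so that it contradicts maximality of $M$), which requires tracking how the appendices line up and repeatedly using that $H$-adjacency forces an ancestor/descendant relation in an elimination forest. By comparison the single-factor bound $|\mf(M\setminus\{v\})|\le 4$ is routine, though its proof still requires writing out the enumeration of cases.
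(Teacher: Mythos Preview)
Your approach is correct and in fact yields the stronger inequality $|\mf(W_0)|\le|\mf(W_1)|+3\,|W_1\setminus W_0|$, from which the paper's bound follows trivially. The paper itself gives no argument here: it simply cites Lemma~7.10 of \cite{DBLP:journals/lmcs/BojanczykP22} and remarks that the graph proof carries over to hypergraphs. Your vertex-by-vertex deletion is a clean, self-contained route that sharpens the constant from $9$ to $1$; whether or not it coincides with Bojańczyk--Pilipczuk's original argument, it is a perfectly good proof.

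Two small remarks. First, you over-estimate the difficulty of the ``no straddling'' step. It follows immediately from the fact that $\mf(W)$ is a partition (Lemma~3.2 of \cite{DBLP:journals/lmcs/BojanczykP22}): any $W$-factor $G$ is, by finiteness, contained in some \emph{maximal} $W$-factor $M$, and $M\in\mf(W)$, so $G$ sits inside a single block. No case analysis on the type of $G$ is required, and the identity $|\mf(W\setminus\{v\})|=|\mf(W)|-1+|\mf(M_v\setminus\{v\})|$ then drops out.

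Second, the parenthetical ``(in fact $\le 3$)'' is not correct. Take $M=T_x\setminus Y$ a context factor with spine ending at $p$, let $s\neq p$ be a spine vertex with spine child $s'$ and a further child $w$, and let $v$ be a \emph{strict} descendant of $w$. Then $M\setminus\{v\}$ splits into four pieces: the upper context factor $T_x\setminus(T_{s'}\cup T_w)$, the lower context factor $T_{s'}\setminus Y$, the side context factor $T_w\setminus T_v$, and the forest below $v$. No two of these merge into a single factor, because in every attempted merge the required appendix set fails to consist of siblings. So $4$ is the correct (tight) bound on $|\mf(M\setminus\{v\})|$; happily $\le 4$ is exactly what is needed, since $-1+4=+3$.
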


\begin{proof}
 Lemma 7.10. of \cite{DBLP:journals/lmcs/BojanczykP22} proves this statement for graphs
 and it is not hard to upgrade the proof to hypergraphs. 
\end{proof}

\begin{lemma} \label{lem:transopt7.10}
Let $F$ be a reduced elimination forest of $H$. 
Let $\ell$ be the treewidth of the tree decomposition indudced by $H$.
Let $X,A_1, \dots, A_p$ be a partition of $V(H)$ such that for $1 \leq i \neq j \leq p$,
no hyperedge of $H$ intersects both $A_i$ and $A_j$. 
Let $U \in \textsc{mf}(V(H) \setminus X)$ be a context factor.
Then $U$ intersects at most $\ell+1$ sets among $A_1, \dots A_p$. 
\end{lemma}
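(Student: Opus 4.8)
The plan is to first reduce the statement to a bound on the number of connected components of $H[U]$, and then to establish that bound using the tree decomposition $(F,\bag_F)$ induced by $F$ (whose treewidth is $\ell$, so every bag has at most $\ell+1$ vertices). For the reduction: since $X,A_1,\dots,A_p$ partition $V(H)$ and no hyperedge meets two distinct parts $A_i,A_j$, every hyperedge of $H[V(H)\setminus X]$ is contained in a single $A_j$, so any connected set of $H[V(H)\setminus X]$ lies inside one $A_j$. As $U\subseteq V(H)\setminus X$ and $H[U]$ is an induced sub-hypergraph of $H[V(H)\setminus X]$, each connected component of $H[U]$ lies in one $A_j$, and two vertices of $U$ in different parts cannot share a component of $H[U]$; mapping each part met by $U$ to a component of $H[U]$ meeting it is therefore injective. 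Hence it suffices to prove that $H[U]$ has at most $\ell+1$ connected components.

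For this I would exploit the shape of a context factor. Let $x=rt(U)$ be the root of $U$ and $a_1,\dots,a_m$ (with $m\ge 1$) its appendices; because the appendices are the roots of the removed forest factor $Y(U)$ they are siblings, with a common parent $p:=pr(U)$ lying on $V(F_x)$, and $U=V(F_x)\setminus S$ where $S=\bigcup_{i=1}^m V(F_{a_i})$. The key claim is that every vertex of $V(F_x)\setminus S$ that is $H$-adjacent to a vertex of $S$ belongs to the single bag $\bag_F(p)$. Indeed, if $v\in V(F_x)\setminus S$ is adjacent to $w\in V(F_{a_i})$, then $v$ and $w$ are $\le_F$-comparable; since $v\notin V(F_{a_i})$ the vertex $v$ cannot be a descendant of $w$, so $v$ is a strict ancestor of $a_i$ adjacent to the descendant $w$ of $a_i$, i.e. $v\in\bag_F(a_i)\setminus\{a_i\}$. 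As every strict ancestor of $a_i$ is either $p$ or a strict ancestor of $p$, and any strict ancestor of $p$ adjacent to a descendant of $a_i$ (hence of $p$) lies in $\bag_F(p)$, we get $v\in\bag_F(p)$. Thus the external boundary of $S$ inside $V(F_x)$ has size at most $|\bag_F(p)|\le\ell+1$.

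Finally I would combine these. By Lemma~\ref{lem:redconnect}, since $F$ is reduced, $H[V(F_x)]$ is connected; moreover $S\neq\emptyset$. In any connected hypergraph, deleting a nonempty vertex set $S$ leaves at most $|\partial S|$ connected components: each surviving component must contain a vertex adjacent to $S$ (otherwise that component is closed under hyperedges, hence a proper nonempty union of components of the original connected hypergraph, which is impossible), and distinct components give distinct such boundary vertices. Since $H[V(F_x)]$ with $S$ removed is exactly $H[U]$, it has at most $\ell+1$ components, and the lemma follows from the reduction in the first paragraph. The one point that genuinely needs the context-factor structure is the boundary claim: it is essential that the appendices are siblings, so that the ancestor-parts of all their bags collapse into the single bag $\bag_F(p)$ rather than into $m$ unrelated bags whose union could be arbitrarily large. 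Beyond carefully tracking the induced-subgraph conventions for hypergraphs, I do not anticipate a real obstacle.
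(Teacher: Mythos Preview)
Your proof is correct and follows essentially the same approach as the referenced Lemma~7.5 of \cite{DBLP:journals/lmcs/BojanczykP22}: bound the number of parts $A_j$ meeting $U$ by the number of connected components of $H[U]$, and bound the latter by observing that the neighbourhood of the removed forest $S$ inside $V(F_x)$ is contained in the single bag $\bag_F(p)$, of size at most $\ell+1$. The paper itself simply cites that lemma and asserts the hypergraph upgrade is straightforward, so you have in fact supplied the details the paper omits.
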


\begin{proof}
Lemma 7.5. of \cite{DBLP:journals/lmcs/BojanczykP22} proves this statement for graphs
 and it is not hard to upgrade the proof to hypergraphs. 
\end{proof}

The following lemma is an upgrade of Lemma 7.11 of  \cite{DBLP:journals/lmcs/BojanczykP22}. 

\begin{lemma} \label{lem:splitmir}

Let $({\bf t},F)$ be a \textsc{tf}-pair such that 
$F$ is a reduced elimination forest
Then $mir({\bf t},F)$ is a function of $split({\bf t},F)$, 
$tw({\bf t}$), $\fwidth(F)$, and $rank(H)$. 
\end{lemma}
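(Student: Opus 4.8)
The plan is to bound, for a fixed node $x$ of ${\bf t}$, the number of children $y$ of $x$ such that $cmp(y)$ is not well-formed with respect to $F$, i.e.\ such that $\mf_F(cmp(y))$ contains at least one context factor. The key observation is that a context factor $U \in \mf_F(cmp(y))$ has a nonempty set of appendices, and each appendix is the root of a tree factor that was "carved out'' of the base tree factor $X(U)$; these removed subtrees must themselves intersect $bag_{\bf t}(x)$ or sibling components of $y$, because $cmp(y)$ is precisely the set of vertices whose $\mbag_{\bf t}$-node lies in $T_y$. So I would first make precise the statement that whenever $cmp(y)$ has a context factor in its maximal factorisation, there is a vertex in the removed forest $Y(U)$ that belongs to $bag_{\bf t}(x)$ — intuitively, the reason the factor of $cmp(y)$ has a "hole'' is that some ancestor-chain in $F$ passes through a vertex lying outside $cmp(y)$, and since $cmp(y)$ is downward-closed in ${\bf t}$ in the appropriate sense, such a vertex must sit in the adhesion $adh(y) \subseteq bag_{\bf t}(x)$, or in the component of a sibling.

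Second, I would use the bounded-size property of manageable $f$: $\fwidth(F) \le k$ forces $|bag_F(u)| \le \beta(k, rank(H))$ for every $u$, and $tw({\bf t}) = \ell$ bounds $|bag_{\bf t}(x)| \le \ell+1$. Together with Lemma~\ref{lem:transopr7.5} (which bounds $|\mf_F(W_0)|$ in terms of $|\mf_F(W_1)|$ and $|W_1 \setminus W_0|$ for $W_0 \subseteq W_1$) I would relate the factorisations of the $cmp(y)$ of the children to the factorisation of $cmp(x)$, whose maximal-factorisation size is bounded by $split({\bf t},F)$. The point is that $\bigcup_{y \text{ child of } x} cmp(y) = cmp(x) \setminus mrg(x)$, so $cmp(x)$ differs from the union of the children's components only by $mrg(x) \subseteq bag_{\bf t}(x)$, which has size at most $\ell+1$.

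Third — and this is where I expect the main obstacle — I would argue that only boundedly many children $y$ can contribute a context factor. A context factor in $\mf_F(cmp(y))$ witnesses that the base tree factor of that context factor has an appendix whose subtree in $F$ is \emph{not} contained in $cmp(y)$; I would show that the root of such a "hole'' subtree must be adjacent (in $F$, hence in some bag of $F$) to, or an ancestor of, a vertex of $adh(y) \cup mrg(x)$, and that each such bag has bounded size. Since distinct children $y, y'$ have disjoint components and the relevant witnessing vertices all lie in a fixed set of size bounded by a function of $\ell$ and $\beta(k,rank(H))$, a counting/pigeonhole argument caps the number of irregular children. Combining the three steps yields $mir({\bf t},F) \le g\big(split({\bf t},F), tw({\bf t}), \fwidth(F), rank(H)\big)$ for a computable $g$. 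The delicate part is making the "hole implies a witness in a small fixed set'' claim airtight for hypergraphs, since the interaction between $F$-ancestry and ${\bf t}$-descendancy is exactly what the factor structure encodes; I would model the argument on Lemma~7.11 of~\cite{DBLP:journals/lmcs/BojanczykP22} but replace every appeal to additivity of bag size by the bounded-size property, and every appeal to graph adjacency by hyperedge incidence, checking that connectivity of induced subhypergraphs (Lemma~\ref{lem:redconnect}) still gives what is needed.
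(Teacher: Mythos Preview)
Your setup (step~2) is right and matches the paper: $\bigcup_{y} cmp(y) = cmp(x)\setminus mrg(x)$, so Lemma~\ref{lem:transopr7.5} bounds $|\mf_F(\bigcup_y cmp(y))|$ in terms of $split({\bf t},F)$ and $tw({\bf t})$. But step~3 has a genuine gap. The claim that each irregular child $y$ is witnessed by a vertex in a fixed small set (something like $bag_{\bf t}(x)$, of size $\le tw({\bf t})+1$) does not hold as stated: an appendix $a$ of a context factor in $\mf_F(cmp(y))$ satisfies $a\notin cmp(y)$, but $a$ may lie in a sibling component $cmp(y')$ rather than in $bag_{\bf t}(x)$. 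And even when you do extract a witness $w\in adh(y)\subseteq bag_{\bf t}(x)$ via the separator property, the same $w$ can sit in $adh(y')$ for many siblings $y'$, so pigeonhole on $bag_{\bf t}(x)$ alone cannot cap the number of irregular children.

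The paper avoids this by \emph{not} analysing the individual $\mf_F(cmp(y))$. Instead it looks at the maximal factorisation of the \emph{union} $A:=\bigcup_y cmp(y)$, whose size is already bounded. The key extra ingredient you are missing is Lemma~\ref{lem:transopt7.10}: every context factor of $\mf_F(A)$ meets at most $\ell+1$ of the $cmp(y)$, where $\ell$ is the treewidth of the decomposition induced by $F$ (bounded via the bounded-size property, i.e.\ by $\fwidth(F)$ and $rank(H)$). Multiplying gives a bound on the set $I$ of children whose component meets some context factor of $\mf_F(A)$. The proof is then completed by the contrapositive: if $i\notin I$, then $cmp(y_i)$ is covered by tree factors of $\mf_F(A)$; since $F$ is reduced each such tree factor $B$ has $H[B]$ connected (Lemma~\ref{lem:redconnect}), hence $B$ lies entirely in a single $cmp(y_j)$, so $cmp(y_i)$ is a disjoint union of tree factors and therefore well-formed. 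Your proposal never passes through the union factorisation or Lemma~\ref{lem:transopt7.10}, and without that bridge the pigeonhole in step~3 does not close.
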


\begin{proof}
We demonstrate existence of a function $h$ so that 
${\bf t}=(T,{\bf B})$ 
for an arbitrary node $x$ of $T$,
the irregularity of $x$ is at most $h(split({\bf t},F), tw({\bf t}),\fwidth(F),rank(H))$. 

Let $y_1,\dots, y_p$ be the children of $x$. 
For each $1 \leq i \leq p$, let $A_i=cmp(y_i)$. Let $X=V(H) \setminus \bigcup_{i \in [p]} A_i$. 
Now, we employ Lemma \ref{lem:transopt7.10} to observe that any maximal factor $B$ of $V(H) \setminus X$
that is a context factor intersects at most $\ell+1$ elements of $A_1, \dots, A_p$
which is known to be a function of $\fwidth(F)$ and $rank(H)$.

We observe that the size of the maximal factorisation of $cmp(x)$ is at most $split({\bf t},F)$
simply by definition. We further note that $\bigcup_{i \in [p]} A_i=cmp(x) \setminus mrg(x)$. 
On the other hand, $mrg(x) \leq |bag(x)|$ which is at most $tw({\bf t})$.
It follows from Lemma \ref{lem:transopr7.5} 
that the maximal factorisation size of $A_1 \cup  \dots \cup A_p$ 
and hence the number of context factors in that factorisation, is upper bounded by a function of 
$split({\bf t},F)$ and $tw({\bf t})$. Let $I \subseteq [p]$ be the set
of all $i$ such that $A_i$ intersects with a context factor of $\textsc{mf}(A_1 \cup \dots \cup A_p)$.
Clearly $|I|$ is at most the upper bound of Lemma \ref{lem:transopt7.10} 
multiplied by the number of context
factors in $\textsc{mf}(A_1 \cup \dots \cup A_p)$. 
Hence $|I|$ is upper bounded by a function $h(split({\bf t},F), tw({\bf t}), \fwidth(F),rank(H))$.

We claim that if $A_i$ is not well-formed then $i \in I$ thus upper-bounding the irregularity of
$x$ and hence the maximum irregularity since $x$ is taken arbitrarily. It remains to prove the claim. 
The union of tree and forest factor in the factorisation of $\bigcup_{i \in [p]} A_i$
can be seen as the union of tree factors $B_1, \dots, B_r$. 
Let $i \in [p] \setminus I$. We are going to prove that $A_i$ is regular which is the same as
to say that the factorisation of $A_i$ consists of the union of tree and forest factors. 

Let $J \subseteq [r]$ be the set such that $A_i \subseteq \bigcup_{j \in J} B_j$. 
We observe that for each $j \in J$, $B_j \subseteq A_i$, Indeed, $B_j \subseteq A_1 \cup \dots \cup A_p$.
On the other hand, since $H[B_j]$ is connected, $B_j$ cannot belong to several distinct $A_i$. 
It follows that $A_i=\bigcup_{j \in J} B_j$. Now, assume for the sake of contradiction that a maximal
factor $C$ of $A_i$ is context one. Let $u$ be the root of $C$. As we have observed $u \in B_j$
for some $j \in J$. But then $F_u \subseteq B_j$. That is $C \subset F_u \subseteq B_j$ in contradiction
to $C$ being maximal. 
\end{proof}

\begin{proof}[Proof of the Reduced Hypergraph Dealternation Lemma (Lemma \ref{lem:dealt})]
First, we assume that $F$ is reduced because the transformation from an arbitrary
forest to a reduced one can be done so that each bag of the tree decomposition induced by the resulting
forest is a subset of a bag of the tree decomposition induced by the original forest 
(Lemma 7.8 of \cite{DBLP:journals/lmcs/BojanczykP22}). 
Consequently, the $\fwidth$ of the resulting reduced forest does not increase.

The next step is to define \emph{canonical} Red-Blue decompositions
w.r.t. the nodes $x$ of $T$ (recall that ${\bf t}=(T,B)$). 
In particular, the decomposition w.r.t $x$ is $(cmp(x),adh(x),V(H) \setminus (cmp(x) \cup adh(x)))$. 
We let $(x_1, \dots, x_m)$ be the sequence of nodes of $T$ being explored in a bottom up manner,
that is children occur in this sequence before their parents. 
We observe the following. 
\begin{claim} \label{clm:inv111}
Let $1 \leq i <j \leq m$. Then $cmp(x_i)$ is monochromatic for
the canonical red-blue decomposition w.r.t. $x_j$.
\end{claim}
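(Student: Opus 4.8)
The plan is to argue entirely inside the tree decomposition ${\bf t}=(T,B)$, since the canonical Red-Blue decomposition with respect to a node is defined purely from ${\bf t}$ and the elimination forest $F$ plays no role in this statement. Fix $j$ and write $Red = cmp(x_j)$, $X = adh(x_j)$ and $Blue = V(H)\setminus(cmp(x_j)\cup adh(x_j))$ for the canonical decomposition with respect to $x_j$. The first step is to use the bottom-up enumeration: since children precede parents in $(x_1,\dots,x_m)$, the inequality $i<j$ rules out $x_j$ being a proper descendant of $x_i$. Hence exactly one of two situations occurs, and I would dispatch them separately: (a) $x_j$ is a proper ancestor of $x_i$, in which case I will show $cmp(x_i)\subseteq Red$; or (b) $x_i$ and $x_j$ are incomparable in $T$, in which case I will show $cmp(x_i)\subseteq Blue$. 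Either way $cmp(x_i)$ is monochromatic, which is the assertion.

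Case (a) is a one-line containment argument: if $x_j$ is a proper ancestor of $x_i$ then $T_{x_i}\subseteq T_{x_j}$, so $cmp(x_i)=\bigcup_{y\in T_{x_i}} mrg(y)\subseteq\bigcup_{y\in T_{x_j}} mrg(y)=cmp(x_j)=Red$.

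For case (b) the plan is to prove the two disjointness facts $cmp(x_i)\cap cmp(x_j)=\emptyset$ and $cmp(x_i)\cap adh(x_j)=\emptyset$, which together are exactly the statement $cmp(x_i)\subseteq Blue$. The first is immediate once one records the standard preliminary that the margins $\{mrg(n):n\in V(T)\}$ form a partition of $V(H)$: since $x_i$ and $x_j$ are incomparable the descendant sets $T_{x_i}$ and $T_{x_j}$ are disjoint, so the two unions of margins are disjoint. The second is the one place that needs genuine tree-decomposition bookkeeping. Assuming $v\in cmp(x_i)\cap adh(x_j)$, pick $y\in T_{x_i}$ with $v\in mrg(y)$; by the connectivity axiom the node set $\{n : v\in\bag(n)\}$ is a subtree of $T$ containing $y$, $x_j$ and $\mathrm{parent}(x_j)$, hence it contains the whole path from $y$ to $x_j$. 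Since $x_j$ lies outside $T_{x_i}$, that path leaves $T_{x_i}$ through $x_i$, so it contains $x_i$ and $\mathrm{parent}(x_i)$, and if $y\neq x_i$ it also contains $\mathrm{parent}(y)$. When $y\neq x_i$ this yields $v\in\bag(\mathrm{parent}(y))$, hence $v\in adh(y)$, contradicting $v\in mrg(y)$; when $y=x_i$ it yields $v\in\bag(x_i)\cap\bag(\mathrm{parent}(x_i))=adh(x_i)$, again contradicting $v\in mrg(x_i)$.

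The main (and essentially the only) obstacle is keeping the subtree/connectivity argument of case (b) straight; everything else reduces to containment of descendant sets. Along the way I would state and quickly prove, by the same style of connectivity argument, the two preliminary facts used above: that the margins partition $V(H)$, and that $adh(x)\cap cmp(x)=\emptyset$ for every node $x$ (the latter so that in case (a) we also get $cmp(x_i)\cap adh(x_j)=\emptyset$, confirming that $cmp(x_i)$ really sits inside $Red\subseteq Red\cup Blue$). Both are routine consequences of the fact that, for each vertex $v$, the nodes whose bag contains $v$ induce a connected subtree of $T$.
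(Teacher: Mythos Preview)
Your proposal is correct and follows essentially the same case split as the paper: ancestor versus incomparable, with $cmp(x_i)\subseteq cmp(x_j)$ in the first case and $cmp(x_i)\subseteq V(H)\setminus(cmp(x_j)\cup adh(x_j))$ in the second. The paper compresses your two disjointness arguments in case~(b) into the single observation that $cmp(x_i)$ is disjoint from $\bag(x)$ whenever $x$ is not a descendant of $x_i$, and then notes that $adh(x_j)\cup cmp(x_j)$ lies inside the union of bags over descendants of $x_j$; your more explicit connectivity argument unpacks exactly this.
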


\begin{claimproof}

If $x_j$ is a predecessor of $x_i$ then $cmp(x_i) \subseteq cmp(x_j)$ 
hence the claim holds. 
Otherwise, we observe that $cmp(x_i) \subseteq V(H) \setminus (adh(x_j) \cup cmp(x_j))$. 
Indeed $cmp(x_i)$ does not intersect with $bag(x)$ if $x$ is not a successor of 
of $x_i$. We note that $adh(x_j) \cup cmp(x_j)$ is a subset of the union of $bag(x_j)$ and the bags of 
the proper successors of $x_j$. By assumption $x_j$ is not a predecessor of $x_i$.
By assumption about $(x_1, \dots, x_m)$  $x_j$ is not a successor of $x_i$ either.
It follows that no successor of $x_j$ is a successor of $x_i$. Hence the union of the bags of successors of $x_j$
is disjoint with $cmp(x_i)$ as required.  
\end{claimproof}

Let $F_0=F$ and let $F_1, \dots\ F_m$ be reduced elimination forests
of $H$ defined as follows. For each  $1 \leq i \leq m$, $F_i$ is a strong 
replacement of $F_{i-1}$ with respect to the canonical red-blue decomposition
for $x_i$. 
We are going to prove by induction on $1 \leq i \leq m$ that,
for each $j \leq i$, $comp(x_j)$ has maximal factorisation of size 
at most $\kappa_3(tw({\bf t}),\fwidth(F),rank(H), \Delta(H))=
\kappa_2(|X|,tw({\bf t}),\fwidth(F),rank(H), \Delta(H))$. 

Let $i=1$. Then this holds for $comp(x_1)$ simply by Lemma \ref{lem:localdealt}.
Since $i$ is the smallest index, the remaining part of the statement is vacuously true.
Assume first that $i>1$. First, let $j=i$. Then the statement for $j$ is immediate from   
Lemma \ref{lem:localdealt}.
Assume now that $j<i$. By the induction assumption, the statement about $x_j$ is
true regarding $F_{i-1}$. As $F_i$ is a replacement of $F_{i-1}$ and, by Claim \ref{clm:inv111},
$comp(x_j)$ is monochromatic for the canonical red-blue decomposition for $x_i$, 
a witnessing factorisation for $comp(x_j)$ w.r.t. $F_{i-1}$ remains so w.r.t. $F_i$.

As $|X|$ in the considered canonical decompositions is at most 
the treewidth of the tree decomposition induced by $F$, which is, in turn upper
bounded by a function of $\fwidth(F)$ and $rank(H)$, it follows from the above reasoning that 
$$split({\bf t},F_m) \leq \kappa_3(tw({\bf t}), \fwidth(F),rank(H), \Delta(H))$$
as required. It remains to add that $mir({\bf t},F_m)$ is also upper-bounded 
by a function of the same parameters by Lemma \ref{lem:splitmir}. 
\end{proof}

It now remains to prove Lemma \ref{lem:localdealt}. 
First, we need an auxiliary lemma specifying properties of a single swap. 
\begin{lemma} \label{lem:singleswap}
Let $F$ be a reduced elimination forest. Let $B$ be a context factor of $F$
with the spine $P$ and appendices $W$. 
Let $P_1, \dots, P_q$ be the colour intervals of $P$. 
Let $1 \leq i \leq q-4$.
Let $F'$ be the elimination forest obtained from $F$ by swapping $P_{i+1}$ and $P_{i+2}$.
Then the following statements hold regarding $F'$. 
\begin{enumerate}
\item $F'$ is a reduced elimination forest. 
\item Each monochromatic factor of $F$ remains a monochromatic factor of $F'$ in
the same capacity (that is a tree factor remains a tree factor, a
forest factor remains a forest factor, and a context factor remains a context factor). 
\item Each context factor of $F$ disjoint with $B$ remains a context factor in $F'$
with the same spine and appendices. 
\item $B$ remains a context factor of $F'$ with the same appendices but with the spine
obtained from $P$ by swapping $P_{i+1}$ and $P_{i+2}$ as in Definition \ref{def:singleswap}.
In particular, the number of colour intervals of $B$ w.r.t. $B$ reduces by $2$.  
\end{enumerate} 
\end{lemma}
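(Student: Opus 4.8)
I will prove the four statements in order, writing $F'$ in the notation of Definition~\ref{def:singleswap}: with $u_1,u_2$ the last vertex of $P_i$ and the first of $P_{i+1}$, with $v_1,v_2$ the last of $P_{i+1}$ and the first of $P_{i+2}$, and with $w_1,w_2$ the last of $P_{i+2}$ and the first of $P_{i+3}$, the forest $F'$ deletes $(u_1,u_2),(v_1,v_2),(w_1,w_2)$ and adds $(u_1,v_2),(w_1,u_2),(v_1,w_2)$; the net effect is to reorder the spine segment $\dots P_i\,P_{i+1}\,P_{i+2}\,P_{i+3}\dots$ into $\dots P_i\,P_{i+2}\,P_{i+1}\,P_{i+3}\dots$, carrying the subtrees hanging off the spine along with their vertices. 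Put $R_1:=Cont_F(P_{i+1},\{v_2\})$ and $R_2:=Cont_F(P_{i+2},\{w_2\})$, the interval factors of $P_{i+1}$ and $P_{i+2}$; by Corollary~\ref{cor:monspine} these are monochromatic of opposite colours, say $R_1$ blue and $R_2$ red, so $P_i$ is red and $P_{i+3}$ is blue. Two bookkeeping facts underlie everything: (a) the only vertices whose parent changes are $u_2,v_2,w_2$; and (b) the only vertices whose descendant set $V(F_x)$ changes are those on the spine of $P_{i+1}$ (losing $R_2$) and those on the spine of $P_{i+2}$ (gaining $R_1$); in particular $V(F_{w_2})$, the subtrees hanging off the spines of $P_{i+1}$ and $P_{i+2}$, and everything outside $R_1\cup R_2\cup V(F_{w_2})$ are left intact, and one checks $V(F'_{u_1})=V(F_{u_1})$.

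For statement~1 I first argue $F'$ is an elimination forest. Since $F'$ merely re-grafts the two subtrees (path plus hanging subtrees) carrying $R_1$ and $R_2$, it is a rooted forest, so only comparability of the endpoints of hyperedges is at stake. Using (a) and (b) one sees that a pair comparable in $F$ can become incomparable in $F'$ only when one endpoint lies in $R_1$ and the other in $R_2$: all other ``moved'' pairs (both inside $R_1$, both inside $V(F_{w_2})$, one in $R_1$-spine and one in $V(F_{w_2})$, etc.) either remain comparable in $F'$ or were already incomparable in $F$, using that $u_1$ is an ancestor of all of $R_1\cup R_2\cup V(F_{w_2})$ in both forests and that $v_1,w_1$ are still ancestors of $V(F_{w_2})$ in $F'$. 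But a hyperedge meeting both the blue $R_1$ and the red $R_2$ is forbidden by the $(Red,X,Blue)$-partition, so $F'$ is an elimination forest. For reducedness it suffices to inspect the three new parent–child pairs, since all other pairs and their subtrees are inherited from $F$ (using $V(F'_q)\supseteq V(F_q)$ when $q$ is on the $P_{i+2}$-spine, and the monochromaticity argument below when $q$ is on the $P_{i+1}$-spine). For $(u_1,v_2)$: $V(F'_{v_2})=V(F_{u_2})$, and $u_1$ is adjacent to $V(F_{u_2})$ because $F$ is reduced. For $(w_1,u_2)$: $V(F'_{u_2})\supseteq V(F_{w_2})$, and $w_1$ is adjacent to $V(F_{w_2})$ because $w_2$ was the spine-successor of $w_1$ in $F$. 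The third pair $(v_1,w_2)$ is the crux: in $F$, $v_1$ is adjacent to $V(F_{v_2})=R_2\cup V(F_{w_2})$, but $v_1$ is blue while $R_2$ is monochromatically red, so the partition forces the witnessing hyperedge to meet $V(F_{w_2})=V(F'_{w_2})$; hence $v_1$ is adjacent to $V(F'_{w_2})$ and $F'$ is reduced.

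For statements~2 and~3 I use (a), (b) together with Corollary~\ref{cor:monspine}. No monochromatic tree factor of $F$ is rooted on the spine of $P_{i+1}$ or of $P_{i+2}$, nor at $u_2$ or $v_2$ (such a descendant set meets both colours), so every monochromatic tree factor has an intact descendant set and remains a tree factor of $F'$. A monochromatic forest factor of $F$ is a union of such tree factors with sibling roots and intact subtrees; of the three vertices whose parent changes, $u_2,v_2$ are never such roots, and $w_2$ can only be the sole root of a monochromatic forest factor (its $F$-siblings lie in the red $R_2$, opposite to the blue $w_2$), so the sibling relation among the roots survives and the factor remains a forest factor (in fact a tree factor) of $F'$. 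For a monochromatic context factor $C=X(C)\setminus Y(C)$, the root of $X(C)$ is not in the moved region, so $X(C)$ is a tree factor of $F'$; the swap can disturb $Y(C)$ only if $v_2$ is one of its roots (appendices), in which case $Y(C)\supseteq V(F_{v_2})=R_2\cup V(F_{w_2})$ while $C$, being blue, avoids the red $R_2$, so replacing the root $v_2$ of $Y(C)$ by $w_2$ (the $F'$-spine-successor of $v_1$) gives a forest factor of $F'$ removing exactly the same vertices, and $C$ is again a context factor of $F'$. For statement~3, a context factor disjoint from $B$ can have no spine vertex or base-root inside $R_1\cup R_2$ (these lie in $B$), and none of $u_2,v_2,w_2$ among its appendices (their $F$-parents $u_1,v_1,w_1$ lie on the spine $P\subseteq B$), so it and all its relevant subtrees are untouched by the swap, keeping the same spine and appendices in $F'$.

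Finally, for statement~4, the vertex set of $B$ is unchanged: $V(F'_{rt(B)})=V(F_{rt(B)})$, and the removed forest $Y(B)$ — the subtrees of the appendices $W$, which lie far below $P_{i+3}$ — is untouched; hence $B$ is a context factor of $F'$ with root $rt(B)$ and appendices $W$, and its $F'$-spine is $P_1\cdots P_i\,P_{i+2}\,P_{i+1}\,P_{i+3}\cdots P_q$. Since $P_i$ and $P_{i+2}$ have the same colour (both opposite to $P_{i+1}$) and likewise $P_{i+1}$ and $P_{i+3}$, each of these two pairs merges into a single maximal monochromatic subpath, so the number of colour intervals of $B$ drops by exactly two. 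I expect statement~1 — specifically re-establishing reducedness for the new edge $(v_1,w_2)$ — to be the main obstacle, since this is precisely where the monochromaticity of the interval factors (Corollary~\ref{cor:monspine}) and the absence of bichromatic hyperedges must combine to route $v_1$'s adjacency around the red region $R_2$ onto $V(F_{w_2})$.
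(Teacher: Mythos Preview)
Your argument is largely correct and, for statement~1, actually more careful than the paper's proof: you check reducedness for each of the three new parent--child edges, and the monochromaticity argument for the edge $(v_1,w_2)$ (routing the adjacency witness around the red $R_2$) is exactly the right idea. The paper handles reducedness via the one-line observation that any $H$-neighbour of $u$ in $F_u$ lies in $Native(u)\cup X$ and hence survives in $F'_u$; your case-by-case treatment amounts to the same thing but makes the per-child requirement explicit.

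There is, however, a genuine gap in your treatment of monochromatic \emph{context} factors in statement~2. Your claim that ``the root of $X(C)$ is not in the moved region'' is false: take any $x$ on the $P_{i+1}$-spine with spine-child $x'$, and let $C=V(F_x)\setminus V(F_{x'})$; this is a blue context factor with root $x$ squarely in the moved region, and $X(C)=V(F_x)$ is \emph{not} a tree factor of $F'$ (it equals $V(F'_x)\cup R_2$, with $R_2$ now above $x$). Likewise, your claim that $Y(C)$ is disturbed only when $v_2$ is an appendix misses the symmetric cases where $u_2$ or $w_2$ is an appendix (when $pr(C)\in\{u_1,w_1\}$) and the cases where the spine-successor of $pr(C)$ on the $P_{i+1}$- or $P_{i+2}$-spine is an appendix. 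In all of these the conclusion still holds --- $C$ is a context factor of $F'$ with possibly different base tree factor and appendices --- but your argument as written does not cover them.

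The cleanest fix is the paper's route: argue from the spine $Q$ of $C$ rather than from $X(C),Y(C)$. Since every edge removed by the swap is bichromatic while $Q$ is monochromatic, $Q$ is a path of $F'$; since the same-colour children of each $Q$-vertex (and their monochromatic subtrees) are untouched, the hanging subtrees of $C$ are intact; and if the final vertex of $Q$ happens to be one of $u_1,v_1,w_1$, the unique bichromatic child it had in $F$ (necessarily an appendix of $C$) is simply replaced in $F'$ by the new spine-child, which serves as the new appendix. This handles all cases uniformly.
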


\begin{proof}
First, it is not hard to observe that
\begin{claim} \label{clm:singleswap1}
$F$ and $F'$ have the same set of leaves. 
\end{claim}

Suppose that $F'$ is not a reduced elimination forest. 
This means that $F'$ has an internal vertex $u$ such that $u$ is not adjacent to 
$F'_u \setminus \{u\}$. 
By Claim \ref{clm:singleswap1}, $u$ is an internal vertex of $F$.
Let $v$ be a neighbor of $u$ in $F_u$. 
Then $v \in Native(u) \cup X$. By construction, $v \in F'_u$ and hence $u$ is adjacent to
$F'_u \setminus \{u\}$, a contradiction.

We further observe that if $F_u$ is monochromatic then for each vertex of $F_u$,
its adjacency is invariant in $F'$. Hence the monochromatic tree factors of $F$ remain
ones in $F'$. Further on, we notice that if $F_u$ is monochromatic then its parent remains
the same in $F'$. We thus conclude that monochromatic tree factors of $F$
remain ones in $F'$.

In order to demonstrate that monochromatic context factors are invariant in $F'$,
we first observe the following. 
\begin{claim} \label{clm:singleswap2}
A monochromatic path of $F$ remains a monochromatic path of $F'$.
\end{claim}

Now, let $U$ be a monochromatic context factor of $F$ and let $Q$
be its spine. By Claim \ref{clm:singleswap2}, $Q$ is monochromatic in $F'$. 
Further on, again by all Claim \ref{clm:singleswap2}, 
for each vertex $v$ of $Q$, the intersection of the neighborhood of $v$ with $Native(v)$
remains invariant in $F'$. Taking into account the invariance of the monochromatic
tree factors rooted by the children of the vertices of $Q$, we conclude that $F[U]=F'[U]$. 
Finally, the final vertex if $Q$ is not a leaf by 
Claim \ref{clm:singleswap1}. We conclude that $U$ remains a monochromatic context
factor in $F'$. 

Let $U$ be a context factor of $F$ disjoint with $B$.
We observe that, by construction $F[U]=F'[U]$ and the neighborhood of the spine of $U$
also invariant. Hence, the penultimate statement of the lemma holds. 

The last item of the lemma is immediate by construction. 
\end{proof}

\begin{definition} \label{def:relstrongrepl}
Let $F$ be a reduced elimination forest of $H$ and let 
${\bf U}=\{U_1, \dots U_q\}$ be a factorisation of $Red \cup Blue$
w.r.t. $F$ where each $U_i$ is a monochromatic tree or forest factor
or a (not necessarily monochromatic) context factor.
A reduced elimination forest $F'$ is an \emph{strong replacement} 
of $F$ w.r.t. ${\bf U}$ if the following conditions hold. 
\begin{enumerate}
\item $\fwidth(F') \leq \fwidth(F)$. 
\item Each monochromatic factor of $F$ remains a monochromatic factor in
the same capacity (that is a tree factor remains a tree faxctor, a
forest factor remains a forest factor, and a context factor remains acontext factor). 
\item Each context factor of $F$ that is an element of ${\bf U}$ remains a context factor in $F'$.
\end{enumerate}
\end{definition}

\begin{remark} \label{rem:strongrepl}
Let us make three remarks related to the notion of a strong replacement. 
First, if $F'$ is a strong replacement of $F$ w.r.t. to a factorisation
${\bf U}$ then ${\bf U}$ is also a factorisation of $F'$ where each factor
retains its capacity. Second, the relation is transitive. That is,
if $F''$ is a strong replacement of $F'$ w.r.t. ${\bf U}$ then $F''$
is a strong replacement of $F$ w.r.t. ${\bf U}$. 
Finally, $F'$ is a strong replacement of $F$ in  the sense of Definition \ref{def:repls}. 
\end{remark}

\begin{lemma} \label{lem:seqswaps}
Let $F$ be a reduced elimination forest of $H$ and let 
${\bf U}=\{U_1, \dots U_q\}$ be a factorisation of $Red \cup Blue$
w.r.t. $F$ where each $U_i$ is a monochromatic tree or forest factor
or a (not necessarily monochromatic) context factor.
Then there is strong replacement $F'$ of $H$ where the number of colour 
intervals in the spine of each context factor $U \in {\bf U}$ 
is at most $\kappa(|X|, \fwidth(H),rank(H), \Delta(H))$ 
where $\kappa$ is the function as in Lemma \ref{lem:existsneutral}. 
\end{lemma}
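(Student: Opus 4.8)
The plan is to transform $F$ into $F'$ by a finite sequence of neutral swaps, invoking Lemma~\ref{lem:existsneutral} repeatedly. Let $U_{i_1},\dots,U_{i_s}$ be the members of $\mathbf{U}$ that happen to be context factors. I would process them one at a time: for the factor $U$ currently under consideration, as long as the spine of $U$ in the current forest has more than $\kappa(|X|,\fwidth(H),\rank(H),\degree(H))$ colour intervals, Lemma~\ref{lem:existsneutral} supplies a neutral swap of two non-terminal consecutive colour intervals of $U$; replace the current forest by the swapped one and repeat. Each such swap decreases the number of colour intervals on the spine of $U$ by exactly two (Lemma~\ref{lem:singleswap}(4)), so this inner loop terminates, after which the spine of $U$ has at most $\kappa(\dots)$ colour intervals. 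After all $s$ inner loops, call the resulting forest $F'$.

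The correctness rests on Lemma~\ref{lem:singleswap}, which records the effect of a single swap: the forest stays reduced (item~1); every monochromatic factor stays a monochromatic factor of the same capacity (item~2); every context factor disjoint from the one being swapped stays a context factor with the same spine and appendices (item~3); and the swapped factor stays a context factor, losing two colour intervals (item~4). Two consequences make the bookkeeping go through. First, since $\mathbf{U}$ is a factorisation its members are pairwise disjoint, so a swap performed inside one context factor of $\mathbf{U}$ leaves every other member untouched by items~2 and~3; moreover, because the swap only redirects the parent pointers of the three vertices $u_2,v_2,w_2$ of Definition~\ref{def:singleswap}, all of which lie on the spine, even the vertex set of the swapped factor is unchanged --- it is merely reorganised with a shorter spine. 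Hence $\mathbf{U}$ remains literally the same collection of vertex sets and, by Remark~\ref{rem:strongrepl}, remains a factorisation of every intermediate forest in which each member keeps its capacity, so the next application of Lemma~\ref{lem:existsneutral} is legitimate. Second, since a neutral swap does not increase $\fwidth$ and preserves every monochromatic factor, each intermediate forest is a strong replacement (Definition~\ref{def:relstrongrepl}) of the previous one w.r.t.\ $\mathbf{U}$; by transitivity (Remark~\ref{rem:strongrepl}) the final $F'$ is a strong replacement of $F$ w.r.t.\ $\mathbf{U}$, hence a strong replacement in the sense of Definition~\ref{def:repls}, which together with the interval bound obtained above is exactly the assertion of the lemma.

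The step that needs the most care is verifying that Lemma~\ref{lem:existsneutral} is applicable at \emph{every} stage. Reducedness is handed to us by Lemma~\ref{lem:singleswap}(1). Minimality of $\fwidth$ is maintained because a neutral swap gives $\fwidth(F_{\mathrm{new}})\le\fwidth(F_{\mathrm{old}})$, while $\fwidth(F_{\mathrm{new}})\ge\fwidth(H)$ holds since the tree decomposition induced by any elimination forest of $H$ has $f$-width at least $\fwidth(H)$; so if $F$ starts at minimal width (the case relevant to Lemma~\ref{lem:localdealt}, and the one for which Lemma~\ref{lem:existsneutral} is stated) it stays there, and the stated bound --- with the fixed values $\fwidth(H)$ and $|X|$ --- remains valid throughout by monotonicity of $\kappa$. (To dispense with minimality one observes that the proof of Lemma~\ref{lem:existsneutral} uses only $|bag_F(u)|\le\beta(\fwidth(F),\rank(H))$, so its conclusion holds with $\fwidth(H)$ replaced by $\fwidth(F)$ for any reduced $F$; again monotonicity of $\kappa$ and the fact that $\fwidth$ never increases keep the bound stable along the sequence.) The remaining bit of care is the loop invariant: after the $j$-th context factor has been processed it has at most $\kappa(\dots)$ colour intervals, and --- by the disjointness argument above --- no later swap, which occurs inside a factor of higher index, can disturb it, so all $s$ context factors simultaneously meet the bound in $F'$.
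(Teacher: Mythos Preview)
Your proposal is correct and follows essentially the same route as the paper: iterate neutral swaps supplied by Lemma~\ref{lem:existsneutral}, using Lemma~\ref{lem:singleswap} to certify that each swap keeps the forest reduced, preserves every other member of $\mathbf{U}$ (items~2 and~3), and drops the interval count of the swapped factor by two (item~4), and then invoke transitivity of strong replacement (Remark~\ref{rem:strongrepl}). The only organisational difference is that the paper packages termination via a single global potential --- the total ``excess'' $\sum_i \max(0,\,y_i-\kappa(\dots))$ over all context factors --- and inducts on it, rather than processing the factors in a fixed order as you do; both arguments are equally valid. Your explicit discussion of why the minimal-width hypothesis of Lemma~\ref{lem:existsneutral} persists along the sequence (since $\fwidth$ cannot drop below $\fwidth(H)$ and a neutral swap never increases it) is a point the paper's proof invokes implicitly but does not spell out.
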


\begin{proof}
\sloppy
For each $i \in [q]$ define the \emph{excess} $x_i$ as follows. 
If $U_i$ is not a context factor or if $U_i$ is a context factor
but the number $y_i$ of colour intervals of its spine is at most 
$\kappa(|X|, \fwidth(H),rank(H), \Delta(H))$ then $x_i=0$. 
Otherwise, we let $x_i=y_i-\kappa(|X|, \fwidth(H),rank(H), \Delta(H))$.
We prove the lemma by induction on $x=sum_{i \in [q]} x_i$ which we refer to 
as the \emph{excess} of ${\bf U}$ w.r.t. $F$.
If $x=0$ then simply set $F'=F$. 

So, we assume that  $x>0$. 
We are going to demonstrate existence of a strong replacement $F^*$
of $F$ w.r.t. ${\bf U}$ such that the excess of ${\bf U}$ w.r.t. $F^*$ is smaller than w.r.t. $F$. 
The lemma will then follow from the induction assumption combined 
with the transitivity of strong replacement w.r.t. ${\bf U}$ as specified
in Remark \ref{rem:strongrepl}. 

Since $x>0$, we identify a context factor $U \in {\bf U}$ with the
number of colour intervals greater than 
$\kappa(|X|, \fwidth(H),rank(H), \Delta(H))$. 
By Lemma \ref{lem:existsneutral}, there is a neutral an elimination forest $F^*$ 
obtained from $F$ by a neutral swap of colour intervals of $U$. 
We first note that $\fwidth(F^*) \leq \fwidth(F)$ simply by definition of a neutral swap. 
Then we employ Lemma \ref{lem:singleswap} to conclude that $F^*$ is a reduced
elimination forest and that the remaining properties of a strong replacement 
w.r.t. ${\bf U}$ hold as well. It also follows from Lemma \ref{lem:singleswap} that the number
of colour intervals of $U$ w.r.t. $F^*$ reduced by $2$ and the number of such 
intervals in the other context factors of ${\bf U}$ remains the same.
Thus we conclude that the excess of ${\bf U}$ w.r.t. $F^*$ is smaller than w.r.t. $F$
as required. 
\end{proof}

\begin{proof}[Proof of \Cref{lem:localdealt}]
Let ${\bf U}_0$ be a minimal factorisation of $Red \cup Blue$ w.r.t. $F$. 
Let ${\bf U}$ be the factorisation obtained from ${\bf U}_0$ by replacing 
each non-monochromatic forest factor by two monochromatic forest factors.
The replacement is possible by Lemma \ref{lem:redmono}.
Further on by Lemma \ref{lem:redmono}, each non-context factor of ${\bf U}$ 
is monochromatic. We conclude that ${\bf U}$ satisfies the properties 
of a factorisation in Definition \ref{def:relstrongrepl}. 
We thus apply Lemma \ref{lem:seqswaps} to conclude existence of a
strong replacement $F'$ of $F$ w.r.t.${\bf U}$ such that the spine of each context
factor of ${\bf U}$ w.r.t. $F'$ has at most 
$\kappa(|X|, \fwidth(H),rank(H), \Delta(H))$ colour intervals. 
As each context factor is the union of its interval factors, 
it follows from Corollary \ref{cor:monspine} that there is a factorisation of $Red \cup Blue$
w.r.t. $F'$ into at most $|{\bf U}|+|{\bf U}| \cdot \kappa(|X|, \fwidth(H),rank(H), \Delta(H))$
monochromatic factors. We observe that 
$|{\bf U}| \leq 2\cdot |{\bf U}_0|$ and $|{\bf U}_0|$ is upper bounded
by a function  of $|X|$ by Lemma \ref{lem:transopr7.5}. 
Finally, in light of Remark \ref{rem:strongrepl}, we conclude that $F'$
is a strong replacement of $F$ as in Definition \ref{def:repls}. 
\end{proof}

\section{Verifying Hypergraph Widths in MSO} \label{sec:mso}

Let $\mathbf{H}_{f,k,r}$ be the set of all hypergraphs $H$ of rank at most $r$ such that $f_H(V(H)) \leq k$.
Let $\mathit{Val}_{f,k,r}=\{f_H(V(H)) \mid H \in {\bf H}_{f,k,r}\}$. 

\begin{lemma} \label{lem:msoaux1}
There is an algorithm with runtime upper bounded by a monotone function on $k$
and $r$ that returns $\mathbf{H}_{f,k,r}$ and $\mathit{Val}_{f,k,r}$.
\end{lemma}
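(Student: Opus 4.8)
The plan is a brute-force enumeration that becomes finite thanks to the bounded size property of manageable width functions. First I would invoke the bounded size condition of \Cref{managblewidth}: letting $\beta$ be the (computable) size bound for $f$, if a hypergraph $H$ has rank at most $r$ and $f_H(V(H)) \le k$, then $|V(H)| \le \beta(k,\mathrm{rank}(H)) \le \beta(k,r) =: m$. Hence every hypergraph in $\mathbf{H}_{f,k,r}$ has at most $m$ vertices, and $m$ is computable from $k$ and $r$ together with the fixed data $\beta$.

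Second, observe that up to isomorphism there are only finitely many hypergraphs with at most $m$ vertices and rank at most $r$: fixing the vertex set to be $\{1,\dots,m'\}$ for some $m' \le m$, such a hypergraph is a set of nonempty subsets of $\{1,\dots,m'\}$ of size at most $r$, and there are at most $2^{m'} \le 2^m$ such subsets, hence at most $2^{2^m}$ candidate edge sets. The algorithm enumerates all of them (over all $m' \le m$), discarding those with isolated vertices so that the support equals the vertex set; this enumeration runs in time bounded by a function of $m$, and hence of $k$ and $r$.

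Third, for each enumerated candidate $H$ the algorithm computes $f_H(V(H))$ using the automatizability condition of \Cref{managblewidth}, which by assumption takes time bounded by a function of $|V(H)| \le m$. It retains exactly those $H$ with $f_H(V(H)) \le k$; by the size bound this collection is precisely $\mathbf{H}_{f,k,r}$, returned as a finite list of isomorphism-type representatives (meaningful by invariant locality, since $f_H(V(H))$ then depends only on the isomorphism type of $H$). Collecting the recorded values yields $\mathit{Val}_{f,k,r}$. All three steps are bounded by a computable function of $k$ and $r$, which one may replace by a monotone one, as required.

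There is no substantial obstacle here; the only delicate point is conceptual. Literally $\mathbf{H}_{f,k,r}$ is an infinite family, since hypergraphs may live on arbitrary vertex sets, so ``returning'' it must mean returning a complete set of representatives of its isomorphism types; invariant locality makes this legitimate and, for the downstream uses (\Cref{lem:iso4}, \Cref{clm:neutral21}), sufficient. It is the bounded size property that carries the real weight, converting ``$f_H(V(H)) \le k$'' into an a priori cardinality bound on $V(H)$ and thereby cutting the search space down to a size controlled by $k$ and $r$ alone.
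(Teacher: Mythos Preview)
Your proposal is correct and follows essentially the same approach as the paper: use the bounded size property to cap $|V(H)|$ by $\beta(k,r)$, enumerate all rank-$\le r$ hypergraphs on at most that many vertices, and filter using automatizability. Your additional remarks on explicit enumeration bounds and the isomorphism-type interpretation of ``returning $\mathbf{H}_{f,k,r}$'' are nice clarifications that the paper leaves implicit, but the underlying argument is the same.
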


\begin{proof}
Let $\beta=\beta[f]$.
By the bounded size property, $\mathbf{H}_{f,k,r}$
is a subset of the set of hypergraphs of rank at most $r$
and with at most $\beta(k,r)$ vertices. 
Moreover, by the automatizability property, the value $f_H(V(H))$
for each such a graph can be tested by an algorithm upper bounded
by a monotone function of $k$ and $r$. Therefore 
$\mathbf{H}_{f,k,r}$ and $\mathit{Val}_{f,k,r}$ can be generated by the following simple algorithm.
Start from two empty sets (one for storage of the elements of $\mathbf{H}_{f,k,r}$,
the other for $\mathit{Val}_{f,k,r}$). 
Generate all hypergraphs $H$ of rank at most $r$ and with at most $\beta(k,r)$
vertices. For each such a graph $H$, test $f_H(V(H))$. If $f_H(V(H))$ is at most $k$,
add $H$ to the set for graphs and add $f_H(V(H))$ to the set for values. 
At the end of the exploration, return the resulting sets. 
\end{proof}

\begin{lemma} \label{lem:msoaux2}
Let $H$ be a hypergraph of rank at most $r$ and let $U \subseteq V(H)$.
Then $f_H(U) \leq k$ if and only if $H[U] \in {\bf H}_{f,k,r}$. 
\end{lemma}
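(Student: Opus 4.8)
The plan is to reduce the statement to the \textbf{invariant locality} condition of manageable width functions. Recall that ${\bf H}_{f,k,r}$ is the set of hypergraphs $H'$ with $\rank(H') \le r$ and $f_{H'}(V(H')) \le k$, so the claimed equivalence ``$f_H(U) \le k \iff H[U] \in {\bf H}_{f,k,r}$'' unpacks as ``$f_H(U) \le k$'' being equivalent to the conjunction ``$\rank(H[U]) \le r$ and $f_{H[U]}(V(H[U])) \le k$''. Since every edge of $H[U]$ has the form $U \cap e$ with $e \in E(H)$, its cardinality is at most $|e| \le \rank(H) \le r$, so $\rank(H[U]) \le r$ holds unconditionally. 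Hence the whole content of the lemma is the single equality $f_H(U) = f_{H[U]}(V(H[U]))$, from which both directions follow at once.

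First I would record two elementary facts about the induced subhypergraph. By definition $V(H[U]) = U$, and moreover inducing is idempotent: $H[U][U] = H[U]$, because $U \cap (U \cap e) = U \cap e$ for every $e \in E(H)$, so re-intersecting every edge of $H[U]$ with $U$ changes nothing. (The standing no-isolated-vertices convention is preserved along the way: if $v \in U$ then $v$ lies in some $e \in E(H)$ and hence in the edge $U \cap e$ of $H[U]$.) Then I would apply invariant locality to the two pairs $(H_1,U_1) = (H,U)$ and $(H_2,U_2) = (H[U], V(H[U]))$: here $H_1[U_1] = H[U]$ and $H_2[U_2] = H[U][U] = H[U]$ are literally the same hypergraph, so they are trivially isomorphic via the identity, and invariant locality yields $f_H(U) = f_{H[U]}(V(H[U]))$.

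With this identity established the conclusion is immediate. If $f_H(U) \le k$, then $f_{H[U]}(V(H[U])) = f_H(U) \le k$ and, as already noted, $\rank(H[U]) \le r$, so $H[U] \in {\bf H}_{f,k,r}$. Conversely, if $H[U] \in {\bf H}_{f,k,r}$, then $f_{H[U]}(V(H[U])) \le k$, and again by the identity $f_H(U) = f_{H[U]}(V(H[U])) \le k$. I do not expect any genuine obstacle: the only point that needs care is matching the formal definition of the induced subhypergraph (edges $U \cap e$, empty ones discarded) against the hypotheses of invariant locality and verifying idempotence of inducing, so that the identity map qualifies as the required isomorphism; everything else is bookkeeping.
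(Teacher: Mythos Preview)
Your proposal is correct and follows essentially the same approach as the paper: both arguments reduce the lemma to the identity $f_H(U) = f_{H[U]}(V(H[U]))$ obtained from invariant locality, and then read off the equivalence from the definition of ${\bf H}_{f,k,r}$. You are simply more explicit than the paper in checking the rank bound and the idempotence of inducing that makes invariant locality applicable, but there is no substantive difference.
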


\begin{proof}
We simply observe that, by the invariant locality property,
$f_H(U)=f_{H[U]}(V(H[U]))$. 
Then, simply by definition of ${H}_{f,k,r}$,
$f_{H[U]}(V(H[U]) \leq k$ if and only if $H[U] \in {\bf H}_{f,k,r}$.
\end{proof}

\begin{lemma} \label{lem:msoaux3}
    Let $H$ be a hypergraph with $q$ vertices and $r$ edges. Let $M$ be a model over a signature 
containing relations $Vertex, Edge, Adjacent$ where $Vertex$ and $Edge$ are unary relations and $Adjacent$ is a binary relation.
		such that $Vertex(M)$ and $Edge(M)$ are disjoint and $Adjacent(M)$ consists of elements $(v_i,e_j)$ with $v_i \in Vertex(M), e_j \in Edge(M)$. In other words, $M$ models a hypergraph $H(M)$ with $(Vertex(M), Edge(M), Adjacent(M))$ being the incidence graph with the roles of the sets obvious from their names. \\
    Then there is an MSO formula $\varphi_H(U)$ such that $M \models \varphi_H(U)$ if and only if $U \subseteq Vertex$ and $H(M)[U]$ is isomorphic to $H$.
\end{lemma}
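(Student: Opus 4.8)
The plan is to exploit that $H$ is a \emph{fixed} finite hypergraph, so the property ``$H(M)[U]$ is isomorphic to $H$'' can be spelled out as a first-order (a fortiori MSO) formula with free second-order variable $U$: it guesses a bijection from the vertex set of $H$ onto $U$ and then verifies, clause by clause, that this bijection is a hypergraph isomorphism onto the induced subhypergraph $H(M)[U]$. Concretely, identify $V(H)$ with $[q]=\{1,\dots,q\}$ and write $E(H)=\{S_1,\dots,S_r\}$ with each $S_j\subseteq[q]$ nonempty and the $S_j$ pairwise distinct. I would introduce first-order variables $x_1,\dots,x_q$ (the intended images of $1,\dots,q$) and, for an edge variable $e$, the auxiliary formula $\chi_j(e,\bar x):=\bigwedge_{i\in S_j}\mathit{Adjacent}(x_i,e)\wedge\bigwedge_{i\in[q]\setminus S_j}\neg\,\mathit{Adjacent}(x_i,e)$, which asserts that among $x_1,\dots,x_q$ the edge $e$ contains exactly those indexed by $S_j$.

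The formula $\varphi_H(U)$ is then $\exists x_1\cdots\exists x_q$ applied to the conjunction of three parts. First, a ``bijection'' part asserting that $x_1,\dots,x_q$ are pairwise distinct, each satisfies $\mathit{Vertex}$ and lies in $U$, and $\forall y\,(y\in U\to\bigvee_i y=x_i)$; this forces $U\subseteq\mathit{Vertex}$ and $|U|=q$, and makes $i\mapsto x_i$ a bijection $\pi\colon[q]\to U$. Second, an ``every edge of $H$ is present'' part $\bigwedge_{j=1}^r\exists e\,(\mathit{Edge}(e)\wedge\chi_j(e,\bar x))$. Third, a ``no extra edge'' part $\forall e\,\big(\mathit{Edge}(e)\to(\bigwedge_i\neg\,\mathit{Adjacent}(x_i,e)\vee\bigvee_{j}\chi_j(e,\bar x))\big)$, saying each edge of $H(M)$ either misses $U$ entirely or has trace on $U$ equal to $\pi(S_j)$ for some $j$. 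Since $U=\{x_1,\dots,x_q\}$, $\chi_j(e,\bar x)$ pins down the trace $e\cap U$ completely as $\pi(S_j)$, so the last two parts together say exactly that the family of nonempty traces $\{e\cap U : e\in E(H(M))\}$ equals $\{\pi(S):S\in E(H)\}$; by the definition of $H(M)[U]$ as the hypergraph on $U$ whose edges are precisely those nonempty traces, this is the statement that $\pi$ is an isomorphism $H\to H(M)[U]$.

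Correctness then splits into two short directions: if $M\models\varphi_H(U)$, the witnesses give such a $\pi$ and hence an isomorphism, while conversely any isomorphism $\pi\colon V(H)\to U$ yields witnesses $x_i:=\pi(i)$ satisfying all three parts. Finally I would note that $\varphi_H$ is assembled by a routine recursion over the finite data $[q]$ and $S_1,\dots,S_r$, so it is computable from $H$ and has size polynomial in $q$ and $r$; this is all that its later uses (in \Cref{lem:iso4}) require. I do not expect a genuine obstacle here: the only delicate points are getting the semantics of the induced subhypergraph exactly right — that all of $U$ is retained as the vertex set, and that distinct edges of $H(M)$ with equal trace on $U$ collapse to a single edge — and making sure the ``present'' and ``no extra'' clauses jointly encode set equality of the two trace families in both directions, rather than some spurious injection between $E(H(M))$ and $E(H)$; once these are handled the rest is bookkeeping.
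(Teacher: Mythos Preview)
Your proposal is correct and follows essentially the same approach as the paper: guess a bijection $[q]\to U$ via first-order witnesses $x_1,\dots,x_q$ and then verify the edge structure clause by clause. The paper differs only in that it also existentially quantifies $r$ edge witnesses $e_1,\dots,e_r$ and asserts they are exactly the (pairwise distinct) edges of $H(M)$ meeting $U$ with the prescribed incidence, whereas you handle edges via a per-pattern $\exists e\,\chi_j$ together with a universal ``no other trace'' clause; your encoding is arguably cleaner with respect to the collapse of distinct $H(M)$-edges having equal trace on $U$, a subtlety you explicitly flag.
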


\begin{proof}
\sloppy
 First we arbitrarily enumerate the vertices of $H$ as $v_1, \cdots, v_q$ and edges of $H$ as 
$e_1,  \cdots, e_r$. We let $Inc(H)$ be the incidence relation of $H$ consisting of pairs $(i,j)$ such that vertex $i$ belongs to edge $j$.\\
    Now define the formula 
    \begin{align*}
        \varphi_H(U) := \ Vertices(U) \land \exists v_1, \cdots, v_q, e_1, \cdots, e_r \ ( & IsSet(U,v_1,\cdots, v_r) \\& \land AreEdges(U,e_1,\cdots,e_r) \\ & \land  Incidence_H(v_1,\cdots,v_q, e_1, \cdots, e_r) )
    \end{align*}
    with the conjuncts defined as follows. 
    $Vertices(U)$ is true if and only if $U \subseteq Vertex$, implemented as $Vertices(U) := \forall u :  U(u) \to \mathit{Vertex}(u)$. \\
    The formula $IsSet(U,v_1, \cdots, v_q)$ is true if and only if $U = \{v_1, \cdots, v_q\}$. We implement this by testing the following three properties.
    \begin{itemize}
        \item $AllIn(U,v_1, \cdots, v_q)$ checks that each $v_i$ belongs to $U$. In terms of MSO, $AllIn(U, v_1, \cdots, v_q) := U(v_1) \land \cdots \land U(v_q)$.
        \item $NoneElse(U,v_1, \cdots, v_q)$ checks that each element of $U$ is one of $v_1, \cdots, v_q$. In terms of MSO, $NoneElse(U,v_1, \cdots, v_q) := \forall u:U(u) \to (u = v_1 \lor \cdots \lor u = v_q)$.
        \item $AllDiff(v_1, \cdots, v_q)$ checks that all $v_i$ are distinct. In terms of MSO, $AllDiff(v_1, \cdots, v_q) := \bigwedge_{1 \leq i < j \leq q} v_i \neq v_j$.
    \end{itemize}
    It is not hard to see that $IsSet(U,v_1, \cdots, v_q) := AllIn(U,v_1, \cdots, v_q) \land NoneElse(U,v_1, \cdots, v_q) \land AllDiff(v_1, \cdots, v_q)$ has the required properties.
    The formula $AreEdges(U, e_1, \cdots, e_r)$ is true if and only if the subset of edges adjacent to at least one element of $U$ in the adjacency relation is precisely $\{e_1, \cdots, e_r\}$. To put it differently, $\{e_1, \cdots, e_r\}$ are the only hyperedges of $H$ having non-empty intersection with $U$. We implement the formula as the conjunction of the following components.
		
\begin{enumerate}
        \item $AllIntersect(U)$, which is true if and only if each of $e_1, \cdots, e_r$ intersects $U$ (being considered as a hyperedge of $H(M)$). For this, we first need an auxiliary function $Intersect(U,e)$ (which will also be useful later on) testing whether a specific hyperedge intersects $U$). In terms of MSO, $Intersect(U,e) := \exists u: U(u) \land Adjacent(u,e)$. Now, $AllIntersect(U,e_1, \cdots, e_r) := \bigwedge_{i \in [r]} Intersect(U,e_i)$.
        \item $NonElseInter(U,e_1, \cdots, e_r)$, which is true if and only if no hyperedge but $e_1, \cdots, e_r$ intersects $U$. In terms of MSO, $NoneElseInter(U,e_1, \cdots, e_r) := \forall e: Intersect(U,e) \rightarrow \bigvee_{i \in [r]}e = e_i$.
        \item That $e_1, \cdots, e_r$ are all distinct is tested by the already defined $AllDiff(e_1, \cdots, e_r)$.
    \end{enumerate}
    It is not hard to see that 
    \begin{align*}
    AreEdges(U,e_1, \cdots, e_r) :=\ & AllIntersect(U, e_1, \cdots, e_r) \\ & \land NoneElseInter(U, e_1, \cdots, e_r) \\& \land AllDiff(e_1, \cdots, e_r)
    \end{align*}
    has the required properties.

    The formula $Incidence_H(v_1, \cdots, v_q, e_1, \cdots, e_r)$ is true if and only if the adjacency relation between $v_1, \cdots, v_q$ and $e_1, \cdots, e_r$ in $H(M)[U]$ is precisely as the adjacency relation between $[q]$ and $[r]$ in $H$ under the natural mapping. In particular, $Incidence_H(v_1, \cdots, v_q, e_1, \cdots, e_r) := \bigwedge_{(i,j) \in Inc(H)} Adjacent(v_i, e_j) \land \bigwedge_{i \in [q], j \in [r], (i,j) \notin Inc(H)} \neg Adjacent(v_i, e_j)$. The first conjunct checks that all the incidence relations as in $H$ are present in $H(M)[U]$ and the second conjunct checks that no other incidence pairs are present are in $H(M)[U]$. Note a subtlety, for the second conjunct we need to restrict quantification over the indices of vertices to be within $[q]$ and for the indices of hyperedges to be within $[r]$.

    Assume now that $H(M)[U]$ is isomorphic to $H$. This means that it is possible to enumerate the vertices of $U$ as $v_1, \cdots, v_q$ and the edges of $H(M)[U]$ as $e_1, \cdots, e_r$ so that $v_i$ is contained in $e_j$ in $H(M)[U]$ if and only if $i$ is contained in $j$ in $H$. Clearly, this means that $\varphi_H(U)$ is true. Conversely, suppose that $\varphi_H(U)$ is true. Then there is an enumeration $v_1, \cdots, v_q$ of the vertices of $U$ and enumeration $e_1, \cdots, e_r$ of the hyperedges of $H_M$ that have the same adjacency as $[q]$ and $[r]$ for $H$ under the natural mapping. This means that $H(M)[U]$ is isomorphic to $H$ under the mappings $v_i \to i$ for $i \in [q]$ and $e_j \to j$ for $j \in [r]$.
\end{proof}

\begin{lemma}\label{lem:msoaux4}
    Let $\mathbf{H}$ be a set of hypergraphs such that there is a $q$ such that for each $H \in \mathbf{H}, |V(H)| \leq q$. Then there exists an MSO formula $\varphi_\mathbf{H}(U)$ such that $M \models \varphi_\mathbf{H}(U)$ if and only if $U \subseteq Vertices(M)$ and $H(M)[U]$ is isomorphic to an element of $\mathbf{H}$.
\end{lemma}
\begin{proof}
    As the number of non-isomorphic graphs of at most $q$ vertices is finite, $|\mathbf{H}|$ is finite. Therefore we utilize 
Lemma \ref{lem:msoaux3} and set $\varphi_\mathbf{H}(U) := \bigvee_{H \in \mathbf{H}} \varphi_H(U)$. 
\end{proof}

\begin{proof}[Proof of \Cref{lem:iso4}]
Let ${\bf H}={\bf H}_{f,k,r}$ 
It follows from the combination of Lemma \ref{lem:msoaux2} and 
Lemma \ref{lem:msoaux4}
that $\varphi_{\bf H}(U)$ (as in Lemma \ref{lem:msoaux4})
is true if and only if $f_{H(M)}(U) \leq k$.

Now, 
$\varphi^*_{f,k,r} := \forall u: Vertex(u) \rightarrow \exists U: IsBag(u,U) \land \varphi_{\bf H}(U)$,
where $M  \models IsBag(u,U)$ if and only if $U=bag_{F(M)}(u)$. In terms of MSO, the formula can be defined
as follows. 
$IsBag(u,U) := \forall v: U(v) \leftrightarrow u=v \lor Vertex(v) \land Descendant(v,u) \land \exists w Descendant(u,w) \land Neighb(v,w)$
where $Descendant(u_1,u_2)$ is true if and only there is a path from $u_1$ to $u_2$ in $(Vertices, Child)$ 
(well known to be MSO definable) and $M \models Neighb(v,w)$ if and only $v$ and $w$ are adjacent in $H(M)$. In terms of MSO, this can be defined
as follows. 
$Neighb(v,w) := \exists e: Edge(e) \land Adjacent(v,e) \land Adjacent(w,e)$. 

The second statement of the lemma follows from the combination of Lemma \ref{lem:msoaux1},
and the construction of the MSO formulas as described in the proofs of lemma \ref{lem:msoaux3},
\ref{lem:msoaux4}, and the above part of the present proof. 

The last statement of the lemma follows from the combination of Lemma \ref{lem:msoaux1} and \ref{lem:msoaux2}. 
\end{proof}

\section{Proof of Theorem \ref{th:cgchrom}}
\label{sec:comb}

\begin{definition} \label{def:stinter}
With the notation as in Definition \ref{def:stain}
the \emph{stain intersection graph} 
$\mathsf{SG}({\bf t},F)$ is a bipartite graph with
parts corresponding to $V(T)$ and $V(H)$
with $x \in V(T)$ adjacent to $u \in V(H)$ if and only if
$x \in Stain(u,({\bf t},F))$.
\end{definition}

\begin{theorem} \label{theor:sgdegbound}
There is a function $h$ such that
the max-degree of a vertex of $V(T)$ in 
$\mathsf{SG}({\bf t},F)$ is at most 
$h(split({\bf t},F),mir({\bf t},F),\fwidth({\bf t}),\rank(H))$. 
\end{theorem}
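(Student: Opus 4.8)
The plan is to fix a node $x$ of $T$ and count the vertices $u \in V(H)$ with $x \in Stain(u,({\bf t},F))$ directly. By the definition of the stain such a $u$ satisfies one of: (a) $\mbag(u)=x$; (b) $u$ has an $F$-child $v$ with $\mbag(v)=x$; or (c) $u$ has an $F$-child $v$ with $x$ lying strictly on the $T$-path from $\mbag(u)$ to $\mbag(v)$. The vertices of kinds (a) and (b) number at most $2|mrg(x)|\le 2|\bag(x)|$ --- for (b) because each witnessing $v$ lies in $mrg(x)$ and has a unique $F$-parent --- and the bounded size property bounds $|\bag(x)|$ in terms of $\fwidth({\bf t})$ and $\rank(H)$. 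So the real work is bounding the vertices of kind (c).

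For kind (c) recall that $u\in cmp(z)$ exactly when $\mbag(u)$ lies in the subtree $T_z$; in particular $u\in cmp(x)$ iff $\mbag(u)\in T_x$, and for a child $y$ of $x$ in $T$, $u\in cmp(y)$ iff $\mbag(u)\in T_y$. Since $x$ is strictly between $\mbag(u)$ and $\mbag(v)$, these two $T$-nodes sit in distinct components of $T-x$, and the components of $T-x$ are $T\setminus T_x$ together with the subtrees $T_y$ over the children $y$ of $x$. This splits kind (c) into three types: \textbf{(I)} $\mbag(u)\notin T_x$ and $\mbag(v)\in T_y$ for some child $y$, i.e.\ $u\notin cmp(x)$ has an $F$-child $v\in cmp(x)$; \textbf{(III)} $\mbag(u)\in T_y$ and $\mbag(v)\notin T_x$, i.e.\ $u\in cmp(x)$ has an $F$-child $v\notin cmp(x)$; and \textbf{(II)} $\mbag(u)\in T_{y_u}$, $\mbag(v)\in T_{y_v}$ for distinct children $y_u\neq y_v$, i.e.\ $u\in cmp(y_u)$ has an $F$-child $v\in cmp(y_v)$.

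Each type is bounded through the maximal factorisations of the relevant components, using two structural facts about maximal $cmp(\cdot)$-factors: a tree factor and a forest factor are $F$-downward closed, and in a context factor $F_w\setminus Y$ every non-root vertex has its $F$-parent inside the factor (its parent cannot lie in the downward-closed removed forest $Y$, nor outside $F_w$). For \textbf{(I)}, any $v\in cmp(x)$ with $parent_F(v)\notin cmp(x)$ is therefore the root of the factor $G\in\mf_F(cmp(x))$ containing it, and in the forest-factor case all such candidates share one $F$-parent; hence the set of $F$-parents $u$ has at most $|\mf_F(cmp(x))|\le split({\bf t},F)$ elements. For \textbf{(III)}, if $u\in cmp(x)$ has a direct $F$-child outside $cmp(x)$ then $F_u\not\subseteq cmp(x)$, so $u$ lies in a context factor $G\in\mf_F(cmp(x))$; a short case analysis of the children of vertices on and off the spine of $G$ shows only $pr(G)$ --- the common $F$-parent of the appendices of $G$ --- can have a direct $F$-child outside $cmp(x)$, so type (III) contributes at most the number of context factors of $\mf_F(cmp(x))$, again $\le split({\bf t},F)$. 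For \textbf{(II)}, $v\in cmp(y_v)\subseteq V(H)\setminus cmp(y_u)$ forces $F_u\not\subseteq cmp(y_u)$ with $u\in cmp(y_u)$, so $cmp(y_u)$ is not $F$-downward closed, hence not well-formed; by definition of irregularity there are at most $mir({\bf t},F)$ children $y$ of $x$ with $cmp(y)$ not well-formed, and for each the same spine analysis forces any contributing $u$ to be the appendix-parent of a context factor of $\mf_F(cmp(y))$, of which there are at most $split({\bf t},F)$; so type (II) contributes at most $mir({\bf t},F)\cdot split({\bf t},F)$.

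Summing, the degree of $x$ in $\mathsf{SG}({\bf t},F)$ is at most $2|\bag(x)|+2\,split({\bf t},F)+mir({\bf t},F)\cdot split({\bf t},F)$, and bounding $|\bag(x)|$ via the bounded size property yields the claimed function $h$ of $split({\bf t},F)$, $mir({\bf t},F)$, $\fwidth({\bf t})$ and $\rank(H)$. I expect the main obstacle to be exactly the structural bookkeeping for context factors: proving carefully that inside a maximal context $cmp(\cdot)$-factor only the parent of its appendices can have a direct $F$-child leaving the component (and dually that tree and forest factors are $F$-downward closed), and recognising that type (II) is precisely the situation controlled by irregularity --- this is what keeps the (unbounded) number of children of $x$ in $T$ from entering the bound.
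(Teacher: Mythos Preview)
Your proof is correct and follows essentially the same strategy as the paper: bound the margin contribution by the bag size, and show that every non-margin neighbour $u$ of $x$ in $\mathsf{SG}({\bf t},F)$ must be the appendix-parent $pr(G)$ of some context factor $G$ in a suitable maximal factorisation, then count such factors via $split$ and $mir$.

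The organisational difference is that the paper packages the argument through a single object, the \emph{characteristic factorisation} $\textsc{cf}_{{\bf t},F}(x)=\bigcup_{U\in\{mrg(x),\,V(H)\setminus cmp(x)\}\cup\{cmp(y)\}}\mf_F(U)$, proves one lemma (Lemma~\ref{lem:nonmargcontext}) saying that any non-margin $u$ is the appendix-parent of its factor there, and then bounds the number of such context factors in one shot (Lemma~\ref{lem:contfactbounded}), invoking the complement bound $|\mf_F(V(H)\setminus cmp(x))|\le 3\,split$ for the part you call Type~I. Your explicit I/II/III split avoids the complement factorisation entirely by arguing Type~I from the \emph{child} side ($v$ is a root of its factor in $\mf_F(cmp(x))$), which yields the slightly sharper contribution $split$ instead of $3\,split$; Types~II and~III match the paper's reasoning exactly. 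Both routes give a bound of the form $c_1|\bag(x)|+c_2\,split+mir\cdot split$.
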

 
Proving Theorem \ref{theor:sgdegbound}
requires some preparation. 

\begin{definition}
Let $x \in V(T)$ and let $\{x,u\} \in E(\mathsf{SG}({\bf t},F))$. 
We say that $\{x,u\}$ is a \emph{margin edge} 
if $u \in mrg(x)$. 
\end{definition}

\begin{definition}
Let $x \in V(T)$. 
The \emph{characteristic partition} of $x$
denoted by $\textsc{cp}_{\bf t}(x)$
(the subscript may be omitted if clear from the context)
if the family of all non-empty sets 
among 
$$\{mrg(x), V(H) \setminus cmp(x)\} \cup \{cmp(y)| y \in children_T(x)\}.$$

Further on, the \emph{characteristic factorisation} of 
$x$ denoted by $\textsc{cf}_{{\bf t},F}(x)$
(again, the subscript may be omitted if clear from the context)
is $\bigcup_{U \in \textsc{cp}(x)} \textsc{mf}_F(U)$.  
\end{definition}

\begin{lemma} \label{lem:contfactbounded}
The number of context factors of 
$\textsc{cf}_{{\bf t},F}(x)$, that are not subsets of $mrg(x)$, is at most 
$mir({\bf t},F) \cdot split({\bf t},F)+3\cdot split({\bf t},F)$. 
\end{lemma}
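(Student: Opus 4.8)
The plan is to split the required count according to which set of the characteristic partition $\textsc{cp}(x)$ a counted context factor belongs to, recalling that $\textsc{cp}(x)$ consists of (the non-empty sets among) $mrg(x)$, the children components $cmp(y)$ for $y\in children_T(x)$, and $V(H)\setminus cmp(x)$, and that these sets form a partition of $V(H)$. First observe that every factor in $\mf_F(mrg(x))$ is a subset of $mrg(x)$, so no factor coming from the member $mrg(x)$ is counted; and since margins are pairwise disjoint we have $cmp(y)\cap mrg(x)=\emptyset$ and $(V(H)\setminus cmp(x))\cap mrg(x)=\emptyset$, so every non-empty factor coming from one of the remaining members automatically fails to be a subset of $mrg(x)$. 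Hence the quantity to be bounded equals $\sum_{y}\#\{\text{context factors of }\mf_F(cmp(y))\}+\#\{\text{context factors of }\mf_F(V(H)\setminus cmp(x))\}$.

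For the children I would invoke the two structural parameters directly. If $cmp(y)$ is well-formed with respect to $F$ then, by definition, $\mf_F(cmp(y))$ contains no context factor; otherwise $\mf_F(cmp(y))$ contains at most $|\mf_F(cmp(y))|\le split({\bf t},F)$ of them, because $cmp(y)$ is a target set of ${\bf t}$. Since, by the definition of $mir({\bf t},F)$, at most $mir({\bf t},F)$ children $y$ of $x$ have a non-well-formed $cmp(y)$, the children contribute at most $mir({\bf t},F)\cdot split({\bf t},F)$, and it remains to bound the context factors of $\mf_F(V(H)\setminus cmp(x))$ by $3\cdot split({\bf t},F)$.

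For this last bound I would use that $V(H)$ is itself a single $F$-factor (the forest factor formed by all component subtrees of $F$), so $|\mf_F(V(H))|=1$, and that $cmp(x)$ is a target set, so $cmp(x)$ is the union of at most $split({\bf t},F)$ maximal $F$-factors. Thus $V(H)\setminus cmp(x)$ is obtained from one factor by deleting a union of at most $split({\bf t},F)$ factors, and I would bound the number of maximal factors of the result — a fortiori the number of context factors among them — by a factorization-counting argument: either via Lemma~\ref{lem:transopr7.5} applied with $W_1=V(H)$ and $W_0=V(H)\setminus cmp(x)$ (feeding in $|\mf_F(V(H))|=1$ and $|\mf_F(cmp(x))|\le split({\bf t},F)$), or by removing the factors one at a time and showing each removal creates only a constant number of new maximal factors. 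Adding the two contributions gives $mir({\bf t},F)\cdot split({\bf t},F)+3\cdot split({\bf t},F)$ as claimed. The hard part will be exactly this last step: a maximal factor of $cmp(x)$ may be a forest factor with many roots, so deleting it can fragment a context factor of the complement into several pieces at branch points of $F$, and one must argue that the total number of such fragments across all at most $split({\bf t},F)$ removals stays linear in $split({\bf t},F)$; everything else — the disjointness observations, the target-set bound, and the use of the definitions of $mir$ and well-formedness — is routine.
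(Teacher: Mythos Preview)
Your decomposition into the children's contribution and the complement's contribution is exactly the paper's argument, and your treatment of the children via the definitions of $mir$ and $split$ is correct and identical to the paper's.

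The gap is in the last step. Your option (a), invoking Lemma~\ref{lem:transopr7.5} with $W_1=V(H)$ and $W_0=V(H)\setminus cmp(x)$, does not give what you need: that lemma bounds $|\mf_F(W_0)|$ by $9\cdot|\mf_F(W_1)|+3\cdot|W_1\setminus W_0|$, and here $|W_1\setminus W_0|=|cmp(x)|$ is the \emph{cardinality} of $cmp(x)$, not the size of its maximal factorisation. So you would only get $9+3\cdot|cmp(x)|$, which is useless. You cannot ``feed in'' $|\mf_F(cmp(x))|\le split({\bf t},F)$ into that inequality; the lemma simply has a different second term.

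The paper closes this step by invoking a different statement, namely Lemma~3.3 of \cite{DBLP:journals/lmcs/BojanczykP22}, which directly bounds the factorisation size of the complement of a set in terms of the factorisation size of the set itself (yielding the factor $3$). This is precisely the ``remove factors one at a time'' statement you sketch as option (b), already proved in the source paper, so you do not need to redo the fragmentation analysis you flag as the hard part; you just need to cite the right lemma.
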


\begin{proof}
Each context factor $U$ as specified in the statement can be 
unambiguously mapped to an element $U^*$ of
$\textsc{cp}_{\bf t}(x)$ such that $U \subseteq U^*$. 
We observe that the number of such $U^*$ is at most
$mir({\bf t},F)+1$. Indeed, $U^*$ can be $cmp(y)$ such 
that $y$ is a child of $x$ (and there are at most $mir({\bf t},F)$ such children)
or $U^*$ may be $V(H) \setminus cmp(x)$. 
Each element $cmp(y)$ has at most $split({\bf t},F)$ factors 
in its maximal factorisation, so the total number of context factors 
yielded as subsets of $cmp(y)$ for $y$ being a child of $x$ is
$mir({\bf t},F) \cdot split({\bf t},F)$. 
Further on, the total number of factors in the maximal factorisation 
of $cmp(x)$ is $split({\bf t},F)$,
so the number of factors in the maximal factorisation of its 
complement is at most $3 \cdot split({\bf t},F)$ by
Lemma~3.3. of \cite{DBLP:journals/lmcs/BojanczykP22}. Clearly, the number of context factors in
the factorisation is at most this amount.
\end{proof}

\begin{lemma} \label{lem:nonmargcontext}
Let $x \in V(T)$, $u \in V(G)$ and $\{x,u\} \in \mathsf{SG}({\bf t},F)$. 
Suppose that $\mbag(u) \neq x$ (or, to put it differently, $u \notin mrg(x)$). 
Let $U \in \textsc{cf}_{{\bf t},F}(x)$ be the element containing $u$. 
Then $U$ is a context factor of $F$ where $u$ is the parent of the appendices. 
\end{lemma}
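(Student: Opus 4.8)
The plan is to extract the relevant $F$-child of $u$ from the stain, identify the cell of $\textsc{cp}(x)$ that contains $u$, show that this child lies \emph{outside} that cell, and then read off the shape of the maximal factor $U$ from this single fact. Concretely: since $x\in Stain(u,({\bf t},F))$ but $x\neq\mbag(u)$, \Cref{def:stain} gives a child $v$ of $u$ in $F$ with $x\in\mathit{Path}_\mathit{\mbag}(u,v)$, i.e. $x$ lies on the shortest $T$-path from $\mbag(u)$ to $\mbag(v)$ and $x\neq\mbag(u)$. Recall that the non-empty sets among $mrg(x)$, $V(H)\setminus cmp(x)$ and $\{cmp(y):y\text{ a child of }x\text{ in }T\}$ partition $V(H)$ (this is $\textsc{cp}(x)$), and that $w\in cmp(y)$ iff $\mbag(w)\in T_y$, so $w\notin cmp(x)$ iff $\mbag(w)\notin T_x$. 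Since $\mbag(u)\neq x$ we have $u\notin mrg(x)$, so either $\mbag(u)$ is a proper descendant of $x$, in which case $u\in cmp(z)$ for the unique child $z$ of $x$ on the $x$-to-$\mbag(u)$ path and I set $U^\ast:=cmp(z)$, or $\mbag(u)\notin T_x$, in which case $u\notin cmp(x)$ and I set $U^\ast:=V(H)\setminus cmp(x)$. In both cases $U^\ast\in\textsc{cp}(x)$, and $U$ is the (unique) element of $\textsc{mf}_F(U^\ast)$ containing $u$ (recall $\textsc{mf}_F(U^\ast)$ is a factorization of $U^\ast$).

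The main step is to prove $v\notin U^\ast$, which I would do by the obvious two-case analysis on the position of $\mbag(u)$. If $U^\ast=cmp(z)$ then $\mbag(u)\in T_z$ while $x=p_T(z)\notin T_z$; since the subtree rooted at $z$ is connected, were $\mbag(v)\in T_z$ the shortest $\mbag(u)$–$\mbag(v)$ path would stay inside it and miss $x$ — contradiction — so $\mbag(v)\notin T_z$ and $v\notin cmp(z)=U^\ast$. If $U^\ast=V(H)\setminus cmp(x)$ then $\mbag(u)\notin T_x$; removing the pendant subtree rooted at $x$ from $T$ leaves a connected subtree, so were $\mbag(v)\notin T_x$ the shortest $\mbag(u)$–$\mbag(v)$ path would avoid $x$ — contradiction — so $\mbag(v)\in T_x$, i.e. $v\in cmp(x)$, hence $v\notin U^\ast$.

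Given $v\notin U^\ast$, the conclusion is pure factor bookkeeping. First, $U$ cannot be a forest factor (in particular not a tree factor): if $U=\bigcup_i V(F_{s_i})$ with the $s_i$ siblings, then $u\in V(F_{s_i})$ for some $i$, so $s_i$ is an $F$-ancestor of $u$ and the child $v$ satisfies $v\in V(F_u)\subseteq V(F_{s_i})\subseteq U\subseteq U^\ast$, contradicting $v\notin U^\ast$. Hence $U$ is a context factor $U=V(F_{rt})\setminus Y$ with $Y$ a forest factor whose roots (the appendices) are strict descendants of $rt$; note $rt$ is an $F$-ancestor of $u$ (as $u\in U\subseteq V(F_{rt})$), that $Y\subseteq V(F_{rt})$, and that $u\notin Y$. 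Since $v$ is a child of $u$ it is a descendant of $rt$, so $v\in V(F_{rt})$; as $v\notin U^\ast\supseteq U$ we get $v\in V(F_{rt})\setminus U=Y$, so $v\in V(F_{a})$ for some appendix $a$, i.e. $a$ is an $F$-ancestor of $v$. If $a$ were a \emph{proper} ancestor of $v$ it would be an $F$-ancestor of $u=p_F(v)$; it cannot be $u$ (since $u\notin Y$ but $a\in Y$), and if it were a proper ancestor of $u$ then $u\in V(F_a)\subseteq Y$, again contradicting $u\notin Y$. So $a=v$, meaning $v$ is itself an appendix; as all appendices are siblings of $v$, they are all children of $u$, i.e. $u$ is the parent of the appendices, which is exactly the claim.

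The only place where genuine care is needed is the case analysis establishing $v\notin U^\ast$ (tracking exactly where $\mbag(u)$ and $\mbag(v)$ sit relative to $x$ and $z$); the rest is routine once that is in place. It is worth flagging that one might anticipate needing ``convexity''/separator arguments — e.g. that $V(F_v)$ is entirely outside $U^\ast$, or that no child's $F$-subtree straddles $U^\ast$ — but the argument above shows these are not required: the single membership fact $v\notin U^\ast$ already forces $U$ to be a context factor with all appendices parented at $u$.
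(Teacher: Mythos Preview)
Your proof is correct and follows essentially the same approach as the paper: extract a child $v$ of $u$ from the stain, show that $v$ lies outside the cell $U^\ast\in\textsc{cp}(x)$ containing $u$, and conclude that the maximal $U^\ast$-factor $U$ containing $u$ must be a context factor with $u$ as the appendices' parent. The only cosmetic differences are that the paper does the case analysis on the cell containing $v$ rather than on $U^\ast$, and that the paper compresses your careful ``factor bookkeeping'' into a single sentence (``It can only be possible if $U$ is a context factor and $u$ is the appendices parent''), which you spell out in full.
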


\begin{proof}
By assumption, the only way $x$ to be included 
in $Stain(u,({\bf t},F))$ is to be not a first vertex on a path
between $\mbag(u)$ and $\mbag(v)$ where $v$ is a child of $u$ in $F$. 
Let $W_0$ and $W_1$ be elements of $\textsc{cp}_{{\bf t},F}(x)$
containing $u$ and $v$ respectively. 

\begin{claim}
$W_0 \neq W_1$. 
\end{claim}

\begin{proof}
If $W_1=mrg(x)$, we are done by assumption. 
Next, assume that $W_1=cmp(y)$ where $y$ is a child of $x$. 
It follows then that $\mbag(v) \in V(T_y)$. 
If we assume that $W_0=W_1$ then $\mbag(u) \in V(T_y)$.
Hence, the path between $\mbag(u)$ and $\mbag(v)$ fully lies in
$T_y$ and does not involve $x$, a contradiction. 
It remains to assume that $W_1=V(G) \setminus cmp(x)$. 
Again if we assume that $W_0=W_1$ then both $\mbag(u)$ and
$\mbag(v)$ lie outside of $T_x$ and hence the path between them
does not involve $x$ either causing the same contradiction.
\end{proof}

As $U \subseteq W_0$, it follows from the claim
that $u$ is contained in a factor while its child is not. 
It can only be possible if $U$ is a context factor and
$u$ is the appendices parent. 
\end{proof}

\begin{proof}[Proof of \Cref{theor:sgdegbound}]
The number of margin edges incident to $x$
is at most the margin size of $x$ which is in turn
at most the bag size of ${\bf t}$ which, again in turn,
is upper bounded by a function of $\fwidth({\bf t})$ and $\rank(H)$. 

Let $NM(x)$ be the neighbours of $x$ in $\mathsf{SG}({\bf t},F)$
connected to $x$ through a non-margin edge. 
Let ${\bf y}: NM(x) \rightarrow \textsc{cf}_{{\bf t},F}(x)$
be mapping each vertex of $NM(x)$ to the factor containing it. 
By Lemma \ref{lem:nonmargcontext}, each $u \in NM(x)$ is
mapped to a context factor where $u$ is the appendices' parent. 
As each context factor contains precisely one appendices' parent this 
mapping is injective. It follows that $NM(x)$ is upper bounded by the
number of context factors in $\textsc{cf}_{{\bf t},F}(x)$
whse size is, in turn, upper bounded by a function of $split({\bf t},F)$
and $mir({\bf t},F)$ by  
Lemma \ref{lem:contfactbounded}.

\end{proof}

\begin{proof}[Proof of \Cref{th:cgchrom}]
The proof consists of the following steps.
\begin{enumerate}
\item The graph $\cg({\bf t},F)$ is chordal as
it is the intersection graph of connected subgraphs of a forest~\cite{golumbic2004algorithmic}.
\item It follows from the chordality of $\cg({\bf t},F)$
that its chromatic number is at most the size of its maximal clique~\cite{golumbic2004algorithmic}. 
\item It is known that the intersection graph 
of connected subgraphs of a forest has the Helly property. 
In other words, if all pairwise intersection of stains are non-empty, then there is a vertex that belongs to all the stains. 
\item It follows that the maximum clique size of 
$\cg({\bf t},F)$ is upper bounded by the max-degree of a vertex of
$V(T)$ in $\mathsf{SG}({\bf t},F)$. 
Indeed, let $u_1, \dots u_q$ form a clique of 
$\cg({\bf t},F)$. This means that for each $i \neq j \in [q]$, 
$Stain(u_i,({\bf t},F)) \cap Stain(u_j,({\bf t},F)) \neq \emptyset$. 
It follows from the Helly property that 
$\bigcap_{i \in [q]} Stain(u_i,({\bf t},F)) \neq \emptyset$. 
Let $x \in \bigcap_{i \in [q]} Stain(u_i,({\bf t},F))$. 
Then $x$ is adjacent to each of $u_1, \dots, u_q$ in $\mathsf{SG}({\bf t},F)$. 
\item The theorem is now immediate from Theorem \ref{theor:sgdegbound}. 
\end{enumerate}
\end{proof}

\section{Proof of Lemma \ref{lem:transtwforest}}
\label{sec:transd}

\newcommand{\Stain}{\ensuremath{\mathit{Stain}}\xspace}
Following \cite{DBLP:journals/lmcs/BojanczykP22}, we define an elementary MSO transduction as 

one of the following relations.
\begin{description}
    \item[Colouring.] Colouring adds a unary predicate to the input structure, i.e. the output vocabulary has an additional unary predicate. 
    
    The input model $M$ is mapped to the set of all possible models obtained from $M$ by adding a unary relation over the universe of $M$. 
    We note that colouring is the only \emph{guessing} transaction mapping the input model to several output models. 
    \item[MSO Interpretation.] 
    For any relation name $R$ in the output, there exists a formula $\varphi_R(x_1, \cdots, x_k)$ with the same arity as $R$. 
    For a model $M$, the output is a singleton whose only element has the same universe as $M$
    in addition to relations $R$ such that $(x_1, \cdots, x_k) \in R$ if and only if $\varphi_R(x_1, \cdots, x_k)$ is true. 
    We note that the MSO interpretation is the elementary transduction that allows an \emph{arbitrary} signature change
    (the colouring also changes the signature but only in a specific way of adding a single unary relation). 

    \item[Copying.] 
    We do not use this elementary transduction for the proof below, so we do not provide further technical elaboration. 
    
    \item[Filtering.] For an MSO formula $\varphi$ there is a filtering transduction which filters out structures which do not satisfy $\varphi$, i.e. a structure $M$ is mapped
    to $\{M\}$ if it satisfies $\varphi$ and to $\{\}$ otherwise. This is the elementary transduction 
    that can cause a model to be mapped to an empty set. 
    \item[Universe restriction.] For a given formula $\varphi(x)$ with one free first-order variable, the input $M$ is mapped to a singleton containing a structure whose universe is restricted
    to those elments $x$ that satisfy $\varphi$
\end{description}

The central part of the proof is the notion of a transduction tuple as defined below.
\begin{definition} \label{def:transdtuple}
Let $({\bf t}=(T,B),F)$ be a \textsc{tf} pair of a hypergraph $H$. 
A \emph{transduction tuple} $({\bf C},{\bf D}, {\bf K}, {\bf f}, {\bf h})$ is defined as follows. 
\begin{enumerate}
\item ${\bf C}$ is a partition of $V(H)$ into independent sets of $\mathsf{CG}({\bf t},F)$.
To put it differently ${\bf C}$ determines a proper colouring of $\mathsf{CG}({\bf t},F)$.
\item ${\bf D}$ is a weak partition of $V(H)$. 
\item ${\bf K}$ is a family of subgraphs. 
\item ${\bf f}$ is a mapping from ${\bf C}$ to ${\bf D}$ defined as follows. 
     \begin{enumerate}
		 \item There is exactly one element $D_0 \in {\bf D}$ such that ${\bf f}^{-1}(D_0)=\emptyset$. 
		 \item For each $C \in {\bf C}$, ${\bf f}(C)$ is the set of all children (in $F$) of the non-leaf elements of $C$. 
		 \end{enumerate}
\item ${\bf g}$ is a surjective mapping from ${\bf C}$ such that for each 
$C \in {\bf C}$, ${\bf g}(C)=\bigcup_{u \in C} T[\Stain(u,({\bf t},F))]$. 
\end{enumerate}

We also introduce additional related terminology. 
In particular, for each  $u \in V(H)$, $C(u)$ is the element of ${\bf C}$ containing $u$. 
Also, for each $C \in {\bf C}$, we define a function ${\bf h}_C$ on $C$ such that,
for each $u \in C$, ${\bf h}_C(u)$ is the connected component of ${\bf g}(C)$ containing $u$. 
\end{definition}

In the subsequent proof, we demonstrate that, on the one hand, transduction tuple, together with ${\bf t}$ 
allows reconstruction of $F$
and, on the other hand, it can be guessed by a transduction. For the reconstruction of $F$, we need the 
following definition. 

\begin{definition} \label{def:child0}
With the notation as in Definition \ref{def:transdtuple},
let $child$ be a binary relation on $V(H)$ consisting of all pairs $(u,v)$ such that
the following is true. 
\begin{enumerate}
\item $v \in {\bf g}(C(u))$. 
\item $\mbag(v) \in {\bf h}_C(u)$. 
\end{enumerate} 
\end{definition}

Clearly, the $child$ relation is completely determined by ${\bf t}$ and $({\bf C},{\bf D},{\bf K},{\bf f},{\bf g})$. 

\begin{lemma} \label{lem:reconstr}
$(V(H),child)=F$. 
\end{lemma}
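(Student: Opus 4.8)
The plan is to show that the binary relation $child$ defined from ${\bf t}$ and the transduction tuple coincides, as a directed edge relation on $V(H)$, with the child relation of the elimination forest $F$. I would argue the two inclusions separately, exploiting the fact that the two data $\mathbf{g}$ (the stains) and $\mathbf{f}$ (the actual children) encode $F$ from complementary angles: $\mathbf{f}(C(u))$ tells us \emph{which} vertices are children of non-leaf elements of $C$, while the component structure $\mathbf{h}_C$ of $\mathbf{g}(C)=\bigcup_{u\in C} T[Stain(u,(\mathbf{t},F))]$ tells us \emph{how to match} a child $v$ to its correct parent $u$ inside $C$.

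First I would recall the key structural fact underlying the definition of the stain: for a vertex $u$ with children $v_1,\dots,v_m$ in $F$, the stain $Stain(u,(\mathbf{t},F))$ is the union of $\{\mbag(u)\}$ with the paths $\mathit{Path}_\mbag(u,v_j)$ in $T$; in particular $\mbag(u)$ and each $\mbag(v_j)$ all lie in the \emph{same} connected piece of $T$ spanned by $Stain(u,(\mathbf{t},F))$. Since $\mathbf{C}$ is a proper colouring of $\cg(\mathbf{t},F)$, any two vertices $u,u'$ in the same colour class $C$ have disjoint stains, so the connected components of $\mathbf{g}(C)$ are exactly the subtrees $T[Stain(u,(\mathbf{t},F))]$ for $u\in C$ (here one must check that distinct stains, being disjoint and each connected in the tree $T$, stay in distinct components of their union — this is immediate because a connected subgraph of a forest cannot be split, and disjointness prevents two stains from merging into one component). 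Hence for $u\in C$, $\mathbf{h}_C(u)=T[Stain(u,(\mathbf{t},F))]$.

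For the inclusion $child \supseteq E(F)$: suppose $v$ is a child of $u$ in $F$. Then $u$ is non-leaf, so $v\in\mathbf{f}(C(u))$, and $\mathbf{f}(C(u))=\bigcup_{u'\in C(u)} Child_F(u')$; but more to the point $v\in Stain(u,(\mathbf{t},F))$-relevant data gives $v\in\mathbf{g}(C(u))$ since $v$'s margin node $\mbag(v)$ lies on $\mathit{Path}_\mbag(u,v)\subseteq Stain(u,(\mathbf{t},F))$, and thus $\mbag(v)\in\mathbf{h}_{C(u)}(u)=T[Stain(u,(\mathbf{t},F))]$. Both clauses of Definition~\ref{def:child0} hold, so $(u,v)\in child$. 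For the converse $child\subseteq E(F)$: take $(u,v)\in child$, so $v\in\mathbf{g}(C(u))$ and $\mbag(v)\in\mathbf{h}_{C(u)}(u)$. Because the components of $\mathbf{g}(C(u))$ are the individual subtrees $T[Stain(u',(\mathbf{t},F))]$ for $u'\in C(u)$, and these are pairwise disjoint, the condition $\mbag(v)\in\mathbf{h}_{C(u)}(u)$ forces $\mbag(v)\in Stain(u,(\mathbf{t},F))$. Now I would use that $v\in\mathbf{g}(C(u))$ means $v$ appears as a vertex of $H$ on some stain tree in $\mathbf{g}(C(u))$ — but since $\mathbf{g}(C(u))$ is a subgraph of $T$, the cleaner reading is that $v$ ranges over the children recorded by $\mathbf{f}$, i.e. $v$ is a child in $F$ of some $u'\in C(u)$; the second clause then pins down $u'=u$ because the margin node of a child of $u'$ lies on a path emanating from $\mbag(u')$ and hence in $Stain(u',(\mathbf{t},F))$, and the stains of distinct members of $C(u)$ are disjoint, so $\mbag(v)\in Stain(u,(\mathbf{t},F))$ is only possible when $u'=u$. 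Therefore $v$ is a child of $u$ in $F$, completing the equality $(V(H),child)=F$.

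The main obstacle I anticipate is making precise the interplay between the two senses in which the objects ${\bf g}(C)$ and ${\bf f}(C)$ "carry" $F$: ${\bf g}(C)$ is literally a subtree of $T$ (a subset of tree nodes), whereas to recover the $F$-edge $(u,v)$ one needs $v$ as a \emph{vertex of $H$}, accessed through its margin node $\mbag(v)$. The crux is the lemma — implicit in the construction and needing an explicit statement — that for $u'\in C$, a vertex $w\in V(H)$ is a child of $u'$ in $F$ if and only if $\mbag(w)\in Stain(u',(\mathbf{t},F))$ and $w\in\mathbf{f}(C)$; the "only if" is by definition of the stain, and the "if" requires that the colour class $C$ being independent in $\cg(\mathbf{t},F)$ rules out $w$'s margin node lying in two different stains. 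Getting the bookkeeping of this "if and only if" airtight, and confirming that it is exactly what Definition~\ref{def:child0} encodes, is where the real work of the proof lies; everything else is routine once that correspondence is nailed down.
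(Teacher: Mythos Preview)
Your proposal is correct and follows essentially the same route as the paper: you isolate the key claim that for $u\in C$ the component $\mathbf{h}_C(u)$ is exactly $T[Stain(u,(\mathbf{t},F))]$ (using that $C$ is independent in $\cg(\mathbf{t},F)$), and then verify both inclusions. The only cosmetic difference is that for $child\subseteq E(F)$ the paper phrases the argument as ``the in-degree of $v$ in $child$ is exactly one, and we already know $(parent_F(v),v)\in child$'', whereas you argue directly that the hypothetical $u'\in C(u)$ with $v\in Child_F(u')$ must equal $u$; these are the same disjointness-of-stains argument. Your identification of the $\mathbf{f}$ versus $\mathbf{g}$ bookkeeping as the delicate point is also apt: the first clause of Definition~\ref{def:child0} should indeed be read as $v\in\mathbf{f}(C(u))$, and the paper's own proof tacitly uses it that way.
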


\begin{proof}
We first need to clarify that the connected components of 
each ${\bf g}(C)$ are, in fact stains of an arbitrary vertex of the component. 

\begin{claim} \label{clm:reconstr}
For each $C \in {\bf C}$ and each $u \in C$,
${\bf h}_C(u)=\Stain(u,({\bf t},F))$. 
\end{claim}
\begin{proof}
As $u \in \Stain(u,({\bf t},F)) \subseteq V({\bf g}(C))$
and $\Stain(u,({\bf t},F))$ is a connected subset of $T$,
we conclude that $\Stain(u,({\bf t},F)) \subseteq {\bf h}_C(u)$.
Assume that  $\Stain(u,({\bf t},F)) \neq {\bf h}_C(u)$. 
Then there is $v_0 \in \Stain(u,({\bf t},F))$ and 
$v_1 \notin \Stain(u,({\bf t},F))$ such  that ${\bf g}(C)$ has an
edge between $v_0$ and $v_1$. 
By construction, the edge is an edge of $T[\Stain(u',({\bf t},F))]$
for some $u' \in C$. We note that $u' \neq u$ since $v_1 \notin \Stain(u,({\bf t},F)) $. 
and that $v_0 \in \Stain(u,({\bf t},F))$. It follows the stains of $u$ and $u'$
intersects but this is a contradiction to $C$ being an independent set, 
\end{proof}

Assume now that $(u,v) \in F$. 
By definition, $v \in {\bf f}(C(u))$
and also $\mbag(v) \in \Stain(u,({\bf t},F))$. 
By Claim \ref{clm:reconstr}, $v \in {\bf h}_C(u)$.
We conclude that $(u,v) \in child$. 

Conversely, assume that $(u,v) \in Child$.
Then, by definition, $v$ is not a root of $T$.
Hence, $parent(v),v)$ is an edge of $F$.
We have already observed that $(parent(v),v) \in  child$. 
We proceed to note that the in-degree of $v$ in child is exactly one,
which will imply that $parent(v)=u$ and hence $(u,v) \in F$. 
Indeed, assume that there are two distinct $u_1,u_2$ such that
both $(u_1,v)$ and $(u_2,v)$ are elements of $child$. 
If $C(u_1) \neq C(u_2)$ then $v$ has two different parents in $F$, a contradiction,
hence we conclude that $C(u_1)=C(u_2)=C$. 
Now, it follows from Claim \ref{clm:reconstr} that $v \in \Stain(u_1,({\bf t},F)) \cap \Stain(u_2,({\bf t},F))$
in contradiction to the independence of $u_1$ and $u_2$. 
\end{proof}

The main idea of the proof of Lemma \ref{lem:transtwforest} is guessing a transduction tuple. 
A subtle aspect of the proof is guessing of $K$ that is a family of subgraphs of $T$ rather than
a family of sets. The guessing is done by guessing two sets per subgraph. One of them is the set of vertices of the subgraphs and the other is a subset of the first set consisting of the vertices that are not connected to their parents in the resulting subgraph. 

\begin{proof}[Proof of \Cref{lem:transtwforest}]
The algorithm produces a transduction
$${\bf Tr}=Tr[{\bf c}_1,\dots, c_{q},{\bf c^*},{\bf d}_0, \dots, {\bf d}_q,{\bf d^*},
{\bf k}^0_1, {\bf k}^1_1, \dots, {\bf k}^0_q,{\bf k}^1_q, {\bf k^*}, {\bf x}_0, {\bf x}_1]$$ 

The elementary transductions ${\bf c}_1, \dots, {\bf c}_q$ are all extensions introducing respective
new unary relations $C_1,\dots, C_q$. Let $\tau_1$ be the signature obtained from
$\tau_{td}$ by adding $C_1,\dots, C_q$. Then ${\bf c^*}$ is the domain filtering 
transduction whose input signature is $\tau_1$. The associated MSO sentence is
true if and only if $C_1,\dots, C_q$ weakly partition $Vertex$ (that is, one or more 
of $C_i$ may be empty sets). 

The elementary transductions ${\bf d}_0, \dots, {\bf d}_q$ are all extensions introducing respective
new unary relations $D_0,\dots, D_q$. Let $\tau_2$ be the signature obtained from 
$\tau_1$ by adding the relation $D_0, \dots, D_q$. 
Then ${\bf d^*}$ is the domain filtering transduction whose input is $\tau_2$.
The associated MSO sentence is true if and only $D_0, \dots, D_q$ weakly partition
$Vertex$. 

The elementary transactions
${\bf k}^0_1, {\bf k}^1_1, \dots, {\bf k}^0_q,{\bf k}^1_q$
are all extensions introducing respective new 
unary relations $K^0_1,K^1_1,\dots, K^0_q,K^1_q$. 
Let $\tau_3$ be the signature obtained from $\tau_2$ by 
introducing the relations $K^0_1,K^1_1,\dots,K^0_q,K^1_q$. 
Then ${\bf k^*}$ is the domain filtering transduction whose 
accompanying MSO sentence is true if and only the following statements hold. 
\begin{enumerate}
\item Each $K_i^j$ is a subset of $Node$. Moreover, each $K_i^1$ is a subset 
of $K_i^0$. For each $1 \leq i \leq q$, and each vertex $u \in K^1_i$, $u$ is not a root
of $T$ and, moreover, $parent(u)$, the parent of $u$, belongs to $K^0_i$. 
\item For each $1 \leq i \leq q$, let us define $T_i$ as the subgraph 
of $T[K^0_i]$ obtained by removal from $T[K^0_i]$ the edges $(parent(u),u)$ for each
$u \in K^1_i$. Then there is a bijection $\gamma_i$ from $C_i$ to the connected
components of $T_i$ so that each $u \in C_i$ is mapped to the component 
containing $\mbag(u)$.  
\end{enumerate}

Next, ${\bf x}_0$ is an interpretation
transduction mapping a model of $\tau_3$ to the model over $\tau_{ef}$ 
over the same universe..
Thus the underlying tuple of MSO formulas is 
$(iv(x),ie(x),ia(x),\varphi(x,y))$.
The formulas $iv(x),ie(x),ia(x,y)$ are true if and only if, respectively
$x \in Vertex$, $x \in Edge$ and $ia(x,y)$ is an element of $Adjacent$. 
In other words, these formulas simply copy $Vertex$, $Edge$, and $Adjacent$
relation to the new model.

The formula $\varphi(x,y)$ is is true if and only if the following
statements hold. 
\begin{enumerate}
\item Let $i$ be such that $y \in D_i$. Then $i>0$ and $x \in C_i$. 
\item $\mbag(y)$ belongs to the connected component of $T_i$ that
contains $\mbag(x)$. 
\end{enumerate}

Finally, ${\bf x}_1$ is a domain filtering transduction over signature $\tau_{ef}$
associated with a formula that is true over a model $M$ of $\tau_{ef}$ if and only
if $F(M)$ is an elimination forest of $H(M)$. 

We observe that the transduction as above is straightforward to be produced by an algorithm
with input $q$, that the transduction size is linear in $q$ and that for each
$M_1 \in {\bf Tr}(M({\bf t},F))$, $M_1$ is well formed in the sense that the $Child$
relation of the model is a binary relation over $Vertex$.

It remains to demonstrate that, when $q \geq q_0$
where $q_0$ is the chromatic number of $\mathsf{CG}({\bf t},F)$ then 
${\bf Tr}(M(H,{\bf t}))$ contains an output $M^*$ such that $F(M^*)=F$  

In this case, there is $M_1 \in Tr[{\bf  c}_1, \dots, {\bf c}_q](M(H,{\bf t}))$  
that is obtained by introduction of relations $C_1, \dots, C_q$
such that $C_1, \dots, C_{q_0}$ are colouring classes of $\mathsf{CG}({\bf t},F)$
and, if $q_0<q$, $C_{q_0+1}, \dots, C_q$ are all empty sets. 
By construction, $M_1$ satisfies the underlying sentence of ${\bf c^*}$. 
We conclude, therefore, that $M_1 \in Tr[{\bf c}_1, \dots {\bf c}_q,{\bf c^*}](M(H,{\bf t}))$.

Next, let $D_0$ be the set consisting of the roots of $F$ and for each
$1 \leq i \leq q$, let $D_i$ consisting of all the children of the non-leaf
elements of $C_i$. We observe that $D_0, \dots D_q$ constitutes a weak 
partition of $V(H)$. Indeed, every non-root element appears in some 
$D_i$ due to $C_1, \dots, C_q$ being a partition of $V(H)$. 
On the other hand, an element appearing in two distinct $D_i$ and $D_j$
means that either an element is both a root of $F$ and not a root of $F$
or that an element has two parents. 
Therefore, there is $M_2 \in Tr[{\bf d}_1, \dots, {\bf d}_q](M_1)$ obtained 
from $M_1$ by adding relations $D_0, \dots, D_q$ and $M_2$ satisfies 
the underlying sentence of ${\bf d^*}$. In other words,
$M_2 \in Tr[{\bf d}_1, \dots, {\bf d}_q, {\bf d^*}](M_1)$.

For each  $i \in [q]$, let $K^0_i=\bigcup_{u \in C_i} \Stain(u,({\bf t},F))$
and let $T_i=\bigcup_{u \in C_i} T[\Stain(u,({\bf t},F))]$. 
We observe that there is $K^1_i \subseteq K^0_i$ 
that $T_i$ is obtained from $T[K^0_i]$ by removal of edges 
$\{(parent(u),u)| u \in K^1_i\}$. 
By definition, there is  $M_3 \in Tr[{k}^0_1,{\bf k}^1_1, \dots, {\bf k}^0_q,{\bf k}^1_q](M_2)$
obtained from $M_2$ by introduction of relations $K^0_1,K^1_1, \dots, K^0_q,K^1_q$. 
By definition, $M_3$ satisfies the sentence ${\bf k^*}$ and hence
$$M_3 \in Tr[{k}^0_1,{\bf k}^1_1, \dots, {\bf k}^0_q,{\bf k}^1_q,{\bf k^*}](M_2).$$

Let ${\bf  C}=\{C_1, \dots, C_{q_0}\}$. 
Let ${\bf D}=\{D_i| 0 \leq i \leq q_0\}$. 
Let ${\bf K}=\{T_i|i \in [q]\}$. 
Let ${\bf f}$ be a function on ${\bf C}$ that maps each $C_i$ to $D_i$. 
Let ${\bf g}$ be a function on ${\bf C}$ that maps each $C_i$ to $T_i$. 

We observe that $({\bf C},{\bf D},{\bf K},{\bf f},{\bf g})$ 
is a transduction tuple of $({\bf t},F)$. Furthermore, the relation determined by $\varphi(x,y)$ of ${\bf x}_0$
is the $child$ relation for $({\bf C},{\bf D},{\bf K},{\bf f},{\bf g})$ 
as in Definition \ref{def:child0}.
It follows from Lemma \ref{lem:reconstr} that there is $M^* \in Tr[{\bf x}_0](M_3)$
such that $F(M^*)=F$. As $F$ is a proper elimination forest of $H$,
$M^* \in Tr[{\bf x}_0,{\bf x}_1](M_3)$ thus completing the proof of the lemma. 
\end{proof} 

\section{Proofs Details for  Section
\ref{sec:adw}}
\label{fulladaptive}
\label{app:adw}
In this section we provide full proofs of the statements in Section
\ref{sec:adw}. 

\begin{proof}[Proof of Corollary \ref{cor:adapt1}]
Just run the algorithm for the testing whether the 
$\mathcal{F}_{\delta}$-width is at most $k$ as stated in 
Theorem \ref{thm:adaptmain}. 
Suppose that the algorithm rejects.
As $\mathcal{F}_{\delta} \subseteq \mathcal{F}^*_{\delta}$, also $\mathcal{F}_{\delta}$-width$(H) \le\mathcal{F}^*_{\delta}$-width$(H)$, it is guaranteed that the 
latter is greater than $k$. 
Assume now that the algorithm accepts.
For the sake of contradiction, let us suppose that 
the $\mathcal{F}^*_{\delta}$-width of $H$ is greater than $2k$. 
This means that there is a specific $f \in \mathcal{F}^*_{\delta}$
such that the $f$-width of $H$ is greater than $2k$.
Let $f_0$ be a function obtained from $f$ as follows:
for each $v \in V(H)$ if $f(v)=a/\delta$ then $f_0(v)= \frac{\lfloor a \rfloor} {\delta}$.  Recall that by definition of $\mathcal{F}^*_\delta$, $a\ge 1$ and therefore $a \le 2 \lfloor a  \rfloor$.
Since the initial algorithm accepted, the $f_0$-width of $H$ must be at most 
$k$. Let $(T,B)$ be a minimal $f_0$-width tree decomposition of $H$. 
Since for each $v \in V(H)$, $f(v) \leq 2f_0(v)$,
we conclude that for each $u \in V(T)$, $f(B(u)) \leq 2 f_0(B(u)) \leq 2k$
meaning that $f$-width of $H$ is at most $2k$ in contradiction to our assumption.
\end{proof}

 \begin{proof}[Proof of Theorem \ref{thm:adaptmain}]
With the notation as in Theorem \ref{th:adaptaux},
let $\psi=\forall U_0 \dots \forall U_\delta \varphi(U_0, \dots, U_\delta)$.
\begin{claim} \label{clm:adaptaux1}
    $M \models \psi$ if and only if $\mathcal{F}_{\delta}$-width of $H$
    is at most $k$. 
\end{claim}
\begin{claimproof}
    Assume first that $M \models \psi$.
    For the sake of contradiction, assume that the $\mathcal{F}_{\delta}$-width 
    of $H$ is greater than $k$. This means that there is a function $f \in \mathcal{F}_{\delta}$
    such that the \fwidth of $H$ is greater than $k$. 
    By definition  of $\mathcal{F}_{\delta}$, there is a weak partition $U'_0 \dots U'_\delta$
    such that for each $i \in \{0, \dots \delta\}$ $U'_i$ consists of exactly those elements $v$
    such that $f(v)=\frac{i}{\delta}$. Clearly, $f=f[U'_0, \dots U'_\delta]$. 
    By definition $M \not\models \varphi(U'_0, \dots U'_\delta)$. 
    Then $M \not\models \psi$ as witnessed by $U_0=U'_0, \dots, U_\delta=U'_\delta$
    thus contradicting our assumption. 

    Assume now that $M \not\models \psi$. This means that there are
    $U_0=U'_0, \dots U_\delta=U'_\delta$ such that $M \not\models \varphi(U'_0 \dots, U'_\delta)$. 
    By definition, this means that $f=f[U'_0, \dots, U'_\delta]$ is an independent set function
    of $H$ and the \fwidth of $H$ is greater than $k$. By definition of $f$, clearly $f \in \mathcal{F}_{\delta}$.
    It follows that $\mathcal{F}_{\delta}$-width of $H$ is greater than $k$.
\end{claimproof}

In order to design the algorithm for testing $\mathcal{F}_{\delta}$-width,
we observe first that $tw(H)$ is at most twice the $\mathcal{F}_{\delta}$-width multiplied by $rank(H)$.
In fact, we can prove this using the $\mathcal{F}_\delta^*$-width instead of $\mathcal{F}_{\delta}$-width. 
Indeed, $f$ be the function assigning each vertex with weight $1/rank(H)$ (note that this requires $\delta \geq rank(H)$). 
Clearly, $f$ is an independent set function. Furthermore, it is not hard to observe that
$tw(H)$ is at most the \fwidth of $H$ multiplied by $rank(H)$ whereas the \fwidth of $H$ is at 
most the adaptive width of $H$. Then the bound follows using \Cref{cor:adapt1}.

The first step of the algorithm is running a linear time algorithm
for treewidth computation parameterized by $k\cdot \delta$. If the treewidth
construction algorithm rejects then our algorithm can safely return $False$
by the previous paragraph. Otherwise, the treewidth construction algorithm
returns a tree decomposition ${\bf t}=(T,{\bf B})$ of treewidth at most $k \cdot \delta$. 
We run $\mathcal{A}(H,{\bf t}, k,\delta)$ as in the statement of Theorem \ref{th:adaptaux}.
We note that, by definition of ${\bf t}$, the runtime of the algorithm $\mathcal{A}$
is FPT parameterized by $k,rank(H),\Delta(H)$, and $\delta$. 
Note that it follows that the size of $\varphi(U_0, \dots,  U_\delta)$ is also upper bounded
by a function depending on these parameters. Clearly, that the runtime of construction of $\psi$ is FPT
parameterized by $k,rank(H),\Delta(H)$, and $\delta$ and the size of $\psi$
is upper-bounded  by a function depending on these parameters.

By Claim \ref{clm:adaptaux1}, it only remains to test whether 
$(H,{\bf t}) \models \psi$. This can be done within the required runtime
according to Courcelle's theorem. 
\end{proof}

In the rest of this section we prove Theorem \ref{th:adaptaux}.
Let us start from an easy step: exclusion of the case where $U_0, \dots, U_q$
do not form a weak partition of $V(H)$. This can be done by conjunction 
of two formulas: one being true when the weak partition does not hold
and the other as specified in the theorem but assuming the weak partition to hold. 
The first formula is easy to implement so in what follows we will prove the
theorem under assumption that $U_0, \dots U_\delta$ do form a partition of $V(H)$.

We next note that the function $f[U_0, \dots, U_\delta]$ does not have the bounded
size property. Indeed, as vertices of $U_0$ do not increase the value of the function,
a function can have small or even zero value on a set that is a superset of $U_0$. 
We overcome this obstacle by proving Theorem \ref{th:adaptaux} for functions where $U_0$
is assumed to be empty. The MSO resulting from the algorithm as specified in the restricted 
version in then`wrapped' into  a transformation that applies it to the submodel of the input
model where the universe is restricted to exclude the values of $U_0$. 

Towards the implementation of this plan let the sets $U_1, \dots, U_\delta$
be a partition of $V(H)$ and let $f_1=f_1[U_1, \dots, U_\delta]$
be a function that assigns each vertex of $U_i$ with weight $\frac{i}{\delta}$, 
As we specified earlier,  $f_1$ can also be seen as a width function mapping a subset
to the sum of weights of its elements. Now, we reformulate Theorem \ref{th:adaptaux}
for functions $f_1[U_1, \dots, U_\delta]$

\begin{theorem} \label{th:adaptauxapp}
There is an algorithm $\mathcal{A}(H,{\bf t}, k,\delta)$ (where $H$ is  a hypergraph,
${\bf t}$ is a tree decomposition of $H$,
$k$ is a non-negative rational, and $\delta \in \mathbb{N}$)
that returns an MSO formula $\varphi_1(U_1,\dots,U_\delta)$
where $U_0,\dots,U_\delta$ are free set variables
so that the following holds under assumption that $U_1, \dots, U_\delta$ weakly
partition $H$. 

Let $M=M(H,{\bf t})$ be a model over $\tau_{td}$. 
Then the following statements hold.
\begin{enumerate}
    \item If $f_1[U_1, \dots, U_\delta]$ is not an independent set function then
    $M \models \varphi_1(U_1, \dots, U_\delta)$
    \item If $f_1=f_1[U_1, \dots, U_\delta]$ is an independent set function and
    and the $f_1$-width of $H$ is at most $k$ then 
    $M \models \varphi_1(U_1, \dots, U_\delta)$. 
    \item If $f_1=f_1[U_1, \dots, U_\delta]$ is an independent set function and
    and the $f_1$-width of $H$ is greater than $k$ then 
    $M \not\models \varphi_1(U_1, \dots, U_\delta)$.
\end{enumerate}
The runtime of the algorithm is FPT parameterized 
by the  treewidth of ${\bf t}$, $k$, $rank(H)$, $\Delta(H)$ and $\delta$
and the size of the resulting formula is upper-bounded by a function depending 
on these parameters. 
\end{theorem}

Now, we carry out the MSO transformation.
Let $M$ be a model over some signature $\tau$
and let $U_0$ be a subset of the universe $U(M)$
of $M$. The model $M \setminus U_0$ is also over $\tau$, 
has the universe $U(M) \setminus U_0$ and each relation $R$
of $M$ is replaced with $R \setminus U_0$ obtained from $R$
by removal of all tuples containing an element of $U_0$. 

\begin{lemma} \label{lem:msorestr}
Let $\psi_0(V_1, \dots, V_r)$ be an MSO formula with $V_1, \dots, V_r$
being free set variables. Then there is an algorithm 
that transforms $\psi_0(V_1, \dots, V_r)$ to an MSO
formula $\psi_1(V_0, \dots, V_r)$ over the same signature $\tau$ 
where $V_0$ is also a free set variable
so that the following holds. 
Let $V'_0, \dots, V'_r$ be a respective instantiation to $V_0, \dots, V_r$. 
Then for each model $M$ over $\tau$, $M \models \psi_1(V'_0, \dots, V'_r)$
if and only if $M \setminus V'_0 \models \psi_0(V'_1, \dots, V'_r)$.  
\end{lemma}

\begin{proof}
In the formula $\psi_0(V_1, \dots, V_r)$ 
accompany each quantifier over the individual variables with a `guard' 
to ensure that only the variables outside of $V_0$ are considered.
In particular $\exists v X$ is replaced with $\exists v (v \notin V_0) \wedge X$
and $\forall v X$ is replaced with $\forall v (v \notin V_0) \rightarrow X$. 
The correctness can be verified by a straightforward bottom up inductive
verification on the structure of the formula. 
\end{proof}

Now, we are ready to prove Theorem \ref{th:adaptaux}.
\begin{proof}[Proof of Theorem \ref{th:adaptaux}]
Run the algorithm $\mathcal{A}(H,{\bf t}, k,\delta)$
as in Theorem \ref{th:adaptauxapp} to return the MSO formula
$\varphi_1(U_1,\dots,U_\delta)$ as stated in Theorem \ref{th:adaptauxapp}.
Let $\varphi(U_0, \dots, U_\delta)$ be obtained from $\varphi_1(U_1,\dots,U_\delta)$
by the transformation as in Lemma \ref{lem:msorestr}. 

Let $U'_0, \dots, U'_\delta$  be a respective instantiation to $U_0, \dots, U_\delta$
that constitutes a weak partition  of $V(H)$. 
Suppose that $M \models \varphi(U'_0, \dots, U'_\delta)$. 
By Lemma \ref{lem:msorestr}, 
$M \setminus U'_0 \models \varphi_1(U'_1, \dots, U'_\delta)$. 
There may be two reasons for that.
One reason is that $f_1=f_1[U'_1, \dots, U'_\delta]$ is not an
independent set function of $H \setminus U'_0$. 
This means that $H$ has a hyperedge $e$ such that $f_1(e \setminus U'_0)>1$.
Clearly, this means that $f(e)>1$ where $f=f[U'_0, \dots, U'_\delta]$. 
We conclude that $f$ is not an independent set function. 
The second reason that $M \setminus U'_0 \models \varphi_1(U'_1, \dots, U'_\delta)$
is that $f_1$ is an independent set function and the $f_1$-width of 
$H \setminus U'_0$ is at most $k$. 
Let $(T,{\bf B})$ be a tree decomposition of $H \setminus U'_0$ of 
$f_1$-width at most $k$. 
Let ${\bf B^*}$ be obtained from ${\bf B}$ by setting 
${\bf B^*}(t)={\bf B}(t) \cup U'_0$ for each 
$t \in V(T)$. It is not hard to see that $(T,{\bf B^*})$ 
is a tree decomposition of $H$ of $f$-width at most $k$. 

Assume now that $M \not\models \varphi(U'_0, \dots, U'_\delta)$.
By Lemma \ref{lem:msorestr}, 
$M \setminus U'_0 \not\models \varphi_1(U'_1, \dots, U'_\delta)$.
The only reason for that is $f_1$ is an independent set function
and that $f_1$-width of $H \setminus U'_0$ is greater than $k$.
It follows that $f$ is an independent set function of $H$.
For the sake of contradiction, assume that the $f$-width
of $H$ is at most $k$. 
Let $(T,{\bf B^*})$ be a tree decomposition of $H$ of $f$-width
at most $k$. Let ${\bf B}$ be obtained from ${\bf B^*}$
by setting ${\bf B}(t)={\bf B^*} \setminus U'_0$ for each
$t \in V(T)$. It is not hard to see that $(T,{\bf B})$
is a tree decomposition of $H \setminus U'_0$ of $f_1$-width
at most $k$ in contradiction to our assumption. 
We thus conclude that indeed $f$-width of $H$ is greater than $k$.
It follows from the previous discussion that $\varphi(U_0, \dots, U_q)$
is as specified in the statement of the theorem. 
\end{proof}

It thus remains to prove Theorem \ref{th:adaptauxapp}.
\begin{proof}[Proof of Theorem \ref{th:adaptauxapp}]
\sloppy
Let $U_1, \dots, U_q$ be a weak partition of $V(H)$
Let $U \subseteq V(H)$. We refer to $(|U \cap U_1|, \dots, |U \cap U_\delta|)$
as the \emph{partition tuple} of $U$ (w.r.t. $U_1, \dots, U_\delta$ if not clear
from the context). 
The following easy to establish claim is stated without a proof. 
\begin{claim} \label{clm:transdmso1}
For tuple $p_1, \dots, p_\delta$ of non-negative integers, there is an MSO formula
$\varphi_{p_1, \dots, p_\delta}(U,U_1,\dots,U_\delta)$ where 
$U,U_1, \dots, U_\delta$ are free set variables that is satisfied for 
the respective instantiation $U',U'_1, \dots, U'_\delta$ if and only if
$(p_1, \dots, p_\delta)$ is the partition tuple of $U'$ w.r.t. 
$U'_1, \dots, U'_\delta$. The size of the formula is upper-bounded by a function
of $p_1+ \dots+p_\delta$. 
\end{claim} 

This leads us stating existence of an MSO formula that will
be an important building block for the algorithm being designed. 

\begin{claim} \label{clm:transdmso2}
For a positive rational $p$, there is an MSO formula
$\varphi_{\delta,p}(U,U_1, \dots, U_\delta)$
where $U,U_1, \dots, U_\delta$ are free set variables so that
the following holds. 
Let $U',U'_1, \dots, U'_\delta$ be a respective instantiation for
$U,U_1, \dots, U_\delta$. Then  $\varphi_{\delta,p}(U',U'_1, \dots, U'_\delta)$
is satisfied if and only $f[U'_1, \dots, U'_\delta](U')$ is at most $p$. 
The size of the formula is upper bounded by a function depending of $p$
and $\delta$. 
\end{claim}

\begin{claimproof}
\sloppy
Let ${\bf T}_{\delta,p}$ be the set of all tuples $(p_1, \dots, p_\delta)$
of non-negative integers such that $\sum_{i \in [\delta]} \frac{i \cdot p_i}{\delta} \leq p$. 
Clearly the number of such tuples is upper bounded by a function depending on $p$
and $\delta$. 
We set $\varphi_{\delta,p}(U,U_1, \dots, U_\delta)=
\bigvee_{(p_1, \dots, p_\delta) \in {\bf T}_{\delta,p}} \varphi_{p_1, \dots, p_\delta}(U,U_1,\dots,U_\delta)$,
where $\varphi_{p_1, \dots, p_\delta}(U,U_1,\dots,U_\delta)$ is as in Claim \ref{clm:transdmso1}. 

Let $U',U'_1, \dots, U'_\delta$ be an a respective instantiation to the free variables. 
Assume that $\varphi_{\delta,p}(U',U'_1, \dots, U'_\delta)$ is satisfied. 
Then there is $(p_1, \dots, p_\delta) \in {\bf T}_{\delta,p}$
such that $\varphi_{p_1, \dots, p_\delta}(U,U_1,\dots,U_\delta)$ is satisfied.
By Claim \ref{clm:transdmso1}, the partition tuple of $U'$ w.r.t. $U'_1, \dots U'_\delta$
is precisely $(p_1, \dots, p_\delta)$. We note that $f[U'_1, \dots, U'_\delta](U')$ is
precisely $\sum_{i \in [\delta]} \frac{i \cdot p_i}{\delta}$. As $(p_1, \dots, p_\delta) \in {\bf T}_{\delta,p}$,
this quantity is at most $p$. 

Assume now that $\varphi_{\delta,p}(U',U'_1, \dots, U'_\delta)$ is not satisfied. 
For the sake of contradiction, assume that $f[U'_1, \dots U'_\delta](U') \leq p$.
As we observed in the previous paragraph, this means that the
partition tuple $(p_1, \dots, p_\delta)$ of $U'$ w.r.t. $U'_1, \dots, U'_\delta$ 
is an element of ${\bf T}_{\delta,p}$. Consequently, 
$\varphi_{p_1, \dots, p_\delta}(U',U'_1,\dots,U'_\delta)$ is one of the disjuncts 
and it is satisfied according to Claim \ref{clm:transdmso1} in contradiction
to non-satisfaction of $\varphi_{\delta,p}(U',U'_1, \dots, U'_\delta)$. 

It remains to add that $p_1+ \dots+p_\delta$ is at most $p \delta$
and hence the size upper bound is satisfied according to Claim \ref{clm:transdmso1}.
\end{claimproof}

We use $\varphi_{\delta,p}(U,U_1, \dots, U_\delta)$ for testing that $f_1=f_1[U_1, \dots U_\delta]$
is an independent set function and for testing whether the $f_1$-width of  $H$ is at most $k$. 
The exact statement is provided in the next claim. 

\begin{claim} \label{clm:transdmso3}
\sloppy
The following two statements hold:
\begin{enumerate}
\item There is an MSO formula $IsIndepSet(U_1, \dots, U_\delta)$ over $\tau_{td}$ so that the following holds.
Let $U'_1, \dots U'_\delta$ be a respective instantiation to $U_1, \dots, U_\delta$ and  let $M$
be a model for $\tau_{td}$. \\
Then
$M \models IsIndepSet(U'_1, \dots, U'_\delta)$ if and only if  $f[U'_1, \dots U'_\delta]$
is an independent set function for $H$. 
\item There is a function $IsWidth_k(U_1, \dots U_\delta)$ over $\tau_{ef}$  so that the following
holds. Let $U'_1, \dots U'_\delta$ be a respective instantiation to $U_1, \dots, U_\delta$ and 
let $M$ be a model over $\tau_{ef}$. \\
Then $M \models IsWidth_k(U'_1, \dots, U'_\delta)$ if and only if $f_1$-width of $F(M)$
is at most $k$. 
\end{enumerate}
The size of both formulas is upper bounded by a function depending on $k$ and $\delta$. 
\end{claim}

\begin{claimproof}
\sloppy
We define $IsIndepSet(U_1, \dots, U_\delta)$
as $\forall U~IsEdge(U) \rightarrow \varphi_{\delta,1}(U,U_1, \dots, U_\delta)$, 
where $IsEdge(U)$ is true if and only if $U \in E(H)$.
The design of $IsEdge(U)$ is elementary and we do not provide further details.
It follows from Claim \ref{clm:transdmso2} that for an instantiation $U'_1, \dots U'_\delta$,
$M \models IsIndepSet(U'_1, \dots, U'_\delta)$ if and only if $f_1[U'_1, \dots U'_\delta]$
is at most $1$ for each hyperedge of $H$, exactly as in the first statement of the lemma. 

We define $IsWidth_k(U_1, \dots, U_\delta)$
as $\forall U~IsBag(U) \rightarrow \varphi_{\delta,k}(U,U_1, \dots, U_\delta)$, 
where $IsBag(U)$ is as defined in the proof of Lemma \ref{lem:iso4}. 
It follows from Claim \ref{clm:transdmso2} that for an instantiation $U'_1, \dots U'_\delta$,
$M \models IsWidth_k(U'_1, \dots, U'_\delta)$ if and only if $f_1[U'_1, \dots U'_\delta]$
is at most $k$ for each bag of the tree decomposition induced by $F(M)$ thus confirming
the second statement of the claim.

It remains to notice that it follows from Claim \ref{clm:transdmso2}
that the size of both formulas is upper-bounded by a function dependent on $k$ and $\delta$. 
\end{claimproof}

We define $\varphi_1(U_1, \dots U_\delta)$ 
as $IsIndepSet(U_1, \dots, U_\delta) \rightarrow \varphi_2(U_1, \dots, U_\delta)$
where $\varphi_2(U_1, \dots, U_\delta)$ is like $\varphi_1(U_1, \dots, U_\delta)$
but under assumption that $f_1[U_1, \dots, U_\delta]$ is an independent set function. 
The design of $\varphi_2(U_1, \dots, U_\delta)$ essentially uses the reasoning 
for the manageable functions but with two nuances. The first is that the
last element of the transduction is now $IsWidth_k(U_1, \dots, U_\delta)$
and the second nuance is that the transduction is transformed to an MSO formula
rather than to an algorithm. We are now going to address the second nuance. 

First of all, in light of the proof of Theorem \ref{th:mainalgo}
and, in particular, Lemma \ref{lem:transtwforest}, we only need to consider
transductions involving elementary transductions of colouring, filtering, and
interpretation. 
Second, the last elementary transduction in the transduction being constructed
is a filtering transduction associated with the formula 
$IsWidth_k(U_1, \dots U_\delta)$. 
We thus define a transduction $Tr[{\bf q}_1, \dots, {\bf q}_m](Model,U_1, \dots U_\delta)$
as follows. Let $M,U'_1, \dots, U'_\delta$ be a respective instantiation to  
$Model,U_1, \dots U_\delta$.
Let ${\bf M}=Tr[{\bf q}_1, \dots, {\bf q}_{m-1}](M)$.
Then the output of 
$Tr[{\bf q}_1, \dots, {\bf q}_m](M,U'_1, \dots U'_\delta)$
is $\{M'| M' \in {\bf M}, M' \models IsWidth_k(U'_1, \dots, U'_\delta)\}$. 

\begin{claim} \label{clm:transdmso4}
There is an algorithm that transforms 
$Tr[{\bf q}_1, \dots, {\bf q}_m](Model,U_1, \dots U_\delta)$ into a
MSO formula $\varphi^*(U_1, \dots, U_\delta)$ over the signature as that of $Model$
so that the following holds.
Let $M,U'_1, \dots, U'_\delta$ be a respective instantiation to  
$Model,U_1, \dots U_\delta$.
Then $M \models \varphi^*(U'_1, \dots, U'_\delta)$\\ if and only
if $Tr[{\bf q}_1, \dots, {\bf q}_m](M,U'_1, \dots U'_\delta) \neq \emptyset$. 
\end{claim}

\begin{claimproof}
For the purpose of being used as data in an algorithm, a transduction
can be represented as a structure of size proportional to  the size of the transduction
(as per Proposition \ref{thm61transopt}). 
An elementary transduction can be represented as $(type,sequence)$
where $type$ is the transduction type and $sequence$ is a possibly empty
sequence of accompanying MSO formulas. 
The colouring transduction is presented as $(colour,())$,
The filtering transduction is represented as $(filter,(\varphi'))$
where $\varphi'$ is the MSO formula accompanying the filtering transduction.
Finally, the interpretation is represented as $(interpret, {\bf Rel})$ 
where ${\bf Rel}$ is the sequence of $\varphi_R$ for each relation in the
input signature of the transduction. 
The transduction $Tr[{\bf q}_1, \dots {\bf q}_a]$ is represented
as the \emph{transduction sequence} 
$(rep({\bf q}_1), \dots, rep({\bf q}_a))$ where $rep({\bf q}_i)$
is the representation of ${\bf q}_i$. 
We demonstrate turning a transduction into an MSO formula by induction on the length of the transduction sequence. 

If the length is $1$ then the only elementary transduction of the 
sequence is the filtering transduction accompanied with the 
formula $IsWidth_k(U_1, \dots U_\delta)$. Then this formula is the output
of the transformation. 

Assume now that the transduction sequence is 
$(rep({\bf q}_1), \dots, rep({\bf q}_a))$ with $a \geq 2$. 
By the induction assumption, 
$(rep({\bf q}_2), \dots, rep({\bf q}_a))$ can be transformed
into an MSO formula $\varphi^*(U_1, \dots, U_\delta)$
for the same signature as that of $rep({\bf q}_2)$ with the 
property as specified in the statement of the claim. 

Assume first that ${\bf q}_1$ is a colouring transduction. 
This means that the signature of ${\bf q}_2$ is obtained from the signature
of ${\bf q}_1$ by adding one unary relation $U$. 
We note that $\varphi^*(U_1, \dots, U_\delta)$ can be thought as the formula
$\varphi^*(U,U_1, \dots, U_\delta)$ for the signature of ${\bf q}_1$
where $U$ is considered a free variable. 
We define the MSO representation for $(rep({\bf q}_1), \dots, rep({\bf q}_a))$ as $\exists U \varphi^*(U, U_1, \dots, U_\delta)$. 

Assume next that ${\bf q}_1$ is a filtering transduction accompanied 
with an MSO formula $\varphi'$. 
We define the MSO representation for $(rep({\bf q}_1), \dots, rep({\bf q}_a))$ as $\varphi' \wedge \varphi^*(U_1, \dots, U_\delta)$.

Finally, we assume that ${\bf q}_1$ is the interpretation transduction. 
This means that each relation $R$ of the signature of ${\bf q}_2$
corresponds to the formula $\varphi_R$ of the transduction having 
the same arity as $R$. 
We define the MSO representation for $(rep({\bf q}_1), \dots, rep({\bf q}_a))$ by replacing each occurrence of $R(x_1,\dots, x_b)$ 
in $\varphi^*(U_1, \dots, U_\delta)$ by $\varphi_R(x_1, \dots, x_b)$. 

The correctness of the transformation can be verified by a straightforward
but tedious verification by induction on the length of the transduction sequence.
We omit the details. 
\end{claimproof}

\begin{claim} \label{clm:transdmso5}
There is a monotone function $\alpha^*=\alpha^*(tw({\bf t}),k,rank(H),\Delta(H),\delta)$
such that if $f_1$-width of  $H$ is  at most $k$
then there is an elimination forest of $f_1$-width of $H$
such that the chromatic number of $\mathcal{CG}({\bf t},F)$ is at most 
$\alpha^*$.
\end{claim}

\begin{claimproof}
We note that the bag size  for a tree decomposition
of $H$ of $f_1$-width at most $k$ is at  most $k\delta$.
Also there are at most $k\delta+1$ distinct values of $f_1$-width
that are at most $k$. 
By Lemma \ref{lem:dealt} this means that, under the premises
of the claim, there is an elimination forest $F$ of $H$ such  that
both $split$ and $mir$ measures of $({\bf t},F)$ are upper bounded
by a monotone function depending on the parameters as those of
$\alpha^*$. The claim is now immediate from Theorem \ref{th:cgchrom}.   
\end{claimproof}

We now have all the pieces to design the algorithm as
in the statement of Theorem \ref{th:adaptaux}. 
The algorithm consists of the following steps.

\begin{enumerate}
\item Compute an upper bound $\alpha^*$ as in Claim \ref{clm:transdmso5}
\item Compute the transduction ${\bf Tr}$ of length $\alpha^*$ as in 
Theorem \ref{lem:transtwforest}.
\item Append $IsWidth_k(U_1, \dots, U_\delta)$ to the end of ${\bf Tr}$
and let ${\bf Tr^*}$ be the resulting transduction. 
\item Turn ${\bf Tr^*}$  into an MSO formula 
$\varphi_2(U_1, \dots, U_\delta)$ as per Claim \ref{clm:transdmso4}
\end{enumerate}

Let us prove correctness of the algorithm. 
Assume first that $f_1$-width of $H$ is at most $k$.
By combination of Claim \ref{clm:transdmso5} and Theorem \ref{lem:transtwforest}
an elimination forest $F$ of $f_1$-width at most $k$ belongs to
the output of ${\bf Tr}$. 
It follows from Claim \ref{clm:transdmso3} that the output of
${\bf Tr^*}$ is non-empty and it follows from Claim \ref{clm:transdmso4}
that $M(H,{\bf t}) \models \varphi_2(U_1, \dots, U_\delta)$.

Assume now that the $f_1$-width of $H$ is greater than $k$.
As for each output $M'$ of ${\bf Tr}(M(H,{\bf t}))$
$F(M')$ is an elimination forest of $H$, the $f_1$-width of
each $F(M')$ is greater than $k$.
It follows from  Claim \ref{clm:transdmso3} that the  output of
${\bf Tr^*}(M(H,{\bf t}))$ is empty. 
It then follows from Claim \ref{clm:transdmso4}
that $M(H,{\bf t}) \not\models \varphi_2(U_1, \dots, U_\delta)$.
\end{proof}

\end{document}